\definecolor{string}{rgb}{0.7,0.0,0.0}
\definecolor{comment}{rgb}{0.13,0.54,0.13}
\definecolor{keyword}{rgb}{0.0,0.0,1.0}
\tikzstyle{vtx}=[circle, inner sep= 0pt, minimum size= 1.2mm, fill]
\newtheorem{te}{Theorem}[section]
\newtheorem{pro}[te]{Proposition}
\newtheorem{de}{Definition}[section]
\newtheorem{co}[te]{Corollary}
\newtheorem{lemma}[te]{Lemma}
\newcommand{\beq}{\begin{eqnarray}}
\newcommand{\eeq}{\end{eqnarray}}
\newcommand{\beqs}{\begin{eqnarray*}}
\newcommand{\eeqs}{\end{eqnarray*}}
\newcommand{\ABC}{{\rm ABC}}
\newcommand{\ds}{\displaystyle}
\begin{document}
\date{}
\title{ {On structural properties of trees \\with minimal atom-bond connectivity index} }
\maketitle
%\vspace{10mm}
%\vspace{1mm}
\begin{center}
{\large \bf  Darko Dimitrov}
\end{center}
\baselineskip=0.20in
\begin{center}
{\it Institut f\"ur Informatik, Freie Universit\"{a}t Berlin,
\\ Takustra{\ss}e 9, D--14195 Berlin, Germany} \\E-mail: {\tt darko@mi.fu-berlin.de}
\end{center}
\baselineskip=0.20in
\vspace{6mm}
\begin{abstract}
%The {\em atom-bond connectivity (ABC) index} 
%is one of the recently most investigated degree-based molecular structure descriptors, 
%that have applications in chemistry.
%The {\em atom-bond connectivity (ABC) index} 
The {\em atom-bond connectivity  (ABC) index}  is a degree-based molecular descriptor,
that found chemical applications.
%has been applied up until now to study the stability of alkanes and the strain energy of
%cycloalkanes.
%of a graph $G$  is defined as 
%$\sum_{uv\in E(G)}\sqrt{\frac{(d(u) +d(v)-2)}{d(u)d(v)}},$ where $d(u)$
%is the degree of vertex $u$ in $G$ and $E(G)$ is the set of edges of $G$.
It is well known that among all connected graphs, 
the graphs with minimal ABC index are trees. 
%Despite many attempts in the last few years,  
A complete characterization of trees with minimal $ABC$ index is still an open problem.
%Despite many attempts in the last few years,  it is still an open problem completely 
%to characterize trees with minimal $ABC$ index.
In this paper, we present new structural properties of  trees with minimal
ABC index. Our main results reveal that trees with minimal ABC index do not
contain so-called {\em $B_k$-branches}, with $k \geq 5$,
%Also, we show that trees with minimal ABC index
%do not have more than four $B_4$-branches.
and that they do not have more than four $B_4$-branches.
\end{abstract}
%
%{\small \hspace{0.25cm} \textbf{Keywords:} First, Secomd, Third Zagreb indices, Uperbound
%
% ------------------------------------------------------------------------------------
%-------------------    Section one:  INTRODUCTION and related results      ----------
% ------------------------------------------------------------------------------------
%
\medskip
%
%
% -------------------------------------------------------------------------
% ----------------------           bibliography       ---------------------
% -------------------------------------------------------------------------
%
%------------------------------------------------------------------------------
%
\section[Introduction]{Introduction}
% ----------------------------------------------------------------------------

%Molecular descriptors play a significant role in mathematical chemistry
%and have found  applications in QSPR/QSAR investigations.
Description of the structure or shape of molecules is very helpful
in predicting activity and properties of molecules in complex experiments.
For that purpose, the molecular descriptors~\cite{tc-ndc-09} as
mathematical quantities are particularly useful.
Among the molecular descriptors, so-called topological indices \cite{db-tird-99} play a significant role.
%
%Molecular descriptors~\cite{tc-ndc-09} are mathematical quantities that describe the structure or shape of molecules, 
%helping to predict the activity and properties of molecules in complex experiments.
%Among them, so-called topological indices \cite{db-tird-99} play a significant role.
%Nowadays, there exists a legion of topological indices that
%found some applications in chemistry [4]. 
The topological indices can be classified by the structural properties of graphs used for their calculation.
For example, the Wiener index~\cite{w-rppiams-1948} and the Balaban $J$ index~\cite{b-hddbti-82} are based on the distance of vertices in the respective
graph, the Estrada index \cite{e-c3dms-00} and the energy of a graph~\cite{g-eg-78} 
are based on the spectrum of the graph, the Zagreb group indices~\cite{GT}  and the Randi{\' c} connectivity index~\cite{r-cmb-1975}
depend on the degrees of vertices,  while the Hosoya index \cite{h-ti-1971} is calculated by the counting of non-incident
edges in a graph. On the other hand, there is a group of so-called information indices that are based on information functionals~\cite{b-ithiccs-83}. 
More about the information indices and the discriminative power of some established indices,
one can find in~\cite{dgv-iihdpg-12, dk-epge-12, da-hgem-11, fgd-ssdbti-13} and in the works cited therein.

%In view the large number of measures that have been contributed, it is important  tackling this problem is important as it leads to a deeper understand ing of the measures
 
Here, we consider a relatively new  topological index  which attracted a lot of attention in the last few years.
Namely, in 1998, Estrada et al. \cite{etrg-abc-98} proposed a new vertex-degree-based graph  topological index,
the {\em atom-bond connectivity (ABC) index},
and showed that it can be a valuable predictive tool in the study of the heat of formation in alikeness.
Ten years later Estrada~\cite{e-abceba-08} elaborated a novel quantum-theory-like justification for this topological index.
After that revelation, the interest of ABC-index has grown rapidly.
Additionaly, the physico-chemical applicability of the ABC index was confirmed and extended in several studies
\cite{as-abciic-10,cll-abcbsp-13, dt-cbfgaiabci-10, gg-nwabci-10, gtrm-abcica-12, k-abcibsfc-12, yxc-abcbsp-11}.

Let $G=(V, E)$ be a simple undirected graph of order $n=|V|$ and size $m=|E|$.
For $v \in V(G)$, the degree of $v$, denoted by $d(v)$, is the number of edges incident
to $v$.
For an edge $uv$ in $G$, let
\beq \label{eqn:000}
f(d(u), d(v))=\sqrt{\frac{d(u) +d(v)-2}{d(u)d(v)}}.
\eeq
Then the atom-bond connectivity index of $G$ is defined as
%\beq \label{eqn:001}
%\ABC(G)=\sum_{uv\in E(G)}\sqrt{\frac{(d(u) +d(v)-2)}{d(u)d(v)}}, {\text{or}}
%\eeq
\beq \label{eqn:001}
\ABC(G)=\sum_{uv\in E(G)}f(d(u), d(v)). \nonumber
\eeq
As a new and well motivated graph invariant, the ABC index has attracted a lot of interest in the last several years both in 
mathematical and chemical research communities and numerous results and structural properties of ABC index  
were established~\cite{cg-eabcig-11, cg-abccbg-12, clg-subabcig-12, d-abcig-10, dgf-abci-11, dgf-abci-12, ftvzag-siabcigo-2011, fgv-abcit-09, gf-tsabci-12, gfi-ntmabci-12, llgw-pcgctmabci-13, vh-mabcict-2012, xz-etfdsabci-2012, xzd-abcicg-2011, xzd-frabcit-2010}.

The fact that adding an edge in a graph strictly increases its ABC index~\cite{dgf-abci-11} 
(or equivalently that deleting an edge in a graph strictly decreases its ABC index~\cite{cg-eabcig-11})  
has  the following two immediate consequences.

\begin{co}
Among all connected  graphs with $n$ vertices, the complete graph $K_n$ has maximal value of ABC index.
\end{co}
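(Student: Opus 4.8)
The plan is to deduce this directly from the monotonicity fact cited immediately above, namely that adding an edge to a graph strictly increases its ABC index. First I would observe that any connected graph $G$ on $n$ vertices is a spanning subgraph of $K_n$: it shares the same vertex set and uses a subset of the $\binom{n}{2}$ possible edges. Hence $K_n$ can be reached from $G$ by inserting the missing edges one at a time, and the whole argument reduces to chaining the per-edge inequality.

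Concretely, suppose $G \neq K_n$ and let $e_1, e_2, \dots, e_t$ enumerate the non-edges of $G$ (pairs of distinct, non-adjacent vertices), where $t = \binom{n}{2} - |E(G)| \geq 1$. Setting $G_0 = G$ and $G_i = G_{i-1} + e_i$ for $1 \le i \le t$, we obtain a chain $G = G_0 \subset G_1 \subset \cdots \subset G_t = K_n$ in which each graph differs from its predecessor by a single added edge. Applying the quoted fact at every step gives the strict inequalities $\ABC(G_{i-1}) < \ABC(G_i)$, which telescope into $\ABC(G) < \ABC(K_n)$. Since $G$ was an arbitrary connected graph distinct from $K_n$, the complete graph attains the strict maximum of the ABC index over all connected $n$-vertex graphs.

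I do not expect any genuine obstacle here, since all of the analytic work — controlling how each summand $f(d(u),d(v))$ over the edges incident to the two affected vertices responds when their degrees increase — is already packaged inside the quoted monotonicity result. The only points requiring care are bookkeeping ones: the intermediate graphs $G_i$ are ordinary simple graphs, so the fact applies to each step verbatim, and connectivity of $G$ is used only to place it among the graphs being compared. In fact the same reasoning shows that $K_n$ maximizes the ABC index among \emph{all} simple $n$-vertex graphs; restricted to connected graphs it is precisely the claimed corollary.
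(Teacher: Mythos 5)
Your proof is correct and follows exactly the route the paper intends: the corollary is stated there as an ``immediate consequence'' of the fact that adding an edge strictly increases the ABC index, and your chain of single-edge insertions from $G$ up to $K_n$, with the telescoping strict inequalities, is precisely how that immediacy is made explicit.
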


\begin{co}
Among all connected  graphs with $n$ vertices, the graph with minimal ABC index is a tree.
\end{co}

Although it is fairly easy to show that the star graph $S_n$
is a tree with maximal ABC index~\cite{fgv-abcit-09}, despite many attempts in the last years, it is still an open problem
the characterization of trees with minimal ABC index (also refereed as  minimal-ABC trees). 
The aim of this research is to make a step forward towards the full characterizations of minimal-ABC trees.

In Section~\ref{sec:known} we give an overview of already known structural properties of the minimal-ABC trees, while in 
Section~\ref{sec:new} we present a few new properties. In the appendix we present some simpler results that are used in
the proofs in Section~\ref{sec:new}.

%Before we continue with the results, we present some additional notation that will be used in the rest of the paper.
In the sequel, we present an additional notation that will be used in the rest of the paper.
A tree is called a {\em rooted tree} if one vertex has been designated the {\em root}.
In a rooted tree, the  {\em parent} of a vertex is the vertex connected to it on the path to the root; every vertex except the root has a unique parent. 
A  {\em child} of a vertex $v$ is a vertex of which $v$ is the parent.
A vertex of degree one is a {\it pendant vertex}.
The {\it breadth-first search} is a graph search algorithm that begins at the root vertex and explores all its children vertices, beginning with the most right child and ending with the most left child. Then for each of those children, it explores their unexplored children vertices, and so on, until it finds the goal, or until all vertices are explored.

For the next two definitions, we adopt the notation from \cite{gfahsz-abcic-2013}.
Let $S_k=v_0 \, v_1 \dots v_k, v_{k+1}$,  $k \leq n-3$, be a sequence of vertices of a graph $G$
with  $d(v_0)>2$ and $d(v_i)=2$,  $i=1,\dots k-1$.
If $d(v_k)=1$, then $S_k$ is a {\it pendant path} of length $k+1$.
If $d(v_k) > 2$, then $S_k$ is an {\it internal path} of length $k$.

%As in \cite{gfahsz-abcic-2013}, a sequence of vertices of a graph $G$, $S_k=v_0 \, v_1 \dots v_k$,  will be  called a {\it pendant path} if 
%each two consecutive vertices in $S_k$ are adjacent in $G$, $d(v_0)>2$, $d(v_i)=2$, for $i=1,\dots k-1$, and $d(v_k)=1$.
%The length of the pendant path $S_k$ is $k$.

%Definition of the "breadth first search tree BFS-tree".

%A sequence $D=(d_1, d_2, \dots, d_n)$ is {\it graphical} if there is a graph
 %whose vertex degrees are $d_i$, $i=1,\dots,n$. If in addition
 %$d_1 \geq d_2\geq \dots \geq d_n$, then  $D$ is a
 %{\it degree sequence}.
%Let ${\bf D_n}$ be the set of all degree sequences of trees of length $n$.

%------------------------------------------------------------------------------
%
\section[Known structural properties of the minimal-ABC trees]
{Known structural properties of the minimal-ABC trees and some related results}\label{sec:known}
% ----------------------------------------------------------------------------

%\subsection[Exactly proven properties]{Exactly proven properties}\label{sec:exactly}

A thorough overview of the known structural properties of the minimal ABC-trees was given in~\cite{gfahsz-abcic-2013}.
In addition to the results mentioned  there,
%, in the second part of this section, 
we also present here the recently obtained related results that we are aware of.

To determine the minimal-ABC tress of order less than $10$ is a trivial task, and those
trees are depicted in Figure~\ref{fig-all-till-9}. To simplify the exposition in the rest of the paper, we assume
that the trees of interest are of order at least $10$.

\begin{figure}[h]
\begin{center}
%\vspace{-0.3cm}
\includegraphics[scale=0.75]{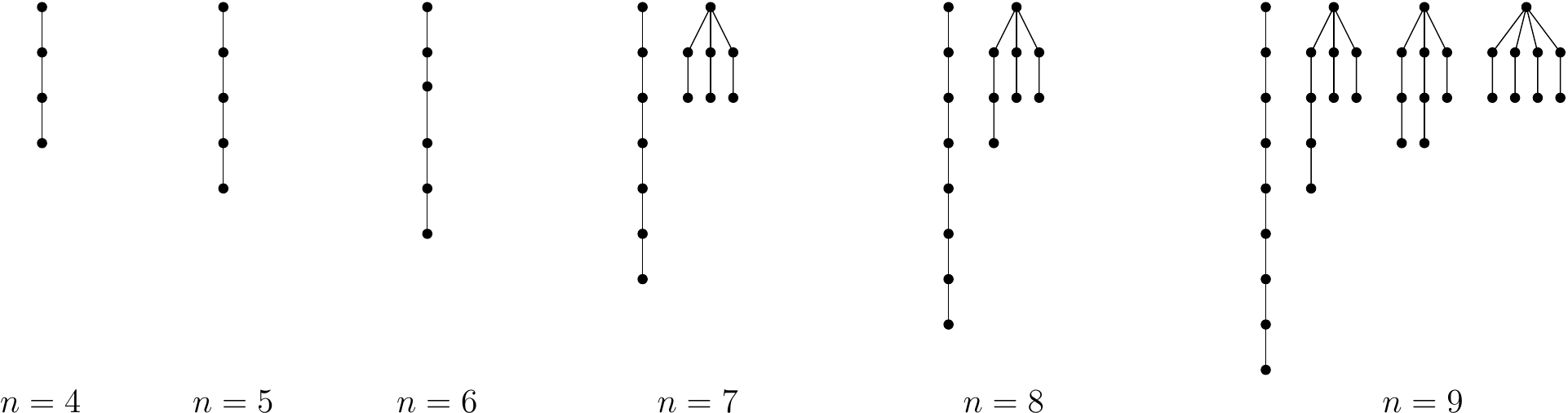}% [height=3.5cm,width=15cm]
\caption{Minimal-ABC trees of order $n$, $4 \leq n  \leq 9$.}
%\label{Unicyclic-Max-M3}
\label{fig-all-till-9}
%\vspace{-0.3cm}
\end{center}
\end{figure}

\smallskip
\noindent
In \cite{gfi-ntmabci-12}, Gutman, Furtula and Ivanovi{\' c}  obtained the following
results.

\begin{te}\label{thm-GFI-10}
An $n$-vertex tree with minimal ABC-index does not
contain internal paths of any length $k \geq 1$.
\end{te}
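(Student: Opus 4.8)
The plan is to argue by contradiction, using a single local replacement that strictly decreases the index. Suppose $T$ realizes the minimum and contains an internal path $P=v_0v_1\cdots v_k$ with $a:=d(v_0)\ge 3$, $b:=d(v_k)\ge 3$ and $d(v_i)=2$ for $1\le i\le k-1$. The whole argument rests on one elementary identity: $f(x,2)=\sqrt{x/(2x)}=1/\sqrt2$ for every $x$, together with $f(2,1)=f(2,2)=1/\sqrt2$. Thus every edge incident with a degree-$2$ vertex contributes exactly $1/\sqrt2$, so each of the $k$ edges of $P$ contributes $1/\sqrt2$, irrespective of $a$, of $b$, and of how long $P$ is. The only edge-value that \emph{could} differ from $1/\sqrt2$ is the one created by bringing $v_0$ and $v_k$ together, and for this I would record the second elementary fact $f(a,b)<1/\sqrt2\Longleftrightarrow (a-2)(b-2)>0$, so that $f(a,b)<1/\sqrt2$ whenever $a,b\ge 3$.

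Guided by this, I would build a competitor $T'$ of the same order as follows: delete the interior vertices $v_1,\dots,v_{k-1}$, add the edge $v_0v_k$, and reinsert the $k-1$ deleted (degree-$2$) vertices one after another at the tip of some pendant path of $T$ (one exists, since the branch vertex $v_0$ has degree $a\ge 3$ and hence carries subtrees off $P$ containing leaves). The point of reinserting them on a pendant path, rather than at $v_0$ or $v_k$, is that the degrees $d(v_0)=a$ and $d(v_k)=b$ are preserved; hence every edge of $T$ meeting $v_0$ or $v_k$ off $P$ reappears in $T'$ with an unchanged contribution, and nothing outside $P$ and the chosen pendant path is disturbed.

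The accounting then collapses to a single term. On $P$ we remove $k$ edges worth $k/\sqrt2$ in total, and in their place we create the edge $v_0v_k$ worth $f(a,b)$ together with $k-1$ new pendant-path edges, each worth $1/\sqrt2$ (the appended chain consists of degree-$2$ vertices ending in a leaf). The one edge whose value can change at the junction is the former leaf edge $tp$, which passes from $f(d(p),1)$ to $f(d(p),2)=1/\sqrt2$; this change is $\delta\le 0$, equal to $0$ when $p$ has degree $2$ and strictly negative when the pendant path was a single leaf on a branch vertex. Since all remaining edges are untouched, $\ABC(T')-\ABC(T)=f(a,b)-1/\sqrt2+\delta\le f(a,b)-1/\sqrt2<0$, contradicting the minimality of $T$. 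The conceptual heart of the proof is this $1/\sqrt2$-invariance of degree-$2$ edges, which renders the replacement essentially free except for the one strictly cheaper edge $f(a,b)$; I expect the only delicate point to be the bookkeeping of the relocation step — checking that it preserves the degrees of $v_0$ and $v_k$, never increases any edge value, and always has a pendant path available to receive the $k-1$ vertices.
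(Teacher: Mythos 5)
This theorem is one of the results the paper imports from Gutman, Furtula and Ivanovi\'c \cite{gfi-ntmabci-12} and states without proof, so there is no in-paper argument to compare yours against; judged on its own, your proof is correct and is essentially the argument known from the literature. The two computations it rests on are right: $f(x,2)=1/\sqrt{2}$ for all $x$, and $f(a,b)<1/\sqrt{2}$ exactly when $(a-2)(b-2)>0$; hence after contracting the interior of the internal path and re-growing the $k-1$ deleted degree-$2$ vertices at the tip of a pendant path, the only edges whose contribution is not frozen at $1/\sqrt{2}$ are the new edge $v_0v_k$, which is strictly cheaper, and the former leaf edge at the junction, which changes from $f(d(t),1)$ to $f(d(t),2)=1/\sqrt{2}$ and so cannot increase (you wrote $f(d(p),1)$, but the relevant degree is that of the leaf's neighbour $t$, not of the leaf $p$ itself). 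Your existence argument for the receiving leaf is also sound: $v_0$ has at least two neighbours off $P$, so a leaf disjoint from $P$ exists, and even if its neighbour happens to be $v_0$ or $v_k$, only the leaf's own degree changes, so no unaccounted edge is perturbed. One caveat is worth recording: your transformation does nothing when the internal path has no interior vertex of degree $2$ (the case $k=1$ under the paper's literal definition, where the ``path'' is just an edge joining two vertices of degree $\geq 3$). That case genuinely cannot be excluded --- minimal-ABC trees of large order do contain adjacent vertices of degree at least $3$ --- so the theorem must be read, as in \cite{gfi-ntmabci-12}, as forbidding internal paths that contain at least one degree-$2$ vertex, which is precisely the statement your argument establishes and the one that Corollary~\ref{co-GFI-10} actually uses.
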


\begin{te}\label{thm-GFI-20}
An $n$-vertex tree with minimal ABC-index does not
contain pendant paths of length $k \geq 4$.
\end{te}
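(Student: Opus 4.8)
The plan is to argue by contradiction against minimality, using a single length-preserving local surgery. Suppose a minimal-$\ABC$ tree $T$ of order $n\ge 10$ contained a pendant path $P=v_0v_1\cdots v_k$ with $k\ge 4$, where $d(v_0)\ge 3$, $d(v_1)=\cdots=d(v_{k-1})=2$ and $d(v_k)=1$. I would detach the terminal segment $v_{k-1}v_k$ from $P$ and reattach these two vertices as a fresh pendant path of length $2$ rooted at a carefully chosen vertex $u$, obtaining a tree $T'$ on the same $n$ vertices in which $P$ has been shortened to the pendant path $v_0\cdots v_{k-2}$ of length $k-2\ge 2$. The hypothesis $k\ge 4$ is exactly what guarantees that the shortened path still has length at least $2$, so that no new pendant vertex is created next to $v_0$ and $T'$ remains a legitimate tree of the same order.

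The decisive simplification is the degeneracy $f(2,x)=\sqrt{(2+x-2)/(2x)}=1/\sqrt{2}$, valid for \emph{every} $x$. Consequently every edge lying on a pendant path contributes exactly $1/\sqrt{2}$, independently of the degree of the vertex at which the path is rooted. All edges destroyed or created by the surgery lie on pendant paths and hence cancel in pairs, so the entire difference collapses to the change in the edges incident to $u$ caused by raising its degree from $d(u)$ to $d(u)+1$, namely
\[
\ABC(T')-\ABC(T)=\sum_{z\sim u}\bigl(f(d(u)+1,d(z))-f(d(u),d(z))\bigr),
\]
the sum running over the neighbours $z$ of $u$ in $T$.

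To force this quantity to be negative I would exploit the monotonicity of $t\mapsto f(t,s)$ in its first argument: it is strictly decreasing when $s\ge 3$, constant when $s=2$, and increasing when $s=1$. Thus it suffices to choose $u$ so that none of its neighbours is a leaf and at least one has degree $\ge 3$; for such a $u$ every summand is $\le 0$ and at least one is strictly negative, yielding $\ABC(T')<\ABC(T)$ and contradicting minimality. A natural candidate is a vertex of maximum degree in $T$: by Theorem~\ref{thm-GFI-10} no two branching vertices are joined by a path of degree-$2$ vertices, which tightly constrains such a vertex's neighbourhood and lets me verify the required degree pattern.

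I expect the main obstacle to be precisely this degeneracy $f(2,\cdot)\equiv 1/\sqrt{2}$: it renders every purely within-path rearrangement $\ABC$-neutral, so the strict decrease cannot be harvested from shortening $P$ itself but only from a favourable degree increment at some branching vertex, and one must prove that a suitable reattachment vertex $u$ always exists. The delicate situations are when every branching vertex has only neighbours of degree $\le 2$, so that the naive increment is merely neutral, and when reattaching threatens to create an edge of the form $f(1,d)$ with $f(1,d)=\sqrt{(d-1)/d}>1/\sqrt{2}$, which would \emph{increase} the index. Handling these demands a short case analysis on the degrees of the neighbours of $v_0$ and of the maximum-degree vertex, supported by the structural restrictions from Theorem~\ref{thm-GFI-10}; the all-neighbours-degree-$2$ case — most plausibly resolved by instead merging two offending branches so as to \emph{lower} a branching degree, which is advantageous exactly when a pendant edge is present — is the step I would expect to require the most care.
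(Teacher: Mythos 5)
First, a caveat on the comparison: the paper does not actually prove Theorem~\ref{thm-GFI-20}; it is imported verbatim from \cite{gfi-ntmabci-12}. So your proposal can only be judged on its own merits and against that source, not against a proof in this text. Your bookkeeping for the surgery is correct: since $f(2,x)=1/\sqrt{2}$ for every $x$, relocating the terminal segment $v_{k-1}v_k$ leaves every path edge's contribution unchanged, and the whole difference reduces to $\sum_{z\sim u}\bigl(f(d(u)+1,d(z))-f(d(u),d(z))\bigr)$, which by Proposition~\ref{appendix-pro-020} is strictly negative as soon as $u$ has a neighbour of degree $\geq 3$ and no leaf neighbour. That part is fine and is in the spirit of all the local transformations used in this paper.

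The genuine gap is the degenerate case you name at the end but do not resolve, and it cannot be resolved by the fix you sketch. If $T$ is starlike --- a single vertex of degree $\geq 3$ carrying pendant paths of length $\geq 2$, one of length $\geq 4$ --- then \emph{every} edge of $T$ has an endpoint of degree exactly $2$, so $\ABC(T)=(n-1)/\sqrt{2}$; and every tree reachable by your reattachment, or by ``merging two offending branches,'' or by any rearrangement that keeps a single branching vertex and no pendant edge, has exactly the same value. There is no strict decrease to harvest locally: Theorem~\ref{thm-GFI-10} does not exclude starlike trees, so this case is live. To kill it one must produce a tree with $\ABC<(n-1)/\sqrt{2}$, which forces creating an edge joining two vertices of degree $\geq 3$ (worth at most $f(3,3)-1/\sqrt{2}\approx-0.0404$) while avoiding new pendant edges at branching vertices (each worth $f(3,1)-f(2,1)\approx+0.1094$); a short count shows this is only possible for $n\geq 10$, and indeed for $n=9$ the spider with legs $4,2,2$ ties for the minimum, so the statement genuinely needs the standing hypothesis $n\geq 10$, which your argument never invokes. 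A workable completion is to rewire the long path itself into a new branching vertex: delete $v_{k-2}v_{k-1}$ and add $v_1v_{k-1}$, so that $v_1$ acquires degree $3$ with two pendant legs; for $k\geq 5$ both legs have length $\geq 2$ and the net change is $f(d(v_0),3)-f(d(v_0),2)<0$, while the case $k=4$ in a starlike tree still requires a separate explicit comparison using $n\geq 10$. As written, your proposal establishes the theorem only for trees that already possess a branching vertex with a neighbour of degree $\geq 3$ and no pendant edge.
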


\noindent
An immediate, but important, consequence of Theorem~\ref{thm-GFI-10} is the next corollary.

\begin{co}\label{co-GFI-10}
 Let $T$ be a tree with minimal ABC index. Then the subgraph induced by the vertices of $T$ whose
degrees are greater than two is also a tree.
\end{co}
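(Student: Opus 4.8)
The plan is to verify separately the two properties that characterize a tree for the induced subgraph $H := T[\,\{v \in V(T) : d(v) > 2\}\,]$. Acyclicity is free: $H$ is a subgraph of the tree $T$, and any subgraph of an acyclic graph is acyclic, so $H$ is automatically a forest. Moreover, since $T$ is a minimal-ABC tree of order at least $10$ it is not a path and therefore has at least one vertex of degree greater than two, so $H$ is nonempty. The whole statement thus reduces to proving that $H$ is connected.

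To prove connectivity I would fix two arbitrary vertices $u$ and $w$ with $d(u), d(w) > 2$ and examine the unique path $P$ joining them in $T$; the aim is to show that $P$ lies entirely inside $H$. Suppose to the contrary that $P$ has an internal vertex of degree two. Every internal vertex of a path in a tree has degree at least two, so I would pass to a maximal consecutive block of degree-two vertices along $P$. By maximality the two vertices of $P$ flanking this block are not of degree two, and being internal to $P$ they are not leaves either; hence both have degree greater than two. This flanked block, together with its two high-degree endpoints, is precisely an internal path in the sense of the definition preceding Section~\ref{sec:known}, which contradicts Theorem~\ref{thm-GFI-10}. Consequently no internal vertex of $P$ has degree two, so every vertex of $P$ has degree greater than two, $P \subseteq H$, and $u$ and $w$ lie in the same component of $H$. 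As $u$ and $w$ were arbitrary, $H$ is connected, and being connected and acyclic it is a tree.

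I do not expect a real obstacle, since the statement is a direct consequence of Theorem~\ref{thm-GFI-10}; the only point that genuinely needs care is the extraction of the internal path. Here I must make sure that the flanked degree-two block is correctly matched to the definition, namely that its endpoints truly have degree greater than two and that the configuration is a bona fide internal path, so that Theorem~\ref{thm-GFI-10} applies. Once this identification is made cleanly, the contradiction and hence the connectivity of $H$ follow immediately.
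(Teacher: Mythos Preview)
Your argument is correct and is exactly the intended derivation: the paper does not spell out a proof but states the corollary as an immediate consequence of Theorem~\ref{thm-GFI-10}, and the standard unpacking of that implication is precisely what you wrote---acyclicity is inherited from $T$, and connectivity follows because any degree-two vertex on the $T$-path between two high-degree vertices would yield a forbidden internal path.
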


\noindent
An improvement of Theorem~\ref{thm-GFI-20} is the following result by Lin and Gao~\cite{llgw-pcgctmabci-13}.

\begin{te}\label{thm-LG-10}
Each pendant vertex of an $n$-vertex tree with minimal
ABC index belongs to a pendant path of length $k $, $2 \leq k \leq 3$.
\end{te}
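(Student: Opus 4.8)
The plan is to deduce Theorem~\ref{thm-LG-10} from Theorem~\ref{thm-GFI-20}. Since Theorem~\ref{thm-GFI-20} already forbids pendant paths of length at least $4$, it suffices to show that a minimal-$\ABC$ tree $T$ of order at least $10$ has no pendant path consisting of a single edge, i.e.\ no pendant vertex adjacent to a vertex of degree at least three. First I would record the arithmetic that drives everything. From~\eqref{eqn:000} one gets $f(x,2)=1/\sqrt2$ for every $x$, as well as $f(2,2)=f(2,1)=1/\sqrt2$, and $f(x,y)$ is non-increasing in $x$ whenever $y\ge 2$ (strictly, when $y\ge 3$). Consequently a pendant path of length $\ell\ge 2$ attached to \emph{any} branching vertex contributes exactly $\ell/\sqrt2$ to $\ABC(T)$, independently of that vertex's degree, whereas a single pendant edge at a vertex of degree $d$ contributes $f(d,1)=\sqrt{(d-1)/d}$, which strictly exceeds $1/\sqrt2$ for every $d\ge 3$. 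Thus a single pendant edge is strictly more expensive than the first edge of a longer pendant path, and the gap $\sqrt{(d-1)/d}-1/\sqrt2$ is the quantity I will try to cash in. The whole argument is by contradiction through local edge switchings that produce a tree $T'$ on the same vertex set with $\ABC(T')<\ABC(T)$.

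Let $u$ be a pendant vertex adjacent to a branching vertex $v$ with $d(v)=d\ge 3$, and dispose of the easy configurations first. \textbf{(a)} If $T$ contains a pendant path of length $3$ anywhere, detach its terminal vertex and reattach it to $u$; this turns the single edge into a length-$2$ path and the length-$3$ path into a length-$2$ path, leaves all degrees unchanged, and changes the index by exactly $1/\sqrt2-\sqrt{(d-1)/d}<0$. \textbf{(b)} If some branching vertex carries two single pendant edges, merge them into one length-$2$ path; this lowers that vertex's degree by one, and the saving of two copies of the gap dominates the increase produced on its remaining edges (the estimate is uniform in $d\ge 3$). In either case $\ABC$ strictly decreases, contradicting minimality. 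Hence I may assume $T$ has no pendant path of length $3$ and that every branching vertex carries at most one single pendant edge, while at least one such edge $vu$ exists and all other pendant paths have length $2$.

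The remaining case is the heart of the proof. Here I would attempt the degree-reducing switch at $v$: append $u$ to the end of a length-$2$ pendant path (at $v$ itself or elsewhere), creating a length-$3$ path and lowering $d(v)$ by one. The contribution of the pendant paths again drops by the gap, but lowering $d(v)$ raises the values $f(d,\cdot)\to f(d-1,\cdot)$ on the edges that remain at $v$. By Corollary~\ref{co-GFI-10} and Theorem~\ref{thm-GFI-10}, the non-pendant neighbours of $v$ are branching vertices joined to $v$ by a single edge (a degree-$2$ neighbour could only head a length-$2$ pendant path), so these increases are governed entirely by the degrees of those neighbours. A direct computation shows the switch wins for every $d\ge 4$; the sole troublesome regime is $d=3$.

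The main obstacle is thus precisely the degree-$3$ vertex $v$ carrying a single pendant edge: when a neighbour of $v$ has large degree, the increase $f(2,\cdot)-f(3,\cdot)$ can match or exceed the gain $\sqrt{2/3}-1/\sqrt2$, so no single switch at $v$ works uniformly, and if both non-pendant neighbours of $v$ are branching vertices there is no length-$2$ path at $v$ to grow at all (the order bound $n\ge 10$ rules out the only tiny exceptional shape). I expect to close this by a finer case analysis keyed to the degrees of the two non-pendant neighbours of $v$: when a neighbour has small degree the switch at $v$ already wins, whereas when a neighbour $p$ has large degree I would instead relocate one of $p$'s length-$2$ paths onto $u$, growing $v$'s edge into a length-$3$ path while lowering $d(p)$; the net effect here is controlled because the per-edge increase $f(d_p-1,\cdot)-f(d_p,\cdot)$ tends to $0$ as $d_p$ grows. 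Making the ``small $d_p$'' and ``large $d_p$'' regimes overlap requires the sharp elementary estimates on $f(d-1,t)-f(d,t)$ collected in the appendix, and this bookkeeping is the technical crux of the proof.
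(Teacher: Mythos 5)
First, a point of reference: the paper does not prove Theorem~\ref{thm-LG-10}. It is quoted in Section~\ref{sec:known} as a known result of Lin and Gao~\cite{llgw-pcgctmabci-13}, so there is no in-paper proof to compare your argument against, and it has to be judged on its own. Your reduction is the right one (Theorem~\ref{thm-GFI-20} leaves only pendant paths of length $1$ to exclude), and the preliminary arithmetic is correct: $f(x,2)=f(2,2)=f(2,1)=1/\sqrt2$, so a pendant path of length $\ell\ge 2$ costs exactly $\ell/\sqrt2$, while a pendant edge at a degree-$d$ vertex costs $\sqrt{(d-1)/d}>1/\sqrt2$. Steps (a) and (b) are sound; in (b) the loss is at most $(d-2)\bigl(1/\sqrt{d-1}-1/\sqrt d\bigr)$, which stays below the gain $2\sqrt{(d-1)/d}-\sqrt2$ for every $d\ge 3$.

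The gap is in what you call the heart of the proof. The claim that the degree-reducing switch at $v$ ``wins for every $d\ge 4$'' is false: its gain is $\sqrt{(d-1)/d}-1/\sqrt2$, while its loss, $\sum_i\bigl(f(d-1,t_i)-f(d,t_i)\bigr)$ over the $d-1$ remaining neighbours of $v$, can reach $(d-1)\bigl(1/\sqrt{d-1}-1/\sqrt d\bigr)$ when those neighbours all have large degree. For $d=4$ that is up to $3(1/\sqrt3-1/2)\approx 0.232$ against a gain of $\approx 0.159$, and for $d=5$ up to $\approx 0.211$ against $\approx 0.187$; so $d=4,5$ are troublesome too, not only $d=3$. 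Worse, the two regimes you propose for the hard case do not overlap under the estimates you invoke: for $d=3$ with both branching neighbours of $v$ of degree $t$, the direct switch already loses at $t=4$ (loss $2\bigl(1/\sqrt2-f(3,4)\bigr)\approx 0.123$ versus gain $\approx 0.109$), while the relocation switch at a neighbour $p$ is only safe under the crude bound $(d_p-1)\bigl(1/\sqrt{d_p-1}-1/\sqrt{d_p}\bigr)<0.109$ once $d_p\ge 21$, and it presupposes that $p$ carries a length-$2$ pendant path at all. Closing the middle range requires using the actual degrees of $p$'s other neighbours (degree-$2$ children contribute nothing to the loss) and a recursion when $p$ has no pendant path to donate --- exactly the case analysis you defer as ``the technical crux.'' As it stands, the proposal is a reasonable plan containing one quantitatively wrong intermediate claim and an unproven central case, not a proof.
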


%\noindent
%As a corollary of Theorems~\ref{thm-GFI-20} and \ref{thm-LG-10} the next result follows.

\begin{te}[\cite{gfi-ntmabci-12}]\label{thm-GFI-30}
An $n$-vertex tree with minimal ABC-index contains at
most one pendant path of length $3$.
\end{te}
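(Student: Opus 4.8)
The plan is to argue by contradiction through a vertex-count-preserving local transformation that strictly lowers the ABC index. The decisive elementary observation is that $f(d,2)=1/\sqrt2$ for every $d\ge 1$, and likewise $f(2,1)=f(2,2)=1/\sqrt2$; hence \emph{every} edge of $T$ incident to a vertex of degree $2$ contributes exactly $1/\sqrt2$. Since, by Theorems~\ref{thm-GFI-10} and~\ref{thm-LG-10}, a minimal-ABC tree $T$ has no internal paths and no pendant path of length $1$, every edge lying on a pendant path contributes $1/\sqrt2$, and the only edges whose value differs from $1/\sqrt2$ are those joining two vertices of degree at least $3$. Writing $E_{\ge 3}$ for this set of edges, one obtains the identity
\[
\ABC(T)=\frac{n-1}{\sqrt2}+\sum_{uv\in E_{\ge 3}}\Bigl(f(d(u),d(v))-\tfrac{1}{\sqrt2}\Bigr),
\]
whose summands are all negative, since $f(d(u),d(v))<1/\sqrt2$ whenever $d(u),d(v)\ge 3$. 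In particular the \emph{lengths} of the pendant paths never enter the ABC index directly; they influence it only through the degrees of the branching vertices they hang from. By Corollary~\ref{co-GFI-10} the vertices of degree at least $3$ induce a subtree, and by Theorem~\ref{thm-GFI-10} consecutive branching vertices are genuinely adjacent in $T$.

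Next I would set up the transformation. Suppose, for contradiction, that $T$ contains two pendant paths of length $3$, say $u\,p_1p_2p_3$ and $w\,q_1q_2q_3$ with pendant vertices $p_3,q_3$ (the case $u=w$ being allowed and handled uniformly). I form a tree $T'$ on the same vertex set by deleting the edges $p_2p_3$ and $q_2q_3$ — shortening both paths to length $2$ — and reattaching the two freed vertices as one new pendant path of length $2$ rooted at a branching vertex $z$ that has a neighbour of degree at least $3$. All edges removed and all edges created here are incident to a degree-$2$ vertex, hence each is worth $1/\sqrt2$ before and after; moreover the shortening is ABC-neutral because $f(2,2)=f(2,1)=1/\sqrt2$. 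Thus the whole of $\ABC(T')-\ABC(T)$ collapses to the effect of raising $d(z)$ by one.

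Evaluating that effect is then routine. Increasing $d(z)$ from $d$ to $d+1$ alters only the edges at $z$: an edge to a degree-$2$ neighbour keeps the value $1/\sqrt2$, while an edge to a neighbour $x$ with $d(x)\ge 3$ changes from $f(d,d(x))$ to $f(d+1,d(x))$. Since $f^2=\tfrac1c+\tfrac{c-2}{c}\,d^{-1}$ gives $\partial(f^2)/\partial d=-\tfrac{c-2}{c}\,d^{-2}<0$ for $c>2$, the function $f(\cdot,c)$ is strictly decreasing in its first argument, so
\[
\ABC(T')-\ABC(T)=\sum_{\substack{x\sim z\\ d(x)\ge 3}}\bigl(f(d+1,d(x))-f(d,d(x))\bigr)<0,
\]
contradicting the minimality of $T$ and proving that at most one pendant path of length $3$ can occur.

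The delicate point, which I expect to be the main obstacle, is securing the existence of an attachment vertex $z$ having a neighbour of degree at least $3$ — equivalently, ruling out that $T$ is starlike with a single branching vertex, for which $E_{\ge 3}=\varnothing$, the identity gives $\ABC(T)=\frac{n-1}{\sqrt2}$, and the transformation is merely ABC-neutral. This case I would eliminate using the same identity: any tree on $n\ge 10$ vertices carrying two adjacent branching vertices has $E_{\ge 3}\ne\varnothing$ and therefore ABC index strictly below $\frac{n-1}{\sqrt2}$, so a starlike tree cannot be minimal once $n\ge 10$. Hence a minimal-ABC tree has at least two branching vertices; by Corollary~\ref{co-GFI-10} and Theorem~\ref{thm-GFI-10} its core is a subtree realised by actual edges of $T$, so every branching vertex — in particular $u$ itself — has a neighbour of degree at least $3$, and we may simply take $z=u$, which also disposes of the case $u=w$.
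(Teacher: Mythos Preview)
This theorem is quoted in the paper from \cite{gfi-ntmabci-12} without proof (it sits in Section~\ref{sec:known} among the known results), so there is no in-paper argument to compare yours against. Your proof is essentially correct and pleasantly conceptual: the identity $\ABC(T)=(n-1)/\sqrt2+\sum_{uv\in E_{\ge 3}}\bigl(f(d(u),d(v))-1/\sqrt2\bigr)$, valid for any tree with no pendant path of length~$1$, reduces the whole question to the existence of a branching vertex $z$ with a neighbour of degree $\ge 3$; your transformation (shorten two length-$3$ pendant paths, reattach the two freed vertices as a new length-$2$ pendant path at $z$) is then ABC-neutral on every edge except those joining $z$ to its neighbours of degree $\ge 3$, where it strictly decreases the index.

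One sentence needs tightening. You assert that ``any tree on $n\ge 10$ vertices carrying two adjacent branching vertices has $E_{\ge 3}\ne\varnothing$ and therefore ABC index strictly below $(n-1)/\sqrt2$'', but this is false as stated: such a tree may also have pendant vertices adjacent to branching vertices, and since $f(d,1)=\sqrt{(d-1)/d}>1/\sqrt2$ for $d\ge 3$, your identity no longer applies and the ABC index can exceed $(n-1)/\sqrt2$. What you actually need --- and what is trivially available for every $n\ge 10$ --- is \emph{some} tree on $n$ vertices with two adjacent branching vertices and every pendant path of length at least~$2$ (for instance two adjacent degree-$3$ vertices each carrying two pendant paths of length~$2$, with legs lengthened or added as $n$ grows); for that tree the identity does apply and gives $\ABC<(n-1)/\sqrt2$, beating any starlike candidate. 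With this repair the argument is complete.
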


\noindent
Before we state the next important result, we consider the following  definition
of a {\em greedy tree} provided by Wang in~\cite{w-etwgdsri-2008}.
%In~\cite{w-etwgdsri-2008} Wang defined a {\em greedy tree} as follows.

\begin{de}\label{def-GT}
Suppose the degrees of the non-leaf vertices are given, the greedy tree is achieved by the following `greedy algorithm':
\begin{enumerate}
\item Label the vertex with the largest degree as $v$ (the root).
\item Label the neighbors of $v$ as $v_1, v_2,\dots,$ assign the largest degree available to them such that $d(v_1) \geq d(v_2) \geq \dots$
\item Label the neighbors of $v_1$ (except $v$) as $v_{11}, v_{12}, \dots$ such that they take all the largest
degrees available and that $d(v_{11}) \geq d(v_{12}) \geq . . .$ then do the same for $v_2, v_3,\dots$
\item Repeat 3. for all newly labeled vertices, always starting with the neighbors of the labeled vertex with largest whose neighbors are not labeled yet.
\end{enumerate}
\end{de}
\noindent
%Further, we assume that the root vertex $v$ is at level $0$ of the tree. The vertices at level $i$ are at distance $i$ to $v$.

\noindent
The following result  by Gan, Liu and You~\cite{gly-abctgds-12} characterizes the trees with minimal ABC index with prescribed degree sequences. 
The same result, using slightly different notation and approach, was obtained by
Xing and Zhou \cite{xz-etfdsabci-2012}.

\begin{te}\label{thm-DS}
Given the degree sequence, the greedy tree minimizes the ABC index.
\end{te}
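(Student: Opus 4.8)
The plan is to prove the statement by an exchange (switching) argument. Among all trees realizing the prescribed degree sequence $D$ I would fix one, call it $T$, of minimum $\ABC$ value, and show that $T$ must be isomorphic to the greedy tree $G_D$ produced by Definition~\ref{def-GT}. Both $T$ and $G_D$ are rooted at a vertex of maximum degree and their vertices are listed in breadth-first order; the greedy tree is precisely the rooted tree in which, level by level, the available degrees are assigned in non-increasing order, so that higher-degree vertices sit closer to the root and a higher-degree vertex always receives higher-degree children. The goal is to show that whenever $T$ deviates from this pattern there is a local modification preserving $D$ that does not increase $\ABC$, which forces a minimizer to be $G_D$.

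The analytic heart of the argument is a rearrangement inequality for the edge-weight function $f$ of~\eqref{eqn:000}. Concretely, for a parent $p$ with child $a$ and a parent $q$ with child $b$, detaching the two subtrees and reattaching the $b$-subtree to $p$ and the $a$-subtree to $q$ keeps $D$ fixed and changes $\ABC$ by
\[
\bigl(f(d(p),d(b))+f(d(q),d(a))\bigr)-\bigl(f(d(p),d(a))+f(d(q),d(b))\bigr).
\]
Hence it suffices to verify that $f$ has non-positive mixed differences, i.e. that $f(p,b)+f(q,a)\le f(p,a)+f(q,b)$ whenever $d(p)\ge d(q)$ and $d(b)\ge d(a)$; equivalently, $\partial^2 f/\partial x\,\partial y\le 0$ on the relevant range. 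Together with the monotonicity of $f$ (non-increasing in each coordinate once the other coordinate is at least $2$, which holds for every internal vertex, so that attaching a larger-degree child to a fixed parent cannot increase the edge weight), this guarantees that matching large-degree parents with large-degree children and moving larger degrees nearer to the root never increases the index. Establishing these two inequalities for the specific $f$ is a direct but somewhat delicate calculus computation, and I would carry it out first as a lemma.

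With the switching inequality in hand, the combinatorial step is to show that finitely many such moves transform $T$ into $G_D$. I would process the breadth-first ordering from the root downward: at the first position where $T$ disagrees with $G_D$, the discrepancy is either a misplaced degree at some level or a mismatched parent--child degree pairing, and in each case one of the reattachment moves above repairs it without increasing $\ABC$ while strictly decreasing a suitable integer potential (for instance, the number of breadth-first positions at which $T$ and $G_D$ differ). Since this potential is a non-negative integer, the process terminates, and it can terminate only when $T$ and $G_D$ coincide.

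The main obstacle I anticipate is not the analytic lemma but the bookkeeping in the combinatorial step: the two subtrees being exchanged generally have different sizes and internal structures, so I must argue that a single move can always be chosen to make genuine monotone progress toward the greedy configuration without creating new discrepancies higher up in the ordering. The delicate point is controlling the ties: one has to check that the strict form of the rearrangement inequality fails only in the degenerate equal-degree case, and that such ties never obstruct reaching $G_D$, which is exactly what is needed to conclude that the greedy tree attains the minimum.
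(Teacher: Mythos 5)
A preliminary remark: the paper itself does not prove Theorem~\ref{thm-DS}; it quotes the result from Gan, Liu and You~\cite{gly-abctgds-12} and from Xing and Zhou~\cite{xz-etfdsabci-2012}, so there is no in-paper proof to compare yours against. Your outline is, in substance, the exchange strategy used in those references: fix a minimizer with the prescribed degree sequence, and show by subtree switches (which preserve all degrees) that it can be carried to the greedy tree of Definition~\ref{def-GT} without increasing the index. Your identification of the analytic core is correct: the change of $\ABC$ under a swap of the $a$- and $b$-subtrees is exactly the four-term expression you write, and the needed inequality $f(p,b)+f(q,a)\le f(p,a)+f(q,b)$ for $d(p)\ge d(q)$, $d(b)\ge d(a)$ is true for the $f$ of (\ref{eqn:000}).

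That said, what you have is a blueprint, not a proof: both load-bearing steps are declared rather than established. For the analytic lemma, note that the reduction to a single sign condition on $\partial^2 f/\partial x\,\partial y$ must be checked on the full range actually used, which includes child degree $1$; there $f(1,y)=\sqrt{(y-1)/y}$ is \emph{increasing} in $y$, reversing the monotonicity that holds for arguments $\ge 2$, so the verification genuinely splits into cases (one can check, e.g., that $f(p,1)-f(q,1)\ge 0\ge f(p,b)-f(q,b)$ for $p\ge q\ge 2$, $b\ge 2$, or equivalently that $(x-2)/\bigl(x\,f(x,y)\bigr)$ is increasing in $x$ for $x\ge 2$, $y\ge 1$). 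The inequality does survive, but this is where the computation lives and you have not done it. For the combinatorial step, you correctly flag the danger yourself: a single subtree swap can permute the breadth-first order of many vertices, so ``number of positions where $T$ and $G_D$ differ'' is not obviously monotone under your moves. The standard repair is a top-down induction in which all levels above the current one are already frozen in greedy form --- this is precisely what guarantees that the parent degrees are ordered as required to apply the switching inequality at the current level --- and without spelling that out the termination argument is incomplete. In short: right strategy, consistent with the cited literature, but the two places where the proof could actually fail are exactly the two places you leave as intentions.
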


\noindent
The next result was obtained in~\cite{gfahsz-abcic-2013}. Alternatively it can be obtained as a corollary of Theorem~\ref{thm-DS}. 
\begin{te}\label{thm-gutman-filomat}
If a minimal-ABC tree possesses three mutually adjacent vertices $v_1$, $v_2$, $v_3$, such that
$$
d(v_1) \geq d(v_2) \geq d(v_3),
$$
then $v_3$ must not be adjacent to both $v_1$ and $v_2$.
\end{te}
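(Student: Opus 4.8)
The plan is to obtain the statement as a consequence of Theorem~\ref{thm-DS}, by exploiting the fact that a minimal-ABC tree must coincide with the greedy tree on its \emph{own} degree sequence. First I would observe that if $T$ has minimal ABC index among all $n$-vertex trees, then in particular it has minimal ABC index among all trees sharing its degree sequence; hence, by Theorem~\ref{thm-DS}, $T$ is (isomorphic to) the greedy tree of Definition~\ref{def-GT} for that degree sequence. This transfers the rigid combinatorial structure of the greedy construction onto $T$, which is what makes the local claim about $v_1,v_2,v_3$ accessible.

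Next I would isolate the one structural property of greedy trees that the whole argument rests on. Rooting $T$ at the vertex of maximum degree and listing its vertices in breadth-first order, the degrees assigned by steps 2--4 of Definition~\ref{def-GT} are non-increasing along that order. Consequently, for every non-root vertex $u$ with parent $p(u)$ one has $d(p(u))\ge d(u)$, and every child $c$ of $u$ satisfies $d(c)\le d(u)$. The decisive reformulation is: \textbf{each vertex has at most one neighbour of strictly larger degree, namely its parent}. This monotonicity is the combinatorial heart of the proof and follows directly from the greedy assignment, so I would spend most of the writeup making it precise rather than on the final deduction.

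With this property the conclusion is short. Suppose for contradiction that $v_3$ is adjacent to both $v_1$ and $v_2$ with $d(v_1)\ge d(v_2)\ge d(v_3)$. If $d(v_3)=2$, then $v_1\,v_3\,v_2$ is an internal path and Theorem~\ref{thm-GFI-10} already gives a contradiction, so I may assume $d(v_3)\ge 3$. At most one of $v_1,v_2$ is the parent of $v_3$; the other is a child and hence has degree at most $d(v_3)$. Thus if both neighbours had degree \emph{strictly} exceeding $d(v_3)$ — the genuinely forbidden situation, in which $v_3$ is a strict local degree-minimum — both would be forced to be the parent, which is impossible. The only residual configuration is the boundary one in which the child-neighbour has degree exactly $d(v_3)$; I expect this degenerate equality case to be the main obstacle, since greedy trees do admit adjacent equal-degree branch vertices, so it must be dispatched separately by a direct inspection of the breadth-first tie-breaking rather than by the monotonicity count alone.

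Finally, I would record a self-contained alternative that avoids Theorem~\ref{thm-DS}. Assuming the configuration with $d(v_3)\ge 3$, detach from $v_3$ the edge toward the smaller-degree neighbour together with its hanging subtree and reattach it at the larger-degree vertex $v_1$, then estimate the resulting change in the ABC index using the monotonicity of the weight $f$ from~\eqref{eqn:000}, which is strictly decreasing in each argument once both arguments are at least $2$. In this route the delicate point is to show that the surgery strictly decreases the ABC index for \emph{every} admissible shape of the moved subtree, which amounts to pinning down the sign of differences of the form $f(d(v_1),t)-f(d(v_3),t)$ uniformly over the transferred edges; once that sign is controlled, the minimality of $T$ is contradicted and the proof is complete.
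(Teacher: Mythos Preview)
Your primary route---deducing the statement from Theorem~\ref{thm-DS}---is precisely the one the paper indicates; the paper offers no proof of its own and simply records that the result is a corollary of Theorem~\ref{thm-DS} (with a citation to~\cite{gfahsz-abcic-2013} for the original argument). In that sense your plan matches the paper's.

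There is, however, a genuine gap in your execution. Theorem~\ref{thm-DS}, as stated here, asserts only that the greedy tree \emph{minimizes} the ABC index over its degree class; it does not assert uniqueness. Your inference ``$T$ minimizes ABC on its degree sequence, hence $T$ is isomorphic to the greedy tree'' therefore does not follow from the statement you invoke, and without it the BFS degree-monotonicity you extract need not hold in $T$ itself. To make the corollary honest you must either cite the stronger uniqueness form actually proved in~\cite{gly-abctgds-12, xz-etfdsabci-2012}, or argue directly (as those proofs do) that a tree containing a vertex with two neighbours of strictly larger degree admits a \emph{degree-preserving} swap that strictly lowers ABC. You are right that the boundary $d(v_2)=d(v_3)$ is the sensitive spot: greedy trees can and do contain three consecutive vertices of equal degree, so this case cannot be dispatched by BFS tie-breaking; the nontrivial content of the theorem really lives in the strict-inequality situation.

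Your alternative surgery is not the right move as written. Detaching the edge $v_3v_2$ with its hanging subtree and reattaching it at $v_1$ lowers $d(v_3)$ by one and raises $d(v_1)$ by one, so the degree sequence changes and you also pick up \emph{positive} contributions $-f(d(v_3),d(w))+f(d(v_3)-1,d(w))$ from every other neighbour $w$ of $v_3$ (Proposition~\ref{appendix-pro-040}); the monotonicity of $f$ alone does not control the net sign. The transformation that underlies Theorem~\ref{thm-DS} is instead a degree-preserving exchange of children between $v_3$ and its child-neighbour so that their degrees are effectively swapped; that is the clean way to force a strict decrease and complete the argument.
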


\noindent
To the best of our knowledge, the above mentioned results seems to be the only proven properties of the minimal ABC-trees.
%
%However, we think that it is worthy to mention in the next section some related basically obtained by computer assisted search.

%\subsection[Related results and computer search for minimal-ABC trees]
%{Related results and computer search for minimal-ABC trees}\label{sec:computer-assisted}

For complete characterization of the minimal-ABC trees,
besides the theoretically proven properties, % of  the trees with minimal ABC index, 
computer supported search can be of enormous help.
Therefore, 
we would like to mention in the sequel few related computational results.
%Therefore,  in the sequel we will mention few related results.

%In~\cite{fgiv-cstmabci-12} Furtula et al.~\cite
A  first significant example of using computer search was done by Furtula et al.~\cite{fgiv-cstmabci-12}, 
where the trees with minimal ABC index of up to size of $31$ were computed,
and an initial conjecture of the general structure of the minimal-ABC trees was set.
There, a brute-force approach of generating all trees of a given order, 
speeded up by using  a distributed computing platform, was applied.
%The computation was speeded up by using  a distributed computing platform.
%There, a brute-force approach of generating all trees of a given order was applied.
%The computation was speed up by using  a distributed computing platform.
The plausible structural computational model  and its refined version presented there 
is based on the main assumption that  the minimal ABC tree posses a single {\it central vertex},
or said with other words, it is based  on the assumption that 
the vertices of a minimal ABC tree of degree $\geq 3$ induce a star graph.
This assumption was shattered by counterexamples presented in \cite{ahs-tmabci-13, ahz-ltmabci-13, d-ectmabci-2013}.
In this context, it is worth to mention that for a special class of trees, so-called {\em Kragujevac trees},
that are comprised of a central vertex and $B_k$-branches, $k \geq 1$ (see Figure~\ref{B_k-branches} for an illustration), the minimal-ABC tress were
fully characterized  by Hosseini, Ahmadi and Gutman~\cite{hag-ktmabci-14}.

In \cite{d-ectmabci-2013}  by considering only the degree sequences
of trees and some known structural properties of the trees with minimal ABC index
all trees with minimal ABC index of up to size of $300$ were computed.
%The $B_k$-branches are depicted in Figure~\ref{B_k-branches}.
%
\begin{figure}[h]
\begin{center}
%\vspace{-0.3cm}
\includegraphics[scale=0.75]{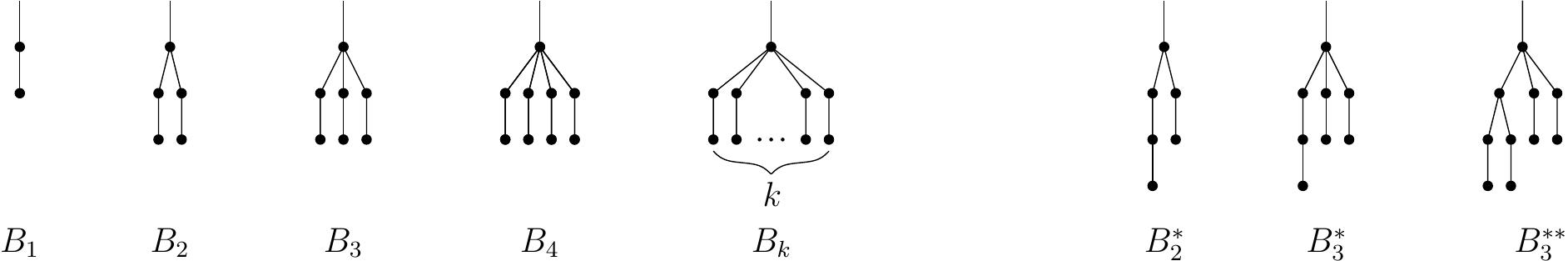}% [height=3.5cm,width=15cm]
\caption{$B_k$-branches. The vertex of a $B_k$-branch with degree $k+1$ is considered as the root of the branch.}
%\label{Unicyclic-Max-M3}
\label{B_k-branches}
%\vspace{-0.3cm}
\end{center}
\end{figure}
%\noindent
%The same result as in Theorem~\ref{thm-DS}, using slightly different notation and approach, was obtained by
%Xing and Zhou \cite{xz-etfdsabci-2012}.
%Since the  Theorem~\ref{thm-DS} plays a crucial role in our computation, the first important issue
%is how  to enumerate efficiently degree sequences of trees. This problem is considered in the next section. 
%Next we consider the problem of enumerating degree sequences of trees.
%
%A vertex of BSF tree (greedy tree)  $u$, with $d(u)=k\geq 2$ we call a {\em k-degree terminal-vertex} if
%it closest pendant vertex is at distance at most three, and $u$ has at list one neighboring vertex of degree two.
%
\section[Newly proven properties of the minimal-ABC trees]{New results} \label{sec:new}

By Theorem~\ref{thm-LG-10} and Corollary~\ref{co-GFI-10}, it follows that a minimal-ABC tree is comprised
of a tree $G$ to whose each pendant vertex a $B_k$-branch is attached.
Notice that if $G$ is just a single vertex, then the minimal-ABC tree is a Kragujevac tree.
In this section, we present new results considering the types of $B_k$-branches that a minimal-ABC tree cannot contain.
%
%Theorem~\ref{thm-GFI-30} says that there is at most one pendant path of length $k \geq 3$ in the tree with minimal ABC-index.
%It was already observed in~\cite{hag-ktmabci-14} that the position of the path of length $k \geq 3$ does not have an influence 
%on the value of the ABC index. Therefore, we assume that it is attached to a vertex of degree $4$, 
%forming a $B_3^*$-branch (see Figure~\ref{B_k-branches} for an illustration).
%
We start with the following result that will be used in the proof of Theorem~\ref{te-no5branches-10}.

\begin{pro} \label{pro-05}
A minimal-ABC tree does not contain 
a $B_1$-branch and a $B_{k \geq 5}$-branch
that have a common parent vertex.
\end{pro}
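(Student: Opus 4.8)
The plan is to argue by contradiction through a local branch-rebalancing transformation. Suppose a minimal-ABC tree $T$ contained a $B_1$-branch with root $r_1$ and a $B_k$-branch ($k\ge 5$) with root $r_k$, both children of a common vertex $w$, and write $d=d(w)$. I would first record two preliminary facts. Since the $B_1$-branch together with $w$ is exactly a pendant path of length $3$ emanating from $w$, the definition of a pendant path forces $d(w)>2$ (equivalently, $d(w)=2$ would produce a pendant path of length $4$ through $w$, contradicting Theorem~\ref{thm-GFI-20}); hence $d\ge 3$. Second, I would exploit the identity $f(2,j)=\sqrt{j/(2j)}=1/\sqrt{2}$, valid for every $j\ge 1$. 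Every edge interior to a $B$-branch (each root-to-middle edge and each middle-to-leaf edge) is incident to a degree-$2$ vertex, so it contributes exactly $1/\sqrt{2}$ regardless of the branch size; the same identity shows the edge $wr_1$ contributes $f(d,2)=1/\sqrt{2}$ as long as $r_1$ has degree $2$.

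The transformation $\tau$ I would apply detaches one pendant $P_2$ (a degree-$2$ vertex together with its pendant leaf) from $r_k$ and reattaches it to $r_1$. This keeps $T$ a tree on the same vertex set, turns the $B_k$-branch into a $B_{k-1}$-branch (legitimate since $k-1\ge 4$) and the $B_1$-branch into a $B_2$-branch. By the identity above, all branch-interior edges still contribute $1/\sqrt{2}$ both before and after; a short count of how many $1/\sqrt{2}$-edges are present in each state (the only non-$1/\sqrt{2}$ edges are $wr_k$, and $wr_1$ after the swap) collapses almost everything and leaves
\beqs
\ABC(\tau(T)) - \ABC(T) = f(d,k) + f(d,3) - f(d,k+1) - \tfrac{1}{\sqrt{2}}.
\eeqs
It then remains to prove that this quantity is negative for all $d\ge 3$ and all $k\ge 5$, which contradicts the minimality of $T$ and finishes the proof. (This computation could equally be packaged as one of the auxiliary lemmas announced for the appendix.)

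The analytic heart, and the step I expect to be the main obstacle, is establishing $f(d,k)+f(d,3)-f(d,k+1)<1/\sqrt{2}$. I would first reduce to the single case $k=5$: writing $f(d,x)=\sqrt{1/d+(d-2)/(dx)}$ and differentiating shows $f(d,\cdot)$ is decreasing and convex in its second argument for $d>2$, so the positive difference $f(d,k)-f(d,k+1)$ is largest at $k=5$, while the term $f(d,3)$ does not depend on $k$. For $k=5$ the inequality becomes the one-variable statement $f(d,5)+f(d,3)-f(d,6)<1/\sqrt{2}$, which I would prove by bounding the two pieces separately: $f(d,3)\le f(3,3)=2/3$ (as $f(\cdot,3)$ is decreasing in $d$), while $f(d,5)-f(d,6)$ stays below its limit $1/\sqrt5-1/\sqrt6$ as $d\to\infty$, giving
\beqs
f(d,3) + \big(f(d,5)-f(d,6)\big) < \frac{2}{3} + \Big(\frac{1}{\sqrt{5}}-\frac{1}{\sqrt{6}}\Big) \approx 0.7056 < \frac{1}{\sqrt{2}}\approx 0.7071.
\eeqs
The delicate point is precisely that this final margin is tiny (about $0.0015$), so the monotonicity and limiting claims in both $k$ and $d$ must be justified carefully rather than read off a few sample values; everything else in the argument is routine.
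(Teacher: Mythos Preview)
Your proof is correct and follows the same overall strategy as the paper --- assume the configuration exists, apply a local rebalancing, and show the ABC index strictly drops --- but the execution differs in two places worth noting.

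\emph{The transformation.} You detach a single $P_2$ from $r_k$ and attach it to $r_1$, sending $B_k\to B_{k-1}$ and $B_1\to B_2$; this gives the clean difference $f(d,k)+f(d,3)-f(d,k+1)-1/\sqrt{2}$. The paper's transformation $\mathcal{T}$ instead removes \emph{two} $P_2$'s from the $B_k$-root (so its degree drops by $2$) while the $B_1$-root still gains exactly one child, yielding $-f(d(u),d(v))+f(d(u),d(v)-2)-f(d(u),2)+f(d(u),3)$. Your version is the simpler of the two.

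\emph{The estimate.} The paper reduces to $d(v)=6$ by monotonicity (its Proposition~\ref{appendix-pro-010}), then differentiates $g(d(u),6)$ in $d(u)$, finds a unique critical point near $d(u)\approx 31.4$, and checks the endpoint values (both $\approx -0.03$). You instead use convexity of $x\mapsto f(d,x)$ to reduce to $k=5$ and then bound the two pieces separately via monotonicity in $d$. This is more elementary and avoids any calculus, but the price is the hair-thin margin $2/3+1/\sqrt5-1/\sqrt6\approx 0.7056<1/\sqrt2\approx 0.7071$. If you want a more comfortable gap without derivatives, you can note that $f(d,3)+f(d,5)-f(d,6)$ is actually decreasing in $d$ (the drop in $f(d,3)$ dominates the rise in $f(d,5)-f(d,6)$), so its maximum at $d=3$ is already about $0.676$, well below $1/\sqrt2$.
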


\begin{proof}

\smallskip
\noindent
Assume that there exist a minimal-ABC tree $G$ with a $B_1$-branch and a $B_{k \geq 5}$-branch 
that have a common parent vertex.  Denote that common vertex by $u$. Consider the transformation
$\mathcal{T}$ depicted in Figure~\ref{fig-common_vertex_P2_B5}. 
\begin{figure}[h!]
\begin{center}
%\vspace{-0.3cm}
\includegraphics[scale=0.750]{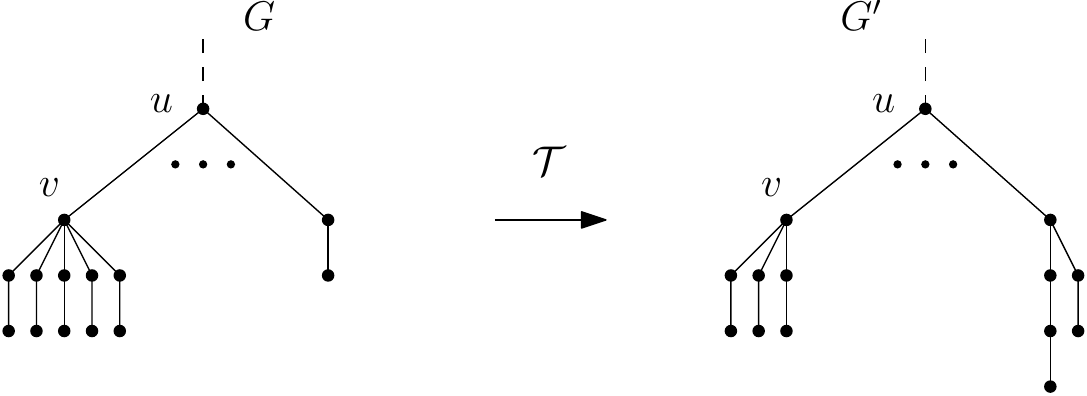}% [height=3.5cm,width=15cm]
\caption{Transforamation $\mathcal{T}$ from Proposition~\ref{pro-05}.}
%\label{Unicyclic-Max-M3}
\label{fig-common_vertex_P2_B5}
%\vspace{-0.3cm}
\end{center}
\end{figure}
The change of the ABC index after applying 
this transformation is
\beq \label{eq-pro-B5-10}
ABC(G') - ABC(G) &=& -f(d(u), d(v)) + f(d(u), d(v)-2) -f(d(u), 2) + f(d(u), 3) \nonumber \\
                              & = & g(d(u), d(v)).  \nonumber 
\eeq 
Here and in the rest of the paper, when we perform algebraic operations, we assume that
the degrees of the vertices can have real values.
By Proposition~\ref{appendix-pro-010}, the expression $-f(d(u), d(v)) + f(d(u), d(v)-2)$
decreases in $d(v)$, therefore $g(d(u), d(v))$ is maximal for $d(v)=6$.
The first derivative of $g(d(u), 6)$ is 
\beq \label{eq-pro-B5-15}
\frac{d\,g(d(u), 6)}{d\, d(u)} =  \frac{\frac{2 \sqrt{6}}{\sqrt{4+d(u)}}-\frac{3}{\sqrt{2+d(u)}}-\frac{\sqrt{3}}{\sqrt{1+d(u)}}}{6 d(u)^{3/2}},  \nonumber 
\eeq
which is equal to $0$ for $d(u)=31.3997$, where $g(d(u), 6)$ has its minimum.
Therefore, $g(d(u), 6)$ has its maximum at $d(u)=6$ or $d(u) \to \infty$, and 
\beq \label{eq-pro-B5-30}
ABC(G') - ABC(G) &\leq& \max  \left( g(6, 6),  \lim_{d(u) \to \infty} g(d(u), 6) \right) \nonumber  \\
&=&  \max  \left( -0.0331932, -0.0380048 \right) = -0.0331932. \nonumber   
\eeq
Thus, we have shown that after applying the transformation $\mathcal{T}$, the ABC index of $G$
decreases, which is a contradiction to the initial assumption that $G$ is a minimal-ABC tree.
\end{proof}

\begin{te} \label{te-no5branches-10}
A minimal ABC-tree does not contain a $B_k$-branch, where $k \geq 5$.
\end{te}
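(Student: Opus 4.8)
The plan is to argue by contradiction, exactly in the spirit of Proposition~\ref{pro-05}. Assume a minimal-ABC tree $G$ contains a $B_k$-branch with $k \geq 5$; let $v$ be its root, so $d(v)=k+1\geq 6$, and let $u$ be the parent of $v$. By Theorem~\ref{thm-GFI-10} there are no internal paths, which forces $d(u)\geq 3$, and by Corollary~\ref{co-GFI-10} the vertices of degree $\geq 3$ induce a tree; by Theorem~\ref{thm-LG-10} every pendant of $G$ lies on a pendant path of length $2$ or $3$. Together these pin down the local structure around $u$: its children are roots of $B_j$-branches or lie on short pendant paths. I would then exhibit a transformation $\mathcal{T}$ producing a tree $G'$ with $\ABC(G')<\ABC(G)$, contradicting minimality.

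The observation that drives the choice of $\mathcal{T}$ is that in any $B_j$-branch each arm contributes a fixed amount to the ABC index: its inner edge has weight $f(j+1,2)=\sqrt{1/2}$ and its leaf edge has weight $f(2,1)=\sqrt{1/2}$, independently of $j$. Hence, when arms are detached from $v$ and reattached elsewhere, all arm contributions cancel and the change in ABC collapses onto the few ``connecting'' edges whose endpoints change degree. I would design $\mathcal{T}$ so that it keeps $d(u)$ fixed --- as Proposition~\ref{pro-05} does --- by lowering $d(v)$ by two while simultaneously raising the degree of a sibling child $w$ of $u$ that receives the moved arms. This yields a clean two-variable expression
\beq
\ABC(G') - \ABC(G) = g(d(u), d(v)), \nonumber
\eeq
with $d(v)=k+1$, built only from $f(d(u),d(v))$, $f(d(u),d(v)-2)$ and the two connecting terms at $w$.

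To show $g<0$ I would follow the two-step recipe already in use. First, invoke the monotonicity statement of Proposition~\ref{appendix-pro-010} to argue that $-f(d(u),d(v))+f(d(u),d(v)-2)$ is monotone in $d(v)$, so $g(d(u),d(v))$ attains its maximum over $k\geq 5$ at the extreme admissible value $d(v)=6$. Then treat $g(d(u),6)$ as a one-variable function of $d(u)$, locate its unique interior critical point by setting the derivative to zero, check that it is a minimum, and conclude that the maximum of $g(d(u),6)$ is attained at the smallest feasible $d(u)$ or in the limit $d(u)\to\infty$. Evaluating $g$ at those two places and verifying both values are negative closes the estimate, mirroring the final display of Proposition~\ref{pro-05}.

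The sub-case that must be isolated, and the reason Proposition~\ref{pro-05} is proved first, is when the only sibling available to absorb the arms is a $B_1$-branch: there the required move is precisely the one already forbidden, so that configuration is settled. The main obstacle I anticipate is twofold. First, one must carry out a complete case analysis of the neighborhood of $u$ --- which children it has and of what type --- to guarantee that a beneficial $\mathcal{T}$ exists in every configuration, using Theorem~\ref{thm-GFI-30} to cap the number of length-$3$ pendant paths. Second, and more delicate, the connecting edge $f(d(u),k+1)$ is \emph{decreasing} in $k$, so it alone pushes in the wrong direction; the net decrease must be recovered from the remaining terms, and confirming $g(d(u),d(v))<0$ uniformly for all $k\geq 5$ and all admissible $d(u)$ is exactly the calculus step where the numerical thresholds (such as the value $d(u)=31.3997$ in Proposition~\ref{pro-05}) appear.
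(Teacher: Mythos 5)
There is a genuine gap. Your entire plan rests on one template transformation: detach arms from the $B_{k\ge 5}$-root $v$ and reattach them to a sibling $w$ of $v$, keeping $d(u)$ fixed. That template simply has no valid target in the configurations that make this theorem hard. Write $g_2(y)=f(d(u),y-1)-f(d(u),y)\ge 0$, which is decreasing in $y$; moving arms from $v$ to $w$ changes the index by a telescoping sum of $g_2$-values, so if $w$ is the root of a $B_4$-branch ($d(w)=5$) the change is $g_2(d(v)-1)+g_2(d(v))-g_2(6)-g_2(7)\ge 0$ and you have manufactured a new $B_{\ge 5}$-branch; if $w$ is a $B_3$-root and $v$ a $B_5$-root the move is an isomorphism and the change is exactly $0$; and if $w$ is a $B_1$-root the move is precisely Proposition~\ref{pro-05}, which tells you that configuration is already absent, i.e.\ it removes a potential target rather than supplying one. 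So when every sibling of $v$ has degree $5$ or more --- a case you cannot exclude a priori --- no transformation of your form strictly decreases the ABC index. The paper's proof escapes exactly here by using a structurally different move: it cuts one $P_2$ from each of several high-degree children of $u$ and assembles them into a brand-new $B_2^*$- or $B_3^*$-branch attached to $u$, which \emph{increases} $d(u)$ by one and therefore forces control of every edge incident to $u$ (hence Propositions~\ref{appendix-pro-020} and~\ref{appendix-pro-050} and limits as $d(u)\to\infty$), not just the two connecting edges your $g(d(u),d(v))$ accounts for.

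A second, related omission: you treat only a single offending branch with a single parent. The paper must separately handle one, two, and at least three $B_{\ge 5}$-branches, whose roots may sit under one or two different parent vertices possibly on different levels; in those cases the arms are pooled from several roots at once and the receiving vertex is chosen among the parents, which is again outside your template. Your calculus outline (reduce to $d(v)=6$ via Proposition~\ref{appendix-pro-010}, then optimize over $d(u)$ at an interior critical point and at $d(u)\to\infty$) is the right machinery and matches the paper's estimates, but it is applied to the wrong, too-narrow family of transformations; the ``complete case analysis of the neighborhood of $u$'' you defer is not a routine check but the actual content of the proof.
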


\begin{proof}
Let $G$ be a tree with minimal ABC index containg $B_{\geq5}$-branches.
%By Theorem~\ref{thm-DS} $G$ is a greedy tree.
%Assume that  there are consecutive vertices $u_1, u_2, \dots, u_q$, in the breath first search of $G$,
%that have $B_{\geq5}$ branches as their children. Since $G$ is greedy tree, it follows that 
%$u_1, u_2, \dots, u_{q-1}$ must have only $B_{\geq5}$ branches as their children, and the
%vertex $u_p$ may have one or more such branches.
We consider three cases with respect to the number of  $B_{\geq5}$-branches that $G$ may have:
$G$ has at least three  $B_{k \geq 5}$-branches, $G$ has two  $B_{k \geq 5}$-branches, and
$G$ has one  $B_{k \geq 5}$-branch.

\bigskip
\noindent
{\bf Case~$1.$} $G$ has at least three  $B_{k \geq 5}$-branches.

\smallskip
\noindent
If there are more than three $B_{k \geq 5}$-branches consider the last three with respect to the breadth-first search
of $G$ (recall that  by  Theorem~\ref{thm-DS} $G$ it is a greedy tree).
Denote the roots of those branches by $v_1$, $v_2$ and $v_3$.
Since they are roots of $B_{\geq5}$-branches, their degrees are at least $6$.
We assume that $d(v_1) \geq d(v_2) \geq d(v_3)$. Note that $v_1$, $v_2$ and $v_3$ can
have a common parent vertex, denoted here by $u_1$, or can have two different parent vertices, denoted by $u_1$ and $u_2$.
In the latter case, we assume that  $d(u_1) \geq d(u_2)$.
With respect to the number of parent vertices  of  $v_1$, $v_2$ and $v_3$, we distinguish three cases.

\smallskip
\noindent
{\bf Subcase~$1.1.$} $u_1$ is the parent vertex of $v_1$, and  $u_2$ is the parent vertex of $v_2$ and $v_3$.

\smallskip
\noindent
Apply the following transformation $\mathcal{T}_{11}$ to $G$:
from each of $v_1$, $v_2$ and $v_3$ cut an adjacent pendant path $P_2$, 
construct a $B_2^*$-branch and attach it to $u_2$.
An example of this case with an illustration of the transformation $\mathcal{T}_{11}$ is given in Figure~\ref{fig-case1_1}.
Observe that $u_1$ and $u_2$ can belong to different levels of $G$, for example as in Figure~\ref{fig-case1_1}.
\begin{figure}[h]
\begin{center}
%\vspace{-0.3cm}
\includegraphics[scale=0.750]{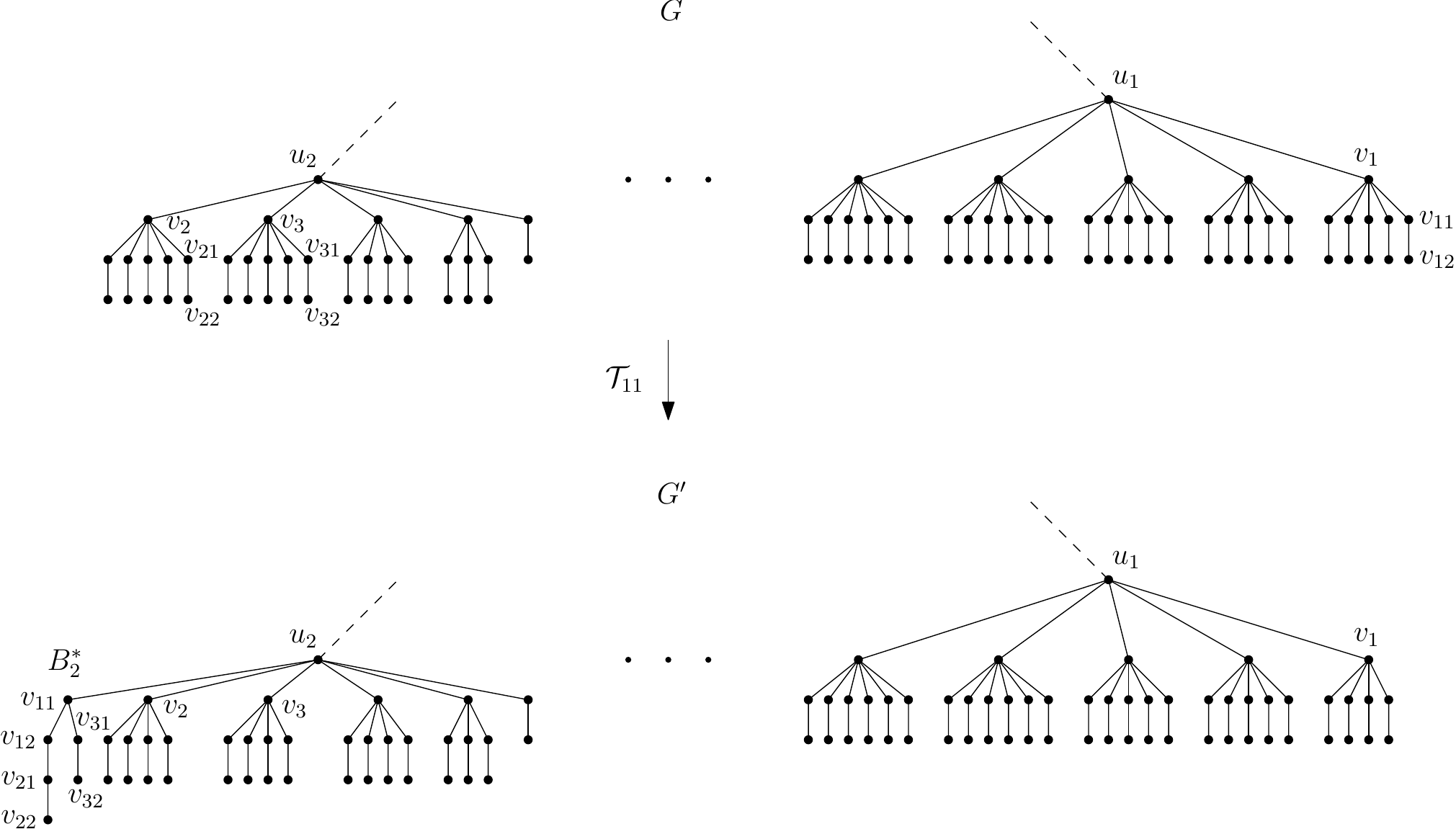}% [height=3.5cm,width=15cm]
\caption{Transforamation $\mathcal{T}_{11}$ from Subcase~$1.1.$ Only the relevant parts of $G$  and $G'$ are depicted.}
%\label{Unicyclic-Max-M3}
\label{fig-case1_1}
%\vspace{-0.3cm}
\end{center}
\end{figure}
After applying $\mathcal{T}_{11}$ the degrees of $v_1$, $v_2$ and $v_3$ decrease by one, while
the degrees of $u_2$ and $v_{11}$ increase by one. The degrees of the rest of the vertices of $G$,
including $u_1$, $v_{12}$, $v_{21}$, $v_{22}$, $v_{31}$ and $v_{32}$,  remain unchanged. 
%Also a degree of 
%Also a degree of one vertex of degree $2$ increases to $3$, the degree of the root-vertex 
%of the newly created $B_2^*$-branch.
%
%
\noindent
The change of the ABC index between $u_2$ and a vertex $w$, that is adjacent to $u_2$ and different than $v_2$ and $v_3$,
is:
\beq
-f(d(u_2), d(w)) + f(d(u_2)+1, d(w)) \nonumber
%-\sqrt{\frac{d(u_2)+d(w)-2}{d(u_2) d(u_2)}}+\sqrt{\frac{d(u_2)+1 +d(w)-2}{(d(u_2)+1)d(w)}},
\eeq
which by Proposition~\ref{appendix-pro-020} is non-positive for $d(w) \geq 2$. %, and strictly negative for $d(w) > 2$.
%
%\noindent
The change of the ABC index between $u_2$  and $v_2$ is:
\beq \label{expression-30}
-f(d(u_2), d(v_2)) + f(d(u_2)+1, d(v_2)-1)
%-\sqrt{\frac{d(u_2)+d(v_2)-2}{ d(u_2) d(v_2)}}+\sqrt{\frac{d(u_2)+ d(v_2) -2}{ (d(u_2)+1) (d(v_2)-1)}}.
\eeq
By Proposition~\ref{appendix-pro-010}, the last expressions decreases in $d(v_2)$, i.e., it reaches it maximum for $d(v_2)=6$.
Thus the upper bound for (\ref{expression-30}) is 
\beq \label{expression-40}
-f(d(u_2), 6) + f(d(u_2)+1, 5) 
%-\sqrt{\frac{d(u_2)+6-2}{ 6 d(u_2)}} + \sqrt{\frac{d(u_2)+ 6 -2}{ 5 (d(u_2)+1)}}.
\eeq
Similarly, we obtain that (\ref{expression-40}) is an upper bound for  the change of ABC index between $u_2$  and $v_3$.
%
%\noindent
The change of ABC index between $u_1$ and $v_1$ is
\beq \label{expression-10}
-f(d(u_1), d(v_1)) + f(d(u_1), d(v_1)-1). 
%-\sqrt{\frac{d(u_1)+d(v_1)-2}{ d(u_1) d(v_1)}}+\sqrt{\frac{d(u_1)+d(v_1)-3}{ d(u_1) (d(v_1)-1)}}. 
\eeq
\noindent
By Proposition~\ref{appendix-pro-030}, the expression (\ref{expression-10}) increases in $d(u_1)$ and decreases in $d(v_1)$, thus it is maximal 
when $d(u_1)$ is maximal and $d(v_1)$ is minimal, i.e., $d(u_1)  \to \infty$  and $d(v_1)=6$.
Therefore,
\beq \label{expression-20}
&\lim_{ \substack{d(v_1)= 6 \\ d(u_1) \to \infty} }&   \left( -f(d(u_1), d(v_1)) + f(d(u_1), d(v_1)-1) \right)  =   \nonumber \\
&\lim_{ d(u_1) \to \infty}&  -\sqrt{\frac{d(u_1)+4}{ 6 d(u_1)}}+\sqrt{\frac{d(u_1)+3}{ 5 d(u_1)}} = 0.0389653  .
%\lim_{ \substack{d(v_1)=6 \\ d(u_1) \to \infty} }-\sqrt{\frac{d(u_1)+d(v_1)-2}{ d(u_1) d(v_1)}}+\sqrt{\frac{d(u_1)+d(v_1)-3}{ d(u_1) (d(v_1)-1)}}= 0.0389653
\eeq
is an upper bound for (\ref{expression-10}).

\noindent
Additionally, there is a change of the ABC index caused by  $v_{11}$
%between $u_2$ and $B_2^*$-branch 
which is:
\beq \label{expression-50}
-f(d(v_1), d(v_{11})) + f(d(u_2)+1, d(v_{11}+1) ) =
-\sqrt{\frac{1}{ 2}} + f(d(u_2)+1, 3). 
%-\sqrt{\frac{1}{ 2}} +\sqrt{\frac{d(u_2)+1+ 3 -2}{ 3 d(u_2)+1)}}.
\eeq
Thus, from  (\ref{expression-40}), (\ref{expression-20})  and (\ref{expression-50}), it follows that the total change of the ABC index of
$G$ after applying the transformation $\mathcal{T}_{11}$ is not larger than
\beq \label{expression-60}
0.0389653 + 2\left(  -f(d(u_2), 6) + f(d(u_2)+1, 5) \right)
-\sqrt{\frac{1}{ 2}} +f(d(u_2)+1, 3).
%0.0389653 + 2\left(  -\sqrt{\frac{d(u_2)+4}{ 6 d(u_2)}} + \sqrt{\frac{d(u_2)+ 4}{ 5 (d(u_2)+1)}} \right)
%-\sqrt{\frac{1}{ 2}} +\sqrt{\frac{d(u_2)+2}{ 3 (d(u_2)+1)}}.
\eeq
 By Proposition~\ref{appendix-pro-050}, 
 $2\left(  -f(d(u_2), 6) + f(d(u_2)+1, 5) \right) +f(d(u_2)+1, 3)$  increases in $d(u_2)$, so the upper bound of
 the sum in (\ref{expression-60}) is 
\beq \label{expression-70}
\lim_{  d(u_2) \to \infty} && 0.0389653 + 2\left(  -f(d(u_2), 6) + f(d(u_2)+1, 5) \right)
-\sqrt{\frac{1}{ 2}} +f(d(u_2)+1, 3)  = \nonumber \\ 
 \lim_{  d(u_2) \to \infty} && 0.0389653 + 2\left(  -\sqrt{\frac{d(u_2)+4}{ 6 d(u_2)}} + \sqrt{\frac{d(u_2)+ 4}{ 5 (d(u_2)+1)}} \right)
-\sqrt{\frac{1}{ 2}} +\sqrt{\frac{d(u_2)+2}{ 3 (d(u_2)+1)}}  = \nonumber \\ 
&& -0.0128606  \nonumber.
\eeq
Thus, we have shown that  after applying the transformation $\mathcal{T}_{11}$
 the change of the ABC index of $G$ is strictly negative.

%Notice that the tree obtain after applying $\mathcal{T}_{11}$ is not necessarily a minimal ABC-tree,
%since it may not be a greedy tree (as it is the case with the example in Figure~\ref{fig-case1_1}).

\bigskip
\noindent
{\bf Subcase~$1.2.$} $u_1$ is a parent vertex of $v_1$ and $v_2$, and  $u_2$ is a parent vertex of  $v_3$.

\smallskip
\noindent
Now, we apply the following transformation $\mathcal{T}_{12}$ to $G$:
from each of $v_1$, $v_2$ and $v_3$ cut an adjacent pendant path $P_2$, 
construct a $B_2^*$-branch and attach it to $u_1$.
An example of this case with an illustration of the transformation $\mathcal{T}_{12}$ is given in Figure~\ref{fig-case1_2}.
%Observe that $u_1$ and $u_2$ can belong to different levels of $G$, for example as in Figure~\ref{fig-case1_2}.

\begin{figure}[h]
\begin{center}
%\vspace{-0.3cm}
\includegraphics[scale=0.750]{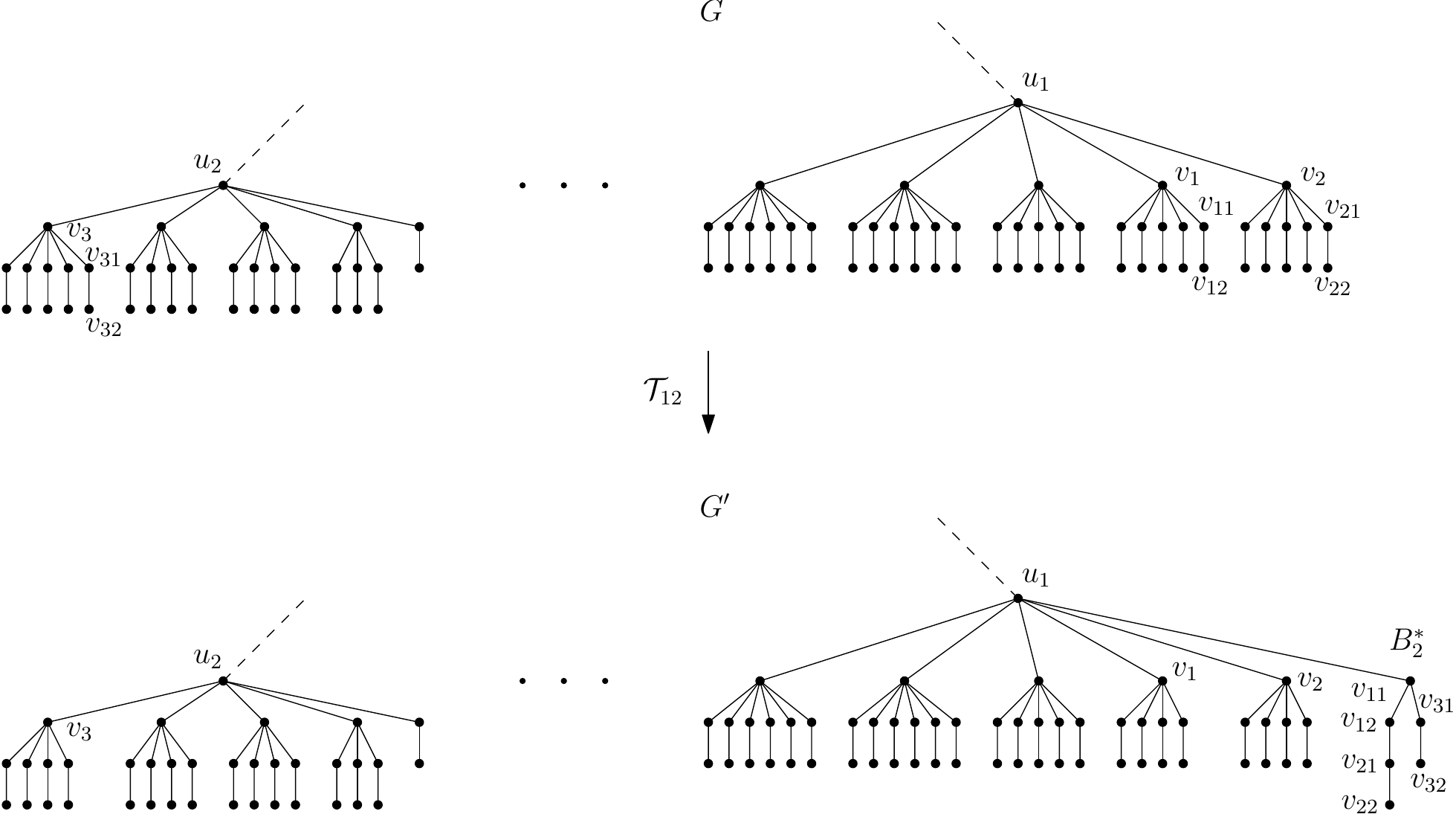}% [height=3.5cm,width=15cm]
\caption{Transforamation $\mathcal{T}_{12}$ from Subcase~$1.2.$ }
%\label{Unicyclic-Max-M3}
\label{fig-case1_2}
%\vspace{-0.3cm}
\end{center}
\end{figure}

%\noindent
After applying $\mathcal{T}_{12}$ the degrees of $v_1$, $v_2$ and $v_3$ decrease by one, while
the degrees of $u_1$ and $v_{11}$ increase by one. The degrees of the rest of the vertices of $G$  
remain unchanged. 
%Also a degree of 
%Also a degree of one vertex of degree $2$ increases to $3$, the degree of the root-vertex 
%of the newly created $B_2^*$-branch.
%
Since the analysis of the change of the ABC-index in  this subcase is almost identical to that of Subcase~$1.1.$ 
(the role of the vertices $u_1$ and $u_2$ are interchanged),
we omit the repetition of the detailed analysis, and just state the final upper bound on the change of  the 
ABC-index after applying the transformation $\mathcal{T}_{12}$, which is
\beq \label{expression-80}
0.0389653 + 2\left(  -f(d(u_1),6) + f(d(u_1)+1, 5) \right)
-\sqrt{\frac{1}{ 2}} +f(d(u_1)+1, 3) \leq -0.0128606. \nonumber
%0.0389653 + 2\left(  -\sqrt{\frac{d(u_2)+4}{ 6 d(u_2)}} + \sqrt{\frac{d(u_2)+ 4}{ 5 (d(u_2)+1)}} \right)
%-\sqrt{\frac{1}{ 2}} +\sqrt{\frac{d(u_2)+2}{ 3 (d(u_2)+1)}}.
\eeq
%
%Notice that the tree obtain after applying $\mathcal{T}_{12}$ is not necessarily a minimal ABC-tree,
%since it may not be a greedy tree (as it is the case with the example in Figure~\ref{fig-case1_2}).

\bigskip
\noindent
{\bf Subcase~$1.3.$} $u_1$ is a parent vertex of $v_1$, $v_2$, and  $v_3$.

\smallskip
\noindent
Similarly, as in the previous two subcases, we apply the transformation $\mathcal{T}_{13}$ to $G$:
from each of $v_1$, $v_2$ and $v_3$ cut an adjacent pendant path $P_2$, 
construct a $B_2^*$-branch and attach it to $u_1$.
An example of this case with an illustration of the transformation $\mathcal{T}_{13}$ is given in Figure~\ref{fig-case1_3}.
\begin{figure}[h]
\begin{center}
%\vspace{-0.3cm}
\includegraphics[scale=0.750]{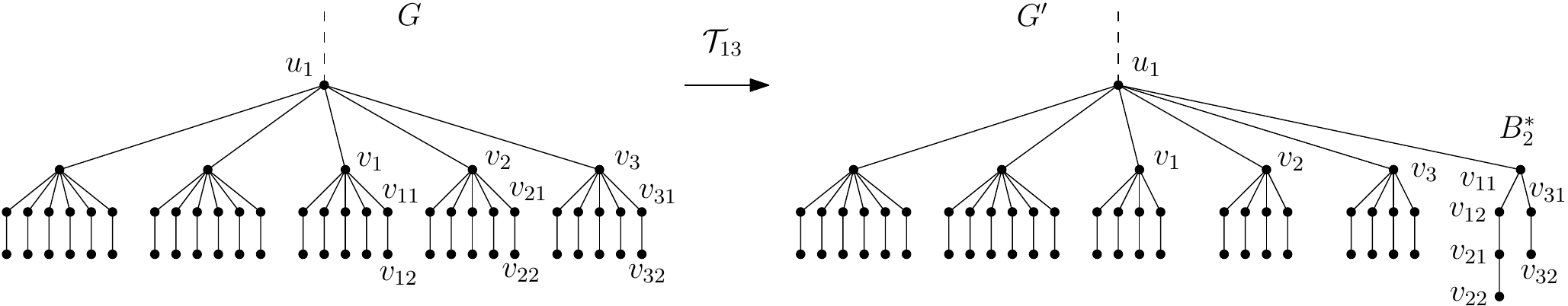}% [height=3.5cm,width=15cm]
\caption{Transforamation $\mathcal{T}_{13}$ from Subcase~$1.3.$}
%\label{Unicyclic-Max-M3}
\label{fig-case1_3}
%\vspace{-0.3cm}
\end{center}
\end{figure}
After applying $\mathcal{T}_{13}$ the degrees of $v_1$, $v_2$ and $v_3$ decrease by one, while
the degrees of $u_1$ and $v_{11}$ increase by one. The degrees of the rest of the vertices of $G$,
remain unchanged. 
%Also a degree of 
%Also a degree of one vertex of degree $2$ increases to $3$, the degree of the root-vertex 
%of the newly created $B_2^*$-branch.
%
\noindent
The change of the ABC index caused between $u_1$ and a vertex $w$, adjacent to $u_1$ and different than $v_1$ , $v_2$ and $v_3$,
is:
\beq
-f(d(u_1), d(w)) + f(d(u_1)+1, d(w))  \nonumber
%-\sqrt{\frac{d(u_2)+d(w)-2}{d(u_2) d(u_2)}}+\sqrt{\frac{d(u_2)+1 +d(w)-2}{(d(u_2)+1)d(w)}},
\eeq
which by Proposition~\ref{appendix-pro-020} is non-positive for $d(w) \geq 2$, and strictly negative for $d(w) > 2$.
The change of the ABC index between $u_1$  and $v_1$ is:
\beq \label{expression-100}
f(d(u_1), d(v_1)) + f(d(u_1)+1, d(v_1)-1)
%-\sqrt{\frac{d(u_2)+d(v_2)-2}{ d(u_2) d(v_2)}}+\sqrt{\frac{d(u_2)+ d(v_2) -2}{ (d(u_2)+1) (d(v_2)-1)}}.
\eeq
By Proposition~\ref{appendix-pro-020}, the last expressions decreases in $d(v_1)$, i.e., it reaches it maximum for $d(v_1)=6$.
Thus the upper bound for (\ref{expression-100}) is 
\beq \label{expression-110}
-f(d(u_1), 6) + f(d(u_1)+1, 5) 
%-\sqrt{\frac{d(u_2)+6-2}{ 6 d(u_2)}} + \sqrt{\frac{d(u_2)+ 6 -2}{ 5 (d(u_2)+1)}}.
\eeq
Simillarly, we obtain that (\ref{expression-100}) is an upper bound for  the change of ABC index between 
$u_1$  and $v_2$, and $u_1$  and $v_3$.
Additionall, there is a change of the ABC index caused by  $v_{11}$
%between $u_2$ and $B_2^*$-branch 
which is:
\beq \label{expression-120}
-f(d(v_1), d(v_{11})) + f(d(u_1)+1, d(v_{11}+1) ) =
-\sqrt{\frac{1}{ 2}} + f(d(u_2)+1, 3) 
%-\sqrt{\frac{1}{ 2}} +\sqrt{\frac{d(u_2)+1+ 3 -2}{ 3 d(u_2)+1)}}.
\eeq
Thus, from (\ref{expression-110})  and (\ref{expression-120}), it follows that the total change of the ABC index of
$G$ after applying the transformation $\mathcal{T}_{11}$ is at most
\beq \label{expression-130}
 3\left(  -f(d(u_1), 6) + f(d(u_1)+1, 5) \right)
-\sqrt{\frac{1}{ 2}} +f(d(u_1)+1, 3).
%0.0389653 + 2\left(  -\sqrt{\frac{d(u_2)+4}{ 6 d(u_2)}} + \sqrt{\frac{d(u_2)+ 4}{ 5 (d(u_2)+1)}} \right)
%-\sqrt{\frac{1}{ 2}} +\sqrt{\frac{d(u_2)+2}{ 3 (d(u_2)+1)}}.
\eeq
 By Proposition~\ref{appendix-pro-050}, 
 $3\left(  -f(d(u_1), 6) + f(d(u_1)+1, 5) \right) +f(d(u_1)+1, 3)$  increases in $d(u_1)$, so the upper bound of
 the sum in (\ref{expression-130}) is 
%The expression (\ref{expression-120}), by Lemma , increases in $d(u_1)$, so its upper bound is 
%
\beq \label{expression-140}
\lim_{  d(u_2) \to \infty} &&   3\left(  -f(d(u_1), 6) + f(d(u_1)+1, 5) \right)
-\sqrt{\frac{1}{ 2}} +f(d(u_1)+1, 3)
 = \nonumber \\ 
 \lim_{  d(u_1) \to \infty} && 3\left(  -\sqrt{\frac{d(u_1)+4}{ 6 d(u_1)}} + \sqrt{\frac{d(u_1)+ 4}{ 5 (d(u_1)+1)}} \right)
-\sqrt{\frac{1}{ 2}} +\sqrt{\frac{d(u_1)+2}{ 3 (d(u_1)+1)}}  = -0.0128606. \nonumber 
\eeq
Thus, we have shown that  after applying the transformation $\mathcal{T}_{13}$
 the change of the ABC index of $G$ is strictly negative.

Thus, applying iteratively $\mathcal{T}_{11}$, $\mathcal{T}_{12}$ and $\mathcal{T}_{13}$, we  obtain a tree
that have at most two $B_{k \geq 5}$-branches and has smaller ABC-index than $G$.

Notice that $G'$, a tree obtain after applying $\mathcal{T}_{11}$,  $\mathcal{T}_{12}$ or $\mathcal{T}_{13}$,
is not necessarily a minimal ABC-tree,
since it may not be a greedy tree (as it is the case with the examples in Figures~\ref{fig-case1_1}, ~\ref{fig-case1_2}, and ~\ref{fig-case1_3}).
In that case, one can transform $G'$ into a minimal ABC-tree with a same degree sequence as $G'$ by Theorem~\ref{thm-DS}.

\bigskip
\noindent
In the following two cases (Case~$2$ and $3$), we will take in the account the result from Proposition~\ref{pro-05}
that in a minimal-ABC tree there is no vertex that has simultaneously a $B_1$-branch and a $B_{k \geq 5}$-branch as its children.

\bigskip
\noindent
{\bf Case~$2.$} 
 $G$ has two  $B_{k \geq 5}$-branches.
 
\smallskip
\noindent
 Denote the root vertices of the $B_{k \geq 5}$-branches by $v_1$ and $v_2$.
 The vertices  $v_1$ and $v_2$ may have different parent vertices, denoted by $u_1$ and $u_2$, 
 or they may have the same parent vertex, denoted by $u_1$.
 The two cases we analyze separately.
 We assume that $d(u_1) \geq d(u_2)$ and $d(v_1) \geq d(v_2)$.

\bigskip
\noindent
{\bf Subcase~$2.1.$} The vertex $v_1$ is a child $u_1$ and the vertex $v_2$ is a child  of $u_2$.

%\begin{claim}
%$u_2$ does not have a child-vertex of degree $2$.
%\end{claim}

\smallskip
\noindent
{\bf Subcase~$2.1.1.$}  The vertex $u_2$  has a child $v_3$ of degree $3$ or $4$.

\noindent
Apply the following transformation $\mathcal{T}_{211}$ to $G$:
from $v_2$ cut an adjacent pendant path $P_2$, 
and attach it to $v_3$.
An example of this case with an illustration of the transformation $\mathcal{T}_{211}$, when $d(v_3)=3$, is given in Figure~\ref{fig-case2_11}.
%Observe that $u_1$ and $u_2$ can belong to different levels of $G$, for example as in Figure~\ref{fig-case1_1}.
\begin{figure}[h]
\begin{center}
%\vspace{-0.3cm}
\includegraphics[scale=0.750]{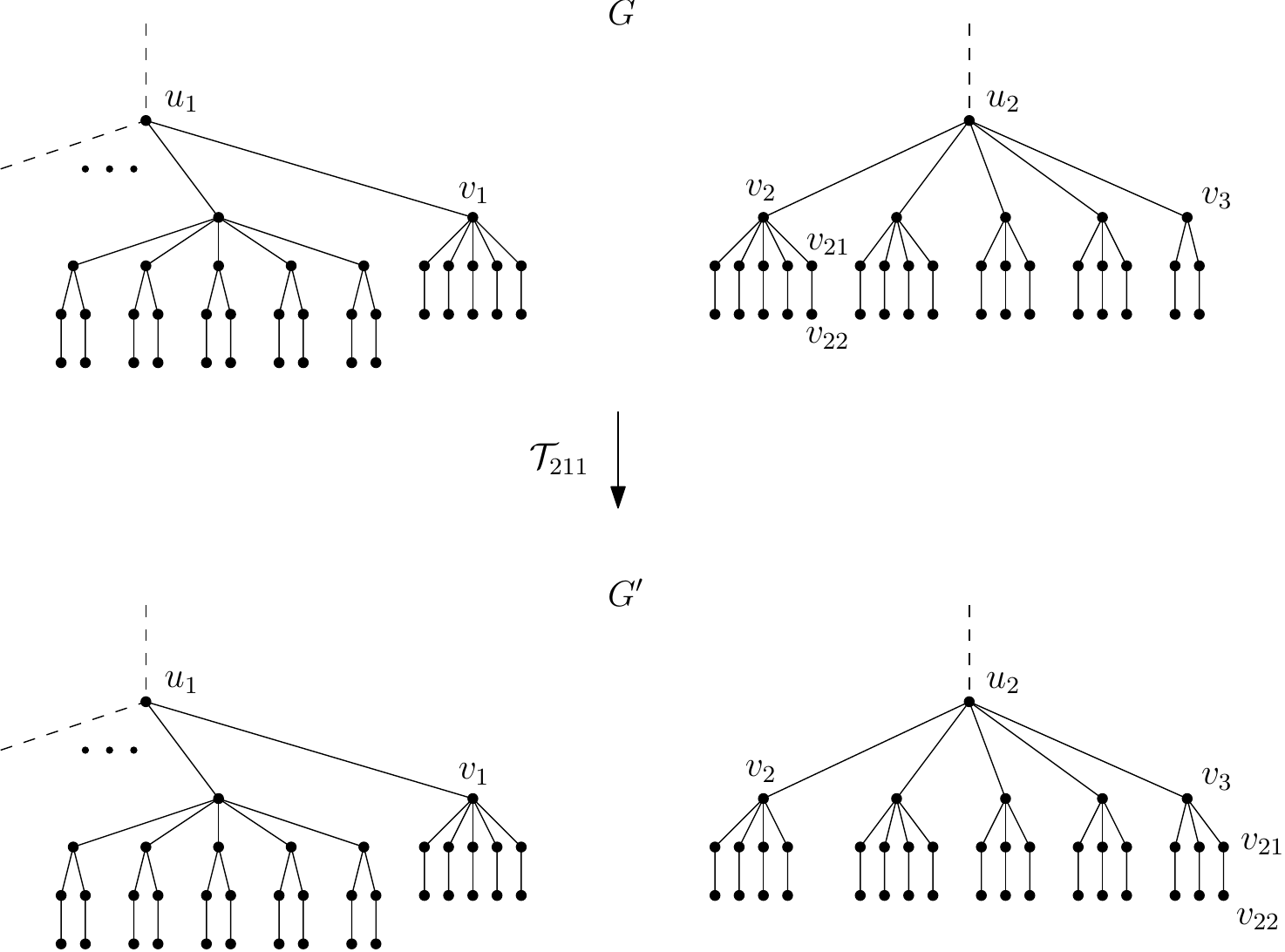}% [height=3.5cm,width=15cm]
\caption{Transforamation $\mathcal{T}_{211}$ from Subcase~$2.1.1.$}
%\label{Unicyclic-Max-M3}
\label{fig-case2_11}
%\vspace{-0.3cm}
\end{center}
\end{figure}
After applying $\mathcal{T}_{211}$ the degree of $v_2$ decreases by one, while
the degree of $v_3$ increases by one. The degrees of the rest of the vertices of $G$  remain unchanged. 
The change of the ABC index is
\beq \label{expression-150}
  -f(d(u_2), d(v_2)) + f(d(u_2), d(v_2)-1) 
   -f(d(u_2), d(v_3)) + f(d(u_2), d(v_3)+1).
%-\sqrt{\frac{d(u_2)+d(w)-2}{d(u_2) d(u_2)}}+\sqrt{\frac{d(u_2)+1 +d(w)-2}{(d(u_2)+1)d(w)}},
\eeq
By Proposition~\ref{appendix-pro-030},  $-f(d(u_2), d(v_2)) + f(d(u_2), d(v_2)-1)$
decreases in  $d(v_2)$, thus, the expression (\ref{expression-150}) is maximal for $d(v_2)=6$.
Due to the symmetry of the function $f$, 
$-f(d(u_2), d(v_3)) + f(d(u_2), d(v_3)+1) = -f(d(v_3), d(u_2)) + f(d(v_3)+1,d(u_2))$,
and by Proposition~\ref{appendix-pro-020}, it increases in  $d(v_3)$. Since $d(v_3)$ is $3$ or $4$, we  take $d(v_3)=4$ and
\beq \label{expression-160}
  &&-f(d(u_2), 6) + f(d(u_2), 5) 
   -f(d(u_2), 4) + f(d(u_2), 5), \quad \text{or} \nonumber \\
   &&-f(4, d(u_2)) + f(5, d(u_2)) -(-f(d(5, u_2)) + f(6, d(u_2)))
%-\sqrt{\frac{d(u_2)+d(w)-2}{d(u_2) d(u_2)}}+\sqrt{\frac{d(u_2)+1 +d(w)-2}{(d(u_2)+1)d(w)}},
\eeq
as an upper bound on (\ref{expression-150}).
By Proposition~\ref{appendix-pro-020}, it follows that
$-f(d(u_2), 5) + f(d(u_2), 6) > -f(d(u_2), 4) + f(d(u_2), 5)$, and thus, the expression (\ref{expression-160}),
and consequently (\ref{expression-150}), are negative.

\smallskip
\noindent
{\bf Subcase~$2.1.2.$}  The children of vertex $u_2$, different than $v_2$,  have degrees $5$.

\noindent
In this case, apply the following transformation $\mathcal{T}_{212}$ to $G$:
from $v_1$, $v_2$, and two children vertices of $u_2$ (denoted by $v_3$ and $v_4$), 
cut an adjacent pendant path $P_2$, form a $B_3^*$-branch
and attach it to $u_2$.
An example of this case with an illustration of the transformation $\mathcal{T}_{212}$ is given in Figure~\ref{fig-case2_12}.
%Observe that $u_1$ and $u_2$ can belong to different levels of $G$, for example as in Figure~\ref{fig-case1_1}.
\begin{figure}[h]
\begin{center}
%\vspace{-0.3cm}
\includegraphics[scale=0.750]{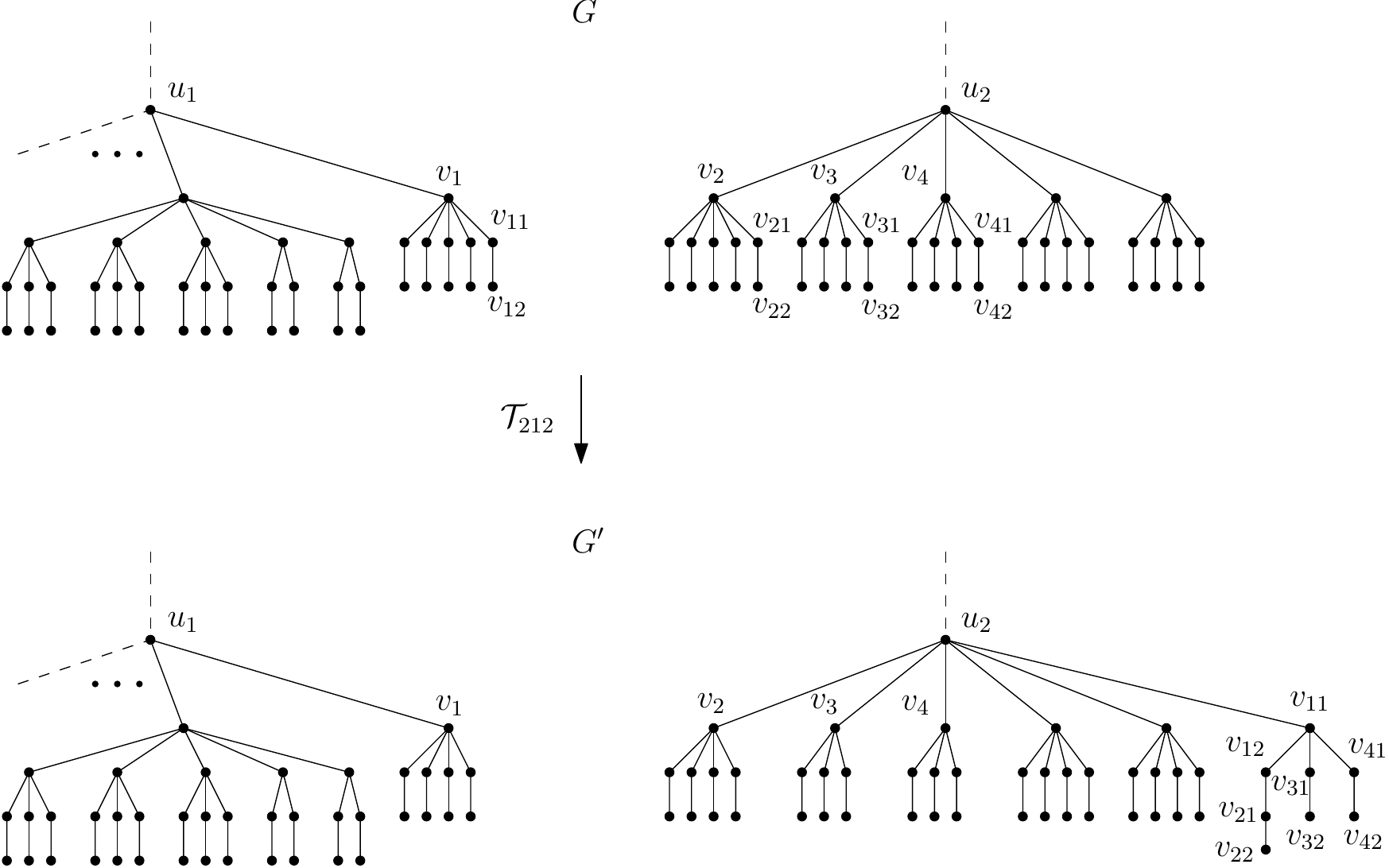}% [height=3.5cm,width=15cm]
\caption{Transforamation $\mathcal{T}_{212}$ from Subcase~$2.1.2.$}
%\label{Unicyclic-Max-M3}
\label{fig-case2_12}
%\vspace{-0.3cm}
\end{center}
\end{figure}
After applying $\mathcal{T}_{212}$ the degrees of $v_1$,$v_2$,  $v_3$ and $v_4$ decrease by one, 
the degree of $u_2$ increases by one, and the degree of one child vertex of $v_1$, denoted by $v_{11}$, increases by two. 
The degrees of the rest of the vertices of $G$  remain unchanged. 
The change of the ABC index is
\beq \label{expression-170}
 && -f(d(u_1), d(v_1)) + f(d(u_1), d(v_1)-1) 
   -f(d(u_2), d(v_2)) + f(d(u_2), d(v_2)-1)  \nonumber \\ 
 &&   +2(-f(d(u_2), 5) + f(d(u_2), 4) )
   -f(d(v_1), 2) + f(d(u_2), 4). 
%-\sqrt{\frac{d(u_2)+d(w)-2}{d(u_2) d(u_2)}}+\sqrt{\frac{d(u_2)+1 +d(w)-2}{(d(u_2)+1)d(w)}},
\eeq
By Proposition~\ref{appendix-pro-030},  $-f(d(u_1), d(v_1)) + f(d(u_1), d(v_1)-1) $ 
(resp. $-f(d(u_2), d(v_2)) + f(d(u_2), d(v_2)-1)$) decreases in $d(v_1)$ (resp. $d(v_2)$),
therefore, (\ref{expression-170}) is maximal for $d(v_1)=d(v_2)=6$.
Also by Proposition~\ref{appendix-pro-030},  $-f(d(u_1), d(v_1)) + f(d(u_1), d(v_1)-1)$ increases
in $d(u_1)$, and   (\ref{expression-170}) is maximal for $d(u_1) \to \infty$.
Thus,
\beq \label{expression-180}
&& \lim_{d(u_1) \to \infty} (-f(d(u_1), 6) + f(d(u_1), 5)) = 0.0389653,  \quad \text{and} \nonumber \\
%\eeq
%\beq \label{expression-190}
 &&0.0389653 -f(d(u_2), 6) + f(d(u_2), 5)  +2(-f(d(u_2), 5) + f(d(u_2), 4) ) -\sqrt{\frac{1}{2}}+ f(d(u_2), 4)   \nonumber \\
 =&-&f(d(u_2), 6) + f(d(u_2), 5)  +2(-f(d(u_2), 5) + f(d(u_2), 4) ) + f(d(u_2), 4)  - 0.668141
\eeq
is an upper bound on (\ref{expression-170}).
The first derivative of the function
$
 g(d(u_2))=-f(d(u_2), 6) + f(d(u_2), 5)+2(-f(d(u_2), 5) + f(d(u_2), 4)) + f(d(u_2), 4)
$
after a simplification is
\beq \label{expression-200}
\ds \frac{d \, g(d(u_2))}{d \ d(u_2)} = \frac{-\frac{45}{\sqrt{\frac{2+d(u_2)}{d(u_2)}}}+\frac{9 \sqrt{5}}{\sqrt{\frac{3+d(u_2)}{d(u_2)}}}+\frac{10 \sqrt{6}}{\sqrt{\frac{4+d(u_2)}{d(u_2)}}}}{30 d(u_2)^2}. \nonumber
\eeq
It holds that
\beq \label{expression-210}
-\frac{45}{\sqrt{\frac{2+d(u_2)}{d(u_2)}}}+\frac{9 \sqrt{5}}{\sqrt{\frac{3+d(u_2)}{d(u_2)}}}+\frac{10 \sqrt{6}}{\sqrt{\frac{4+d(u_2)}{d(u_2)}}}
< -\frac{45}{\sqrt{\frac{3+d(u_2)}{d(u_2)}}}+\frac{9 \sqrt{5}}{\sqrt{\frac{3+d(u_2)}{d(u_2)}}}+\frac{10 \sqrt{6}}{\sqrt{\frac{3+d(u_2)}{d(u_2)}}} < 0, \nonumber
\eeq
form which it follows that  the expression (\ref{expression-180}) decreases in $d(u_2)$, and reaches it maximum for $d(u_2)=6$.
So, the upper bound on (\ref{expression-180}), and therefore for (\ref{expression-170}) is
$$
-f(6,6)+f(6,5)+2(-f(6,5)+f(6,4))+f(6,4)-0.668141 =-0.0108595.
$$
Thus, we have shown that the change of the ABC index in this case is negative.

\bigskip
\noindent
{\bf Subcase~$2.2.$} $v_1$ and $v_2$ are children of same vertex  $u_1$.

\smallskip
\noindent
{\bf Subcase~$2.2.1.$}  The vertex $u_1$  has a child $v_3$ of degree $3$ or $4$.

\noindent
Here, we apply the following transformation $\mathcal{T}_{221}$ to $G$:
from $v_1$ cut an adjacent pendant path $P_2$, 
and attach it to $v_3$.
An example of this case with an illustration of the transformation $\mathcal{T}_{221}$ is given in Figure~\ref{fig-case2_21}.
%Observe that $u_1$ and $u_2$ can belong to different levels of $G$, for example as in Figure~\ref{fig-case1_1}.
\begin{figure}[h]
\begin{center}
%\vspace{-0.3cm}
\includegraphics[scale=0.750]{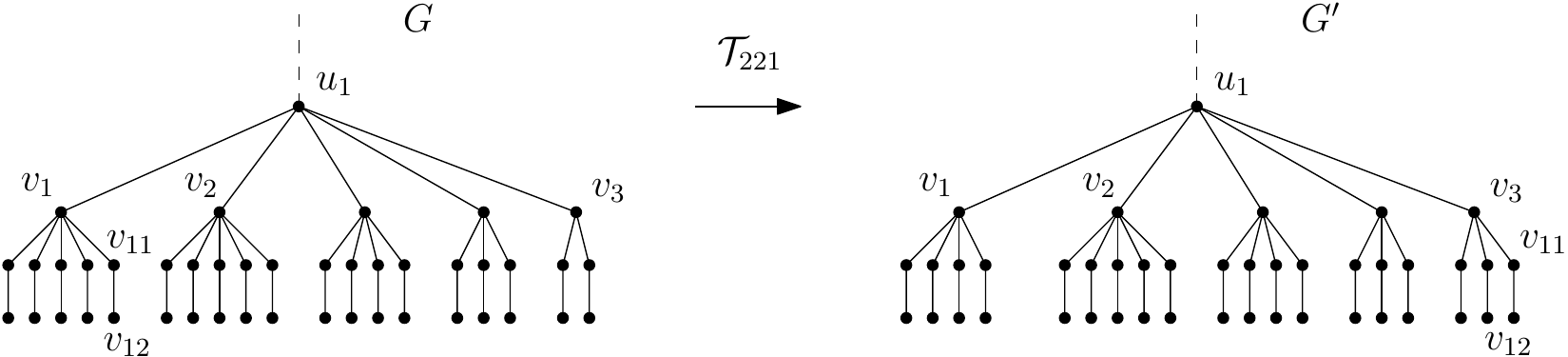}% [height=3.5cm,width=15cm]
\caption{Transforamation $\mathcal{T}_{221}$ from Subcase~$2.2.1.$}
%\label{Unicyclic-Max-M3}
\label{fig-case2_21}
%\vspace{-0.3cm}
\end{center}
\end{figure}
After applying $\mathcal{T}_{221}$ the degree of $v_1$  decreases by one, while
the degrees of $v_3$ increases by one. The degrees of the rest of the vertices of $G$  remain unchanged. 
The change of the ABC index is
\beq \label{expression-220}
  -f(d(u_1), d(v_1)) + f(d(u_1), d(v_1)-1) 
   -f(d(u_1), d(v_3)) + f(d(u_1), d(v_3)+1).
%-\sqrt{\frac{d(u_2)+d(w)-2}{d(u_2) d(u_2)}}+\sqrt{\frac{d(u_2)+1 +d(w)-2}{(d(u_2)+1)d(w)}},
\eeq
Observe that (\ref{expression-220}) is very similar to (\ref{expression-150}), only the role of
$d(u_1)$ and $d(u_2)$ are interchange. Therefore, we will omit the analysis in here, and 
just state the final conclusion that (\ref{expression-220}) is always negative.

\smallskip
\noindent
{\bf Subcase~$2.2.2.$}  The vertex $u_1$  does not have a child of degree $3$ and $4$.

\noindent
By Proposition~\ref{pro-05}, $u_1$  does not have a child of degree $2$, i.e.,
all children of $u_1$, except $v_1$ and $v_2$, have degrees $5$.
Now, apply the following transformation $\mathcal{T}_{222}$ to $G$:
from $v_1$, $v_2$, and two children vertices of $u_1$ (named by $v_3$ and $v_4$), 
cut an adjacent pendant path $P_2$, form a $B_3^*$-branch
and attach it to $u_1$.
An example of this case with an illustration of the transformation $\mathcal{T}_{222}$ is given in Figure~\ref{fig-case2_22}.
%Observe that $u_1$ and $u_2$ can belong to different levels of $G$, for example as in Figure~\ref{fig-case1_1}.
\begin{figure}[h]
\begin{center}
%\vspace{-0.3cm}
\includegraphics[scale=0.750]{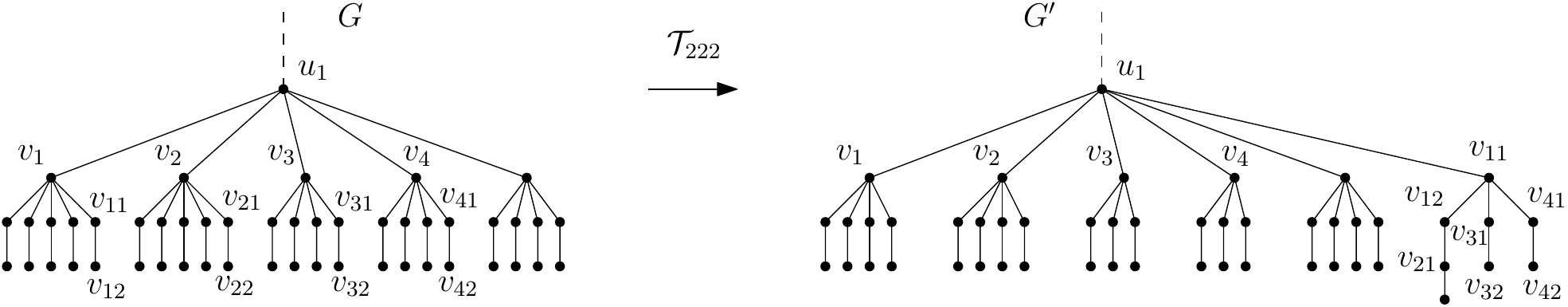}% [height=3.5cm,width=15cm]
\caption{Transforamation $\mathcal{T}_{222}$ from Subcase~$2.2.2.$}
%\label{Unicyclic-Max-M3}
\label{fig-case2_22}
%\vspace{-0.3cm}
\end{center}
\end{figure}
%
%After applying $\mathcal{T}_{222}$ the degrees of $v_1$ and $v_2$ decrease by one, while
%the degrees of $v_3$ increases by two. The degrees of the rest of the vertices of $G$  remain unchanged. 
After this transformation the degrees of $v_1$,$v_2$,  $v_3$ and $v_4$ decrease by one, 
the degree of $u_1$ increases by one, and the degree of one child vertex of $v_1$, denoted by $v_{11}$, increases by two. 
The degrees of the rest of the vertices of $G$  remain unchanged. 
The change of the ABC index here is
\beq \label{expression-230}
 && -f(d(u_1), d(v_1)) + f(d(u_1), d(v_1)-1) 
   -f(d(u_1), d(v_2)) + f(d(u_1), d(v_2)-1)  \nonumber \\ 
 &&   +2(-f(d(u_1), 5) + f(d(u_1), 4) )
   -f(d(v_1), 2) + f(d(u_1), 4). 
\eeq

\noindent
By Proposition~\ref{appendix-pro-030},  $-f(d(u_1), d(v_1)) + f(d(u_1), d(v_1)-1) $ 
(resp. $-f(d(u_1), d(v_2)) + f(d(u_1), d(v_2)$ $-1)$) decreases in $d(v_1)$ (resp. $d(v_2)$),
therefore, (\ref{expression-230}) is maximal for $d(v_1)=d(v_2)=6$.
Thus,
\beq \label{expression-240}
g(d(u_1)) &= &2(-f(d(u_1), 6) + f(d(u_1), 5))  +2(-f(d(u_1), 5) + f(d(u_1), 4) ) \nonumber \\ 
&& + f(d(u_1), 4)  -\sqrt{\frac{1}{2}}
\eeq
is an upper bound on (\ref{expression-230}).
The first derivative of $g(d(u_1)) $
after a simplification is
\beq \label{expression-250}
\ds \frac{d \, g(d(u_1))}{d  \, d(u_1)} = 
\frac{-\frac{9}{\sqrt{\frac{2+u_1}{u_1}}}+\frac{4 \sqrt{6}}{\sqrt{\frac{4+u_1}{u_1}}}}{6 u_1^2}, \nonumber
\eeq
which is negative for $u_1 < 8.8$, and positive for $u_1 > 8.8$, i.e.,
$g(d(u_1))$ decreases in $u_1$ when $u_1 \in \{6, 7, 8 \}$, and increases in $u_1$ when $u_1 \in [9,\infty)$.
Thus the upper bound on the expression $g(d(u_1))$, and therefore on (\ref{expression-230}) is
$$
\max \left(g(6), \lim_{d(u_1) \to \infty} g(d(u_1))\right) = \max \left(-0.0291485,  -0.0236034 \right) = -0.0236034,
$$
\noindent
Thus, we have shown that the change of the ABC index also in this case is negative.

\bigskip
\noindent
{\bf Case~$3.$} 
$G$ has one  $B_{k \geq 5}$-branch.

\noindent
We denote by  $v_1$ the root of the $B_{k \geq 5}$-branch, and by $u_1$ the parent vertex of $v_1$.
By Proposition~\ref{pro-05}, it follows that $u_1$  does not have a child of degree $2$, i.e.,
all children of $u_1$, except $v_1$, have degrees $3$, $4$, or $5$.

\smallskip
\noindent
{\bf Subcase~$3.1.$} The vertex $u_1$ has a children of degrees $3$ or $4$.

\noindent
Let $v_2$ be such a child of $u_1$ of degree $3$ or $4$.
Apply the following transformation $\mathcal{T}_{31}$ to $G$:
from $v_1$ cut an adjacent pendant path $P_2$, 
and attach it to $v_2$.
An example of this case with an illustration of the transformation $\mathcal{T}_{31}$, when $d(v_3)=3$, is given in Figure~\ref{fig-case3_1}.
%Observe that $u_1$ and $u_2$ can belong to different levels of $G$, for example as in Figure~\ref{fig-case1_1}.
\begin{figure}[h]
\begin{center}
%\vspace{-0.3cm}
\includegraphics[scale=0.750]{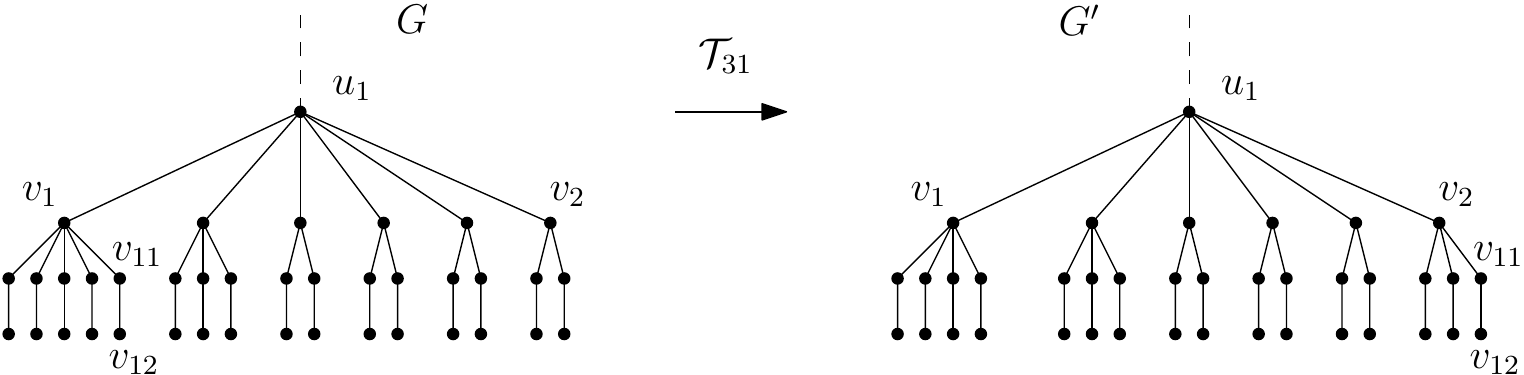}% [height=3.5cm,width=15cm]
\caption{Transforamation $\mathcal{T}_{31}$ from Subcase~$3.1.$}
%\label{Unicyclic-Max-M3}
\label{fig-case3_1}
%\vspace{-0.3cm}
\end{center}
\end{figure}
After applying $\mathcal{T}_{31}$ the degree $v_1$ decrease by one, while
the degree of $v_2$ increases by one. The degrees of the rest of the vertices of $G$  remain unchanged. 
The change of the ABC index is
\beq \label{expression-260}
  -f(d(u_1), d(v_1)) + f(d(u_1), d(v_1)-1) 
   -f(d(u_1), d(v_2)) + f(d(u_1), d(v_2)+1).
%-\sqrt{\frac{d(u_2)+d(w)-2}{d(u_2) d(u_2)}}+\sqrt{\frac{d(u_2)+1 +d(w)-2}{(d(u_2)+1)d(w)}},
\eeq
If in (\ref{expression-260}) we interchange $d(u_1)$ with $d(u_2)$, and $d(v_1)$ with $d(v_2)$,
then we obtain an expression identical to (\ref{expression-150}), which was shown to be negative.

\smallskip
\noindent
{\bf Subcase~$3.2.$}  All children vertices of $u_1$, except $v_1$, are of degree $5$.

\noindent
Apply the following transformation $\mathcal{T}_{32}$ to $G$:
from $v_1$, and three children vertices of $u_2$ (named by $v_2$, $v_3$ and $v_4$), 
cut an adjacent pendant path $P_2$, form a $B_3^*$-branch
and attach it to $v_3$.
An example of this case with an illustration of the transformation $\mathcal{T}_{32}$ is given in Figure~\ref{fig-case3_2}.
%Observe that $u_1$ and $u_2$ can belong to different levels of $G$, for example as in Figure~\ref{fig-case1_1}.
\begin{figure}[h]
\begin{center}
%\vspace{-0.3cm}
\includegraphics[scale=0.750]{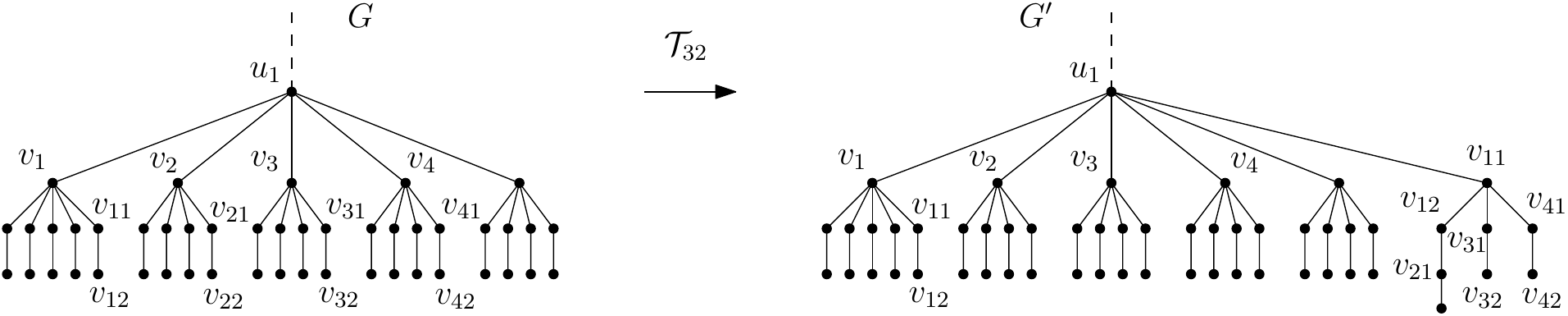}% [height=3.5cm,width=15cm]
\caption{Transforamation $\mathcal{T}_{32}$ from Subcase~$3.2.$}
%\label{Unicyclic-Max-M3}
\label{fig-case3_2}
%\vspace{-0.3cm}
\end{center}
\end{figure}
%
%After applying $\mathcal{T}_{222}$ the degrees of $v_1$ and $v_2$ decrease by one, while
%the degrees of $v_3$ increases by two. The degrees of the rest of the vertices of $G$  remain unchanged. 
After this transformation the degrees of $v_1$,$v_2$,  $v_3$ and $v_4$ decrease by one, 
the degree of $u_1$ increases by one, and the degree of one child vertex of $v_1$, denoted by $v_{11}$, increases by two. 
The degrees of the rest of the vertices of $G$  remain unchanged. 
The change of the ABC index here is
\beq \label{expression-280}
 && -f(d(u_1), d(v_1)) + f(d(u_1), d(v_1)-1)   +3(-f(d(u_1), 5) + f(d(u_1), 4) ) \nonumber \\ 
 &&  -f(d(v_1), 2) + f(d(u_1), 4). 
\eeq

\noindent
By Proposition~\ref{appendix-pro-030},  $-f(d(u_1), d(v_1)) + f(d(u_1), d(v_1)-1)$ 
decreases in $d(v_1)$,
therefore, (\ref{expression-280}) is maximal for $d(v_1)=6$.
Thus,
\beq \label{expression-240}
g(d(u_1)) & = & -f(d(u_1), 6) + f(d(u_1), 5)  +3(-f(d(u_1), 5) + f(d(u_1), 4) ) \nonumber \\ 
&& + f(d(u_1), 4)  -\sqrt{\frac{1}{2}}  \nonumber
\eeq
is an upper bound on (\ref{expression-280}).
The first derivative of $g(d(u_1)) $
after a simplification is
\beq \label{expression-250}
\ds \frac{d g(d(u_1))}{d d(u_1)} = 
\frac{-\frac{30}{\sqrt{\frac{2+u_1}{u_1}}}+\frac{9 \sqrt{5}}{\sqrt{\frac{3+u_1}{u_1}}}+\frac{5 \sqrt{6}}{\sqrt{\frac{4+u_1}{u_1}}}}{15 u_1^2}, \nonumber
\eeq
which is negative for $u_1 \leq 6.27567$, and positive for $u_1 > 6.27567$, i.e.,
$g(d(u_1)$ decreases in $u_1$ when $u_1 =6$, and increases in $u_1$ when $u_1 \in [7,\infty)$.
Thus the upper bound on $g(d(u_1))$, and therefore on (\ref{expression-280}) is
$$
\max \left(g(6), \lim_{d(u_1) \to \infty} g(d(u_1))\right) = \max \left(-0.0201971,  -0.00978226 \right) = -0.00978226,
$$
\noindent
Hence, we have shown that the change of the ABC index also in this case is negative.
This conclude the proof of the theorem.
\end{proof}

\noindent
The next result  consider a (non)coexistence of some types of $B_k$-branches that have a common parent vertex. 
The result will be used in the proof of Theorem~\ref{thm-20}.

\begin{lemma} \label{lemma-10}
A minimal-ABC tree does not contain 
\begin{enumerate}
\item[(a)] a $B_1$-branch and a $B_4$-branch,
\item[(b)] a $B_2$-branch and a $B_4$-branch, 
\end{enumerate}
that have a common parent vertex.
\end{lemma}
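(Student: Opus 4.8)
The plan is to argue by contradiction using a single balancing transformation for both parts. Suppose a minimal-ABC tree $G$ contains a $B_4$-branch (root $v$, so $d(v)=5$) together with either a $B_1$-branch (root $w$, $d(w)=2$) for part~(a), or a $B_2$-branch (root $w$, $d(w)=3$) for part~(b), where $v$ and $w$ share the common parent $u$. In each case I would cut one pendant path $P_2$ off the $B_4$-branch and reattach it to $w$; this turns the pair $(B_4,B_1)$ into $(B_3,B_2)$ in part~(a), and $(B_4,B_2)$ into $(B_3,B_3)$ in part~(b). The resulting graph $G'$ is again a tree on the same number of vertices, so it suffices to show $\ABC(G')-\ABC(G)<0$, contradicting the minimality of $G$. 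This is the only admissible direction, since moving an arm \emph{onto} the $B_4$-branch would create a $B_5$-branch, already excluded by Theorem~\ref{te-no5branches-10}.

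The observation that makes the computation collapse is that every edge incident to a degree-two vertex contributes the same value, because $f(d(x),2)=\sqrt{(d(x)+2-2)/(2\,d(x))}=\sqrt{1/2}$ independently of $d(x)$. All the ``arm'' edges of the branches (root to a degree-two path-vertex) therefore contribute $\sqrt{1/2}$ both before and after the transformation, and cancel. After the transformation only the degrees of $v$ (from $5$ to $4$) and of $w$ (from $2$ to $3$ in~(a), from $3$ to $4$ in~(b)) change, so the only surviving terms come from the edges $uv$ and $uw$. Writing $d=d(u)$, this yields
\[
\ABC(G')-\ABC(G)=\big(f(d,4)-f(d,5)\big)+\big(f(d,3)-f(d,2)\big)
\]
for part~(a), and
\[
\ABC(G')-\ABC(G)=\big(f(d,4)-f(d,5)\big)+\big(f(d,4)-f(d,3)\big)
\]
for part~(b). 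Hence (a) reduces to $f(d,3)+f(d,4)<f(d,2)+f(d,5)$ and (b) reduces to $2f(d,4)<f(d,3)+f(d,5)$.

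Both inequalities follow from the strict convexity of $y\mapsto f(d,y)$ for $d\ge 3$: part~(a) is the majorization inequality for the pairs $(2,5)\succ(3,4)$, and part~(b) is midpoint convexity at $y=4$. Convexity in turn is a short derivative computation, since $f(d,y)^2=1/d+(1-2/d)/y$ with $1-2/d>0$ whenever $d>2$, and can equally be read off from the monotonicity properties in the appendix. It therefore remains only to check $d(u)\ge 3$. Being the common parent of $v$ and $w$, the vertex $u$ has at least two children, so $d(u)\ge 3$ unless $u$ is the root with exactly two children and $d(u)=2$; but then in part~(b) the path $v\,u\,w$ is an internal path of length $2$, forbidden by Theorem~\ref{thm-GFI-10}, while in part~(a) the vertices $u,w$ and the $B_1$-arm form a pendant path of length $4$ attached at $v$, forbidden by Theorem~\ref{thm-LG-10}. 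This excludes $d(u)=2$ and closes both cases.

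The main obstacle I anticipate is not the inequality itself---the identity $f(d(x),2)=\sqrt{1/2}$ reduces everything to a one-variable convexity statement---but rather the bookkeeping guaranteeing that nothing else changes: one must verify that the moved $P_2$ keeps its leaf edge intact, that $w$'s existing arms and $v$'s remaining arms still meet only degree-two vertices (so their edge values stay $\sqrt{1/2}$), and that the degenerate case $d(u)=2$ is genuinely ruled out by the earlier structural theorems rather than merely assumed away.
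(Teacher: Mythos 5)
Your proof is correct, and for part (a) it takes a genuinely different and cleaner route than the paper. Part (b) coincides with the paper's argument: move one $P_2$ arm from the $B_4$-branch to the $B_2$-branch, observe that every arm edge contributes $\sqrt{1/2}$ before and after so only the edges $uv$ and $uw$ change, and conclude from $-f(d,5)+f(d,4)<-f(d,4)+f(d,3)$ (the paper invokes Proposition~\ref{appendix-pro-030} for exactly this). For part (a), however, the paper moves \emph{two} arms, turning the pair $(B_4,B_1)$ into $(B_2,B_2)$; the resulting change $2f(d,3)-f(d,5)-f(d,2)$ is increasing in $d=d(u)$ and vanishes at $d=242$, so the paper is forced into a case split and a second transformation $\mathcal{T}_{a_2}$ (absorbing the $B_1$-branch into the $B_4$-branch to form a $B_5$-branch), whose analysis occupies most of that proof. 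Your one-arm move $(B_4,B_1)\to(B_3,B_2)$ yields instead $f(d,4)+f(d,3)-f(d,5)-f(d,2)$, which is strictly negative uniformly in $d>2$: this is the majorization inequality for $(2,5)\succ(3,4)$ under convexity of $y\mapsto f(d,y)$, and that convexity does check out (the second derivative of $\sqrt{c+a/y}$ is $\tfrac{a}{2}y^{-3}(c+a/y)^{-3/2}(2c+\tfrac{3a}{2y})>0$ for $a,c>0$; note the square root of a convex function is not automatically convex, so this computation, or a double application of the monotonicity in $y$ from Proposition~\ref{appendix-pro-030}, genuinely has to be done). So you buy a uniform one-line inequality where the paper needs a threshold at $d(u)=242$ and a second, more delicate estimate. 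Your exclusion of the degenerate case $d(u)=2$ via Theorems~\ref{thm-GFI-10} and~\ref{thm-LG-10} is careful and correct, and is a point the paper passes over silently. One minor quibble: your remark that moving an arm \emph{onto} the $B_4$-branch is ``inadmissible'' because it would create a $B_5$-branch is not a valid argument---a decreasing transformation is allowed to produce forbidden configurations, and the paper's own $\mathcal{T}_{a_2}$ does precisely that---but nothing in your proof depends on this remark.
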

\begin{figure}[h!]
\begin{center}
%\vspace{-0.3cm}
\includegraphics[scale=0.750]{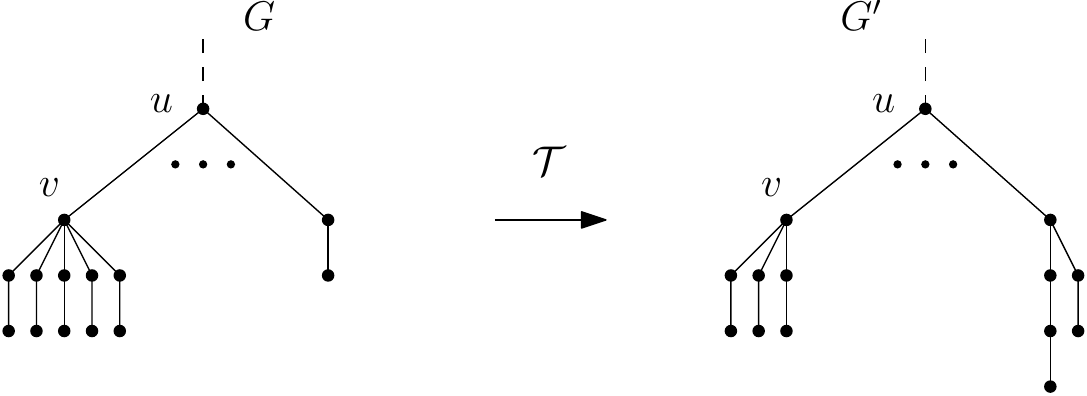}% [height=3.5cm,width=15cm]
\caption{Transforamations $\mathcal{T}_{a_1}$ and  $\mathcal{T}_{a_2}$ from Lemma~\ref{lemma-10}$(a)$.}
%\label{Unicyclic-Max-M3}
\label{fig-common_vertex_P2_B4}
%\vspace{-0.3cm}
\end{center}
\end{figure}
\begin{proof}

\smallskip
\noindent
$(a)$
Denote by $u$ the common vertex of the $B_1$-branch and the $B_4$-branch. If $d(u) \leq 241$ consider the transformation
$\mathcal{T}_{a_1}$ depicted in Figure~\ref{fig-common_vertex_P2_B4}. The change of ABC index after applying 
this transformation is
\beq \label{eq-lemma-B4-10}
ABC(G') - ABC(G) = -f(d(u), 5) + f(d(u), 3) -f(d(u), 2) + f(d(u), 3).
\eeq 
The first derivative of the above expression after a simplification is
\beq \label{eq-lemma-B4-15}
\frac{9 \sqrt{5} \sqrt{\frac{1+d(u)}{d(u)}}-10 \sqrt{3} \sqrt{\frac{3+d(u)}{d(u)}}}{30 d(u)^2 \sqrt{\frac{1+d(u)}{d(u)}} \sqrt{\frac{3+d(u)}{d(u)}}}.
\eeq
It is easy to show that $9 \sqrt{5} \sqrt{(1+d(u))/d(u)}-10 \sqrt{3} \sqrt{(3+d(u))/d(u)}$, 
is positive for $d(u) \geq 2$.
Hence, (\ref{eq-lemma-B4-15}) is positive also, and the difference $ABC(G') - ABC(G)$ from (\ref{eq-lemma-B4-10}) is increasing function in $d(u)$.
It is equal to zero for $d(u)=242$.

For $d(u) \geq 242$ consider the transformation $\mathcal{T}_{a_2}$ depicted in Figure~\ref{fig-common_vertex_P2_B4}.
Let $x$ be a parent vertex of $u$, and $y_i$, $i=1, \dots , d(u)-3$, are the children 
vertices of $u$ different than $v$ and $w$.
Then, the change of ABC index after applying $\mathcal{T}_{a_2}$ is
\beq \label{eq-lemma-B4-20}
ABC(G') - ABC(G) &= &-f(d(u), 5) + f(d(u)-1, 6) -f(d(u), d(x)) + f(d(u)-1, d(x)) \nonumber \\
                             & & +\sum_{i=1}^{d(u)-3}(-f(d(u), y_i) + f(d(u)-1, y_i)).
\eeq 
If $u$ is the root vertex of $G$, then the change of the 
ABC index after applying $\mathcal{T}_{a_2}$ is
\beq \label{eq-lemma-B4-30}
ABC(G') - ABC(G) = -f(d(u), 5) + f(d(u), 6)   +\sum_{i=1}^{d(u)-2}(-f(d(u), y_i) + f(d(u)-1, y_i)).
\eeq
By Propostion~\ref{appendix-pro-040},  $-f(d(u), d(x)) + f(d(u)-1, d(x))$ (resp. $-f(d(u), y_i) + f(d(u)-1, y_i)$)  increases in $x$ (resp. $y_i$), 
for $i=1, \dots , d(u)-2$.
Since $x \geq y_i$ it follows that the difference (\ref{eq-lemma-B4-20}) is at least so large as the difference  (\ref{eq-lemma-B4-30}).
To show that both differences are negative, it suffices to show that the difference  $ABC(G') - ABC(G)$ in (\ref{eq-lemma-B4-20}) is negative.
We have
\beq \label{eq-lemma-B4-40}
ABC(G') - ABC(G) &\leq &-f(d(u), 5) + f(d(u)-1, 6) \nonumber \\
                             & & +(d(u)-2)(-f(d(u), d(x)) + f(d(u)-1, d(x))) \nonumber \\
                             &=& f(d(x),d(u)) .
\eeq 
Because   $-f(d(u), d(x)) + f(d(u)-1, d(x))$ increases in $d(x)$, it follows that the difference (\ref{eq-lemma-B4-40}) is largest 
when $d(x) \to \infty$.
The partial derivative of (\ref{eq-lemma-B4-40}) with respect to $d(u)$, when  $d(x) \to \infty$, after a simplification, is
\beq \label{eq-lemma-B4-50}
&&\ds \lim_{d(x) \to \infty} \frac{\partial f(d(x),d(u))}{\partial{u}} =  \nonumber \\
&&\frac{1}{30} \left(15 \left(\frac{1}{u-1}\right)^{3/2} u-15 \left(\frac{1}{u}\right)^{3/2} (2+u)-\frac{10 \sqrt{6}}{(u-1)^{3/2} \sqrt{u+3} }+\frac{9 \sqrt{5}}{ u^{3/2} \sqrt{u+3} }\right)  \nonumber
\eeq
Straightforward verification shows that
$$
15 \left(\frac{1}{u-1}\right)^{3/2} u-15 \left(\frac{1}{u}\right)^{3/2} (2+u) <0
$$
for  $u \geq 5$.
Also, it is straightforward to show that 
$$
-\frac{10 \sqrt{6}}{(u-1)^{3/2} \sqrt{u+3} }+\frac{9 \sqrt{5}}{ u^{3/2} \sqrt{u+3} } \leq 0
$$
 for  any real $u$.
Therefore, for  $d(u) \geq 5$,
it follows that the difference   (\ref{eq-lemma-B4-40}) (with $d(x) \to \infty$)  is decreasing function with respect to $d(u)$,
and it is negative for $d(u) \geq 170$ (notice that it was sufficient to show that the
difference  (\ref{eq-lemma-B4-40}) is negative for $d(u) \geq 242$).

Thus, we have proven that the configuration $(a)$ from this lemma does not belong to a minimal-ABC tree,
since by the transformations $\mathcal{T}_{a_1}$ and $\mathcal{T}_{a_2}$ we have obtain a tree
$G'$ with smaller ABC index than $G$.

\smallskip
\noindent
$(b)$
Assume that a tree with minimal ABC index contains a configuration with a $B_2$-branch and a $B_4$-branch
having a common parent vertex $u$.
Then, apply the transformation $\mathcal{T}_{b}$ depicted in Figure~\ref{fig-common_vertex_B2_B4}.
After this transformation the change of the ABC index of $G$ is
\beq \label{eq-lemma-B4-70}
ABC(G') - ABC(G) &= &-f(d(u), 5) + f(d(u), 4) -f(d(u), 3) + f(d(u), 4).
\eeq 
\begin{figure}[h!]
\begin{center}
%\vspace{-0.3cm}
\includegraphics[scale=0.750]{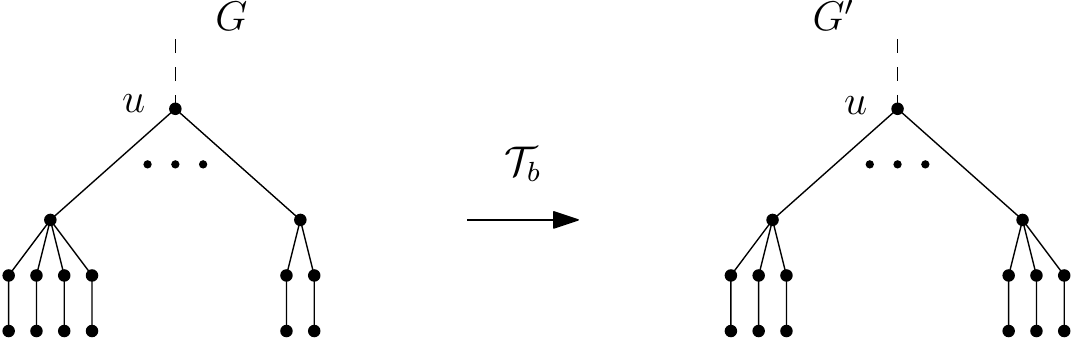}% [height=3.5cm,width=15cm]
\caption{Transforamation $\mathcal{T}_{b}$ from  Lemma~\ref{lemma-10}$(b)$.}
%\label{Unicyclic-Max-M3}
\label{fig-common_vertex_B2_B4}
%\vspace{-0.3cm}
\end{center}
\end{figure}
By Proposition~\ref{appendix-pro-030}, it follows that 
$-f(d(u), 5) + f(d(u), 4) < -f(d(u), 4) + f(d(u), 3) $.
From here, it follows that the difference $(\ref{eq-lemma-B4-70})$ is negative.
Hence, we obtain a contradiction to the initial assumption that $G$ is a tree with minimal ABC index.
\end{proof}

\smallskip
\noindent
Next, we present an upper bound on the number of $B_4$-branches that a graph with minimal ABC index can have.

\begin{te} \label{thm-20}
A  minimal-ABC tree does not contain more than four $B_4$-branches.
\end{te}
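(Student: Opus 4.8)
The plan is to mirror the structure of the proof of Theorem~\ref{te-no5branches-10}, but now for $B_4$-branches, performing a case analysis on how many $B_4$-branches the tree $G$ possesses and on the configuration of their roots and parents. Suppose for contradiction that a minimal-ABC tree $G$ contains five or more $B_4$-branches. Since $G$ is a greedy tree by Theorem~\ref{thm-DS}, I would select, using the breadth-first ordering, a controlled collection of $B_4$-branches whose roots have degree exactly $5$, and I would distinguish cases according to whether these roots share a common parent vertex or are distributed among several parents, assuming throughout (as in the earlier proof) that the parent degrees are ordered $d(u_1) \geq d(u_2) \geq \cdots$.

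The core of each case is a local transformation $\mathcal{T}$ that redistributes pendant paths $P_2$ among the selected branches and their parents, producing a tree $G'$ with strictly smaller ABC index; as before, $G'$ need not be greedy, but by Theorem~\ref{thm-DS} it can be replaced by a greedy tree with the same degree sequence and hence no larger ABC index, yielding the contradiction. For each transformation I would first record exactly which vertex degrees change (typically some roots dropping from $5$ to $4$, a parent vertex gaining degree, and one grandchild gaining degree when a fresh $B_k^*$-branch is assembled and attached), then write $\ABC(G') - \ABC(G)$ as a sum of terms of the form $-f(d(u), d(v)) + f(d(u)\pm 1, d(v)\mp 1)$. I would bound each term using the monotonicity Propositions~\ref{appendix-pro-010}--\ref{appendix-pro-050} from the appendix: the $-f(d(u),d(v))+f(d(u),d(v)-1)$ type terms decrease in the child degree and so are maximized at the smallest admissible value $d(v)=5$, while the parent-degree dependence is handled by showing the relevant aggregate either increases or decreases monotonically in $d(u)$, so that the supremum is attained at the boundary $d(u)=5$ (or $d(u)=6$) or in the limit $d(u)\to\infty$. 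Finally I would evaluate the worst case numerically and verify it is strictly negative. Here Lemma~\ref{lemma-10} is essential: it rules out $B_1$- and $B_2$-branches sharing a parent with a $B_4$-branch, which pins down the degrees of the sibling branches (forcing them to be $B_3$- or other $B_4$-branches) and thereby makes the degree constants in the bounds concrete enough to compute.

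The reason the threshold is \emph{four} rather than some smaller number is that the per-transformation ABC decrease shrinks as the branch index drops from $5$ to $4$; a single $B_4$-branch, or even a few, may sit in a configuration where no local swap strictly decreases the index, whereas with five branches available one can always assemble a favorable $B_3^*$-attachment. Concretely, I expect the argument to hinge on having enough branches to build a $B_3^*$-branch out of three cut pendant paths while still leaving the donor roots at the legal degree, paralleling Subcases~$2.1.2$ and $3.2$ above; the count five is precisely what guarantees this surplus in every parent-distribution scenario.

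The main obstacle will be the boundary case analysis where only four or five branches are present and their parents have small degree: there the usual ``$d(u)\to\infty$'' limit is not the extremal configuration, so I must carefully identify the true maximum of each $g(d(u))$ by locating its critical point (as in the computation giving $d(u_2)=31.3997$ or $u_1<8.8$ above) and then compare the finite boundary value against the limiting value. Getting all these one-variable optimizations to yield a uniformly negative bound, while respecting the degree constraints imposed by Proposition~\ref{pro-05} and Lemma~\ref{lemma-10}, is the delicate part; the rest is bookkeeping of which edges are added or removed under each transformation.
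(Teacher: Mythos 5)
Your overall strategy does coincide with the paper's: argue by contradiction, pick the last five $B_4$-branches in the breadth-first order of the greedy tree, split into cases according to whether their roots share one parent or are spread over two, apply a local transformation that cuts pendant paths $P_2$ from the selected branches and reassembles the cut material into a new branch attached to a parent, bound the change of the ABC index with the appendix propositions and with Lemma~\ref{lemma-10} (which forces every sibling of a $B_4$-branch to be a $B_3$- or $B_4$-branch, so the worst-case sibling degree is $4$), and check numerically that the worst case is strictly negative. All of that matches Cases~1 and~2 of the paper's proof.

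There is, however, a genuine gap at the one point where the proof has real content: you never commit to a transformation whose ABC-change can actually be verified to be negative, and the one you sketch would fail. You propose to ``assemble a $B_3^*$-branch out of three cut pendant paths''; apart from the vertex count not balancing (three copies of $P_2$ give six vertices, the size of a $B_2^*$-branch), this is exactly the transformation of Subcase~1.1 of Theorem~\ref{te-no5branches-10}, whose negativity depends on the donor roots having degree at least $6$. For $B_4$-branches the roots have degree $5$, and by Proposition~\ref{appendix-pro-010} each term $-f(d(u),5)+f(d(u)+1,4)$ exceeds the corresponding $-f(d(u),6)+f(d(u)+1,5)$; redoing the limit in (\ref{expression-70}) with root degree $5$ gives approximately $+0.029$, so the three-path transformation \emph{increases} the bound past zero and proves nothing. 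The paper's proof instead cuts one $P_2$ from \emph{each} of the five selected branches (turning every $B_4$ into a $B_3$) and reassembles all ten removed vertices into a single $B_3^{**}$-branch rooted at a degree-$4$ vertex attached to the parent of smaller degree; this is precisely why five branches are needed, and even then the final margin is only about $-0.0048$, an order of magnitude tighter than in the $B_{\geq 5}$ case. Without identifying this reassembly target, the degree bookkeeping and the one-variable optimizations you describe cannot be set up, so the plan as written does not yet constitute a proof.
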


\begin{proof}
Assume that a tree $G$ with minimal ABC index has more than four  $B_4$-branches.
Consider the last five $B_4$-branches with respect to the breadth-first search of $G$.
Denote the corresponding root vertices of those branches by $v_1$, $v_2$, $v_3$, $v_4$ and $v_5$.
Assume that $d(v_1) \geq d(v_2) \geq d(v_3) \geq d(v_4) \geq d(v_5)$. 
Note that $v_1$, $v_2$, $v_3$, $v_4$ and $v_5$ can
have a common parent vertex, or can have two different parent vertices.
With respect to that, we consider two cases.

\bigskip
\noindent
{\bf Case~$1.$} $v_1$, $v_2$, $v_3$, $v_4$ and $v_5$  have two different parent vertices.

\smallskip
\noindent
Denote these vertices by $u_1$ and $u_2$.
Assume that $d(u_1) \geq d(u_2)$ and that $u_1$ is a parent vertex of $x$ vertices 
among  $v_1$, $v_2$, $v_3$, $v_4$ and $v_5$, where $1 \leq x < 5$.
Let $u$ be a parent vertex of $u_1$, and $y_i$, $i=1, \dots, d(u_2)-(5-x)-1$ the children vertices of $u_2$ that are not
in $\{ v_1, v_2, v_3, v_4, v_5 \}$.
Apply the following transformation $\mathcal{T}_{1}$: from each of $v_1$, $v_2$, $v_3$, $v_4$ and $v_5$
cut an adjacent pendant path $P_2$, form a $B_3^{**}$-branch and attached it to $u_2$.
An illustration, when $x=4$, is given in Figure~\ref{thm2case1}.
\begin{figure}[h]
\begin{center}
%\vspace{-0.3cm}
\includegraphics[scale=0.750]{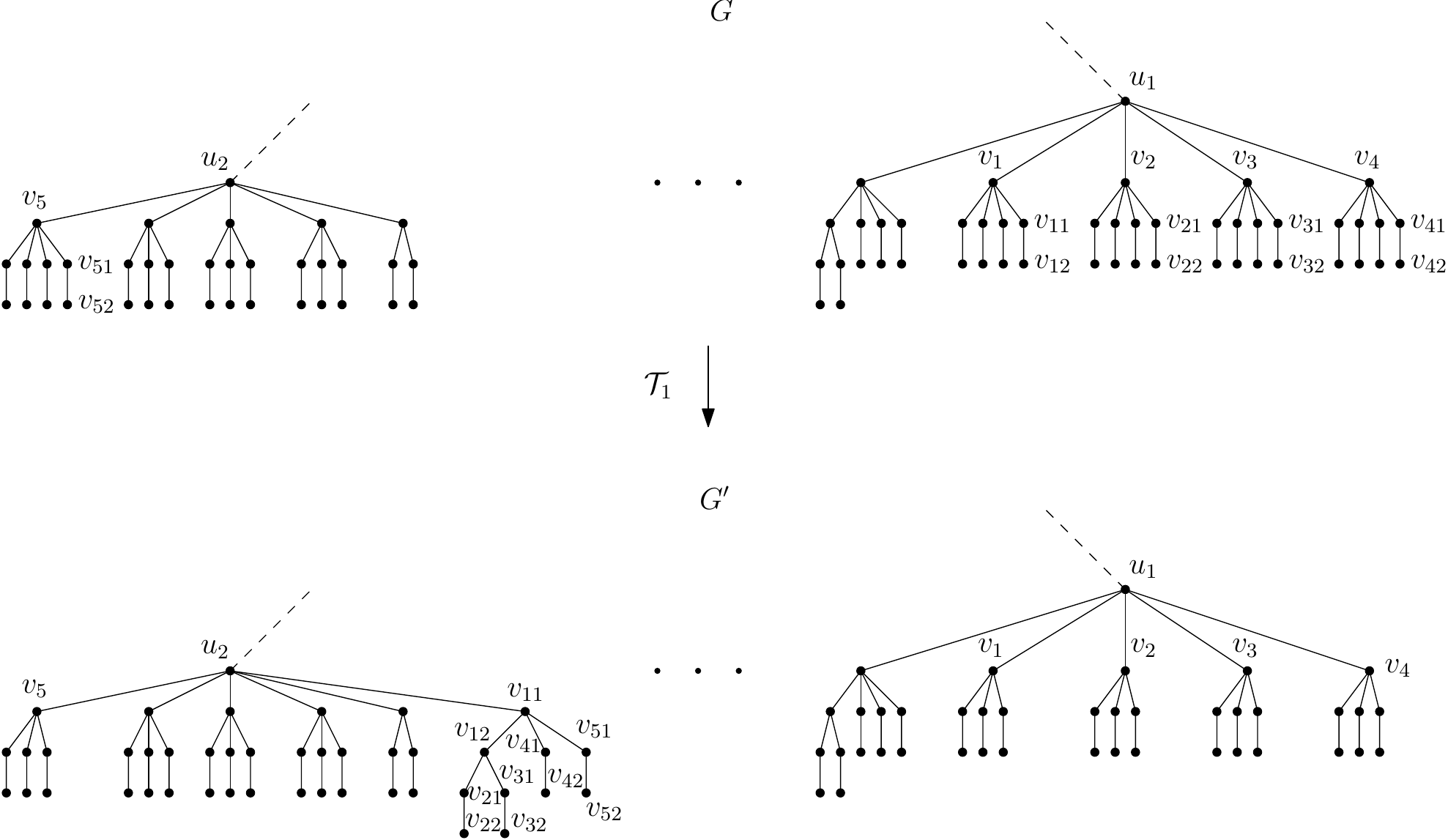}% [height=3.5cm,width=15cm]
\caption{Transformation $\mathcal{T}_1$ from Case~$1$.}
%\label{Unicyclic-Max-M3}
\label{thm2case1}
%\vspace{-0.3cm}
\end{center}
\end{figure}
After applying $\mathcal{T}_{1}$ the degrees of $v_1$, $v_2$, $v_3$, $v_4$ and $v_5$ decrease by one, while
the degree of $u_2$ increases by one. The degrees of the rest of the vertices of $G$  remain unchanged.
After this transformation, the change of the ABC index between $u_1$ and $x$ of its children vertices, that are
roots of $B_4$-branches, is $x(-f(d(u_1), 5) + f(d(u_1), 4))$.
The change of the ABC index between $u_2$ and $5-x$ of its children, that are
roots of $B_4$-branches, is $ (5-x)(-f(d(u_2), 5) + f(d(u_2)+1, 4))$.
$\sum_{i=1}^{d(u_2)-(5-x)-1}(-f(d(u_2), y_i) + f(d(u_2)+1, d(y_i))) $ is the change of the ABC index
caused by the rest of the children vertices of $u_2$ and $u_2$ itself, while
$-f(d(u_2), d(u)) + f(d(u_2)+1, d(u))$  is the change of the ABC index
caused by $u_2$ and its parent vertex.
Finally,  the change caused by attaching the $B_3^{**}$-branch to $u_2$ is
$-f(d(v_1),2) +  f(d(u_2)+1,4) - f(2,1)  + f(4,3)$.
Thus, the total change of the ABC index after applying $\mathcal{T}_{1}$ is
\beq \label{thm2-10-1}
ABC(G') - ABC(G) &=&
x(-f(d(u_1), 5) + f(d(u_1), 4)) \nonumber\\
&&+ (5-x)(-f(d(u_2), 5) + f(d(u_2)+1, 4)) \nonumber\\
 &&+\sum_{i=1}^{d(u_2)-(5-x)-1}(-f(d(u_2), d(y_i)) + f(d(u_2)+1, d(y_i))) \nonumber\\
 &&-f(d(u_2), d(u)) + f(d(u_2)+1, d(u))  \nonumber\\
 &&-f(d(v_1),2) +  f(d(u_2)+1,4) - f(2,1) + f(4,3)\nonumber\\
 &=&g(d(u_1), d(u_2), x). \nonumber
\eeq 
%\smallskip
%\noindent
%{\bf Subcase~$1.1.$} $u_2=5$, $x=1$. 
%
%\smallskip
%\noindent
%Then, the difference~(\ref{thm2-10-1}) is
%\beq \label{thm2-10-10}
%ABC(G') - ABC(G) &=&
%-f(d(u_1), 5) + f(d(u_1), 4) \nonumber\\
%&&+ 4(-f(5, 5) + f(6, 4)) \nonumber\\
%&&-f(d(v_1),2) - f(2,1) +  f(6,4) + f(4,3)\nonumber\\
% &&-f(5, u) + f(6, u).
%\eeq 
%
%By Proposition~\ref{appendix-pro-020}, $-f(d(u_1), 5) + f(d(u_1), 4)$ increases in $d(u_1)$,
%and by Proposition~\ref{appendix-pro-030} $-f(5, u) + f(6, u)$ decreases in $d(u)$. It holds that $d(u) \geq d(u_2) \geq 5$.
%Thus, we have that 
%\beq \label{thm2-10-10}
%ABC(G') - ABC(G) &<&
%\lim_{d(u_1) \to \infty}(-f(d(u_1), 5) + f(d(u_1), 4)) \nonumber\\
%&&+ 4(-f(5, 5) + f(6, 4)) \nonumber\\
%&&-f(d(v_1),2) - f(2,1) +  f(6,4) + f(4,3)\nonumber\\
% &&-f(5, 5) + f(6, 5).\nonumber\\
% &=&-0.109883.\nonumber
%\eeq 

%\smallskip
%\noindent
%{\bf Subcase~$1.2.$} $u_2=5$, with $d(x)=2, 3, 4$, and $u_2 \geq 5$, with any $d(x)$. 
%Acctually this is for all cases ---> Subase~$1.1.$ is now redundant.

\noindent
By Proposition~\ref{appendix-pro-030}, $-f(d(u_1), 5) + f(d(u_1), 4)$ increases in $d(u_1)$, 
therefore the function  $g(d(u_1), d(u_2), x)$ reaches its maximum when $d(u_1) \to \infty$.
By Proposition~\ref{appendix-pro-020}, $-f(d(u_2), d(u)) + f(d(u_2)+1, d(u))$ and 
$-f(d(u_2), y_i) + f(d(u_2)+1, d(y_i))$ decrease in $d(u)$ and $d(y_i)$, respectively. 
It holds that $d(u) \geq d(u_2) \geq 5$. Thus, $-f(d(u_2), d(u_2)) + f(d(u_2)+1, d(u))$ 
has its maximum for $d(u)=d(u_2)$. The function  $-f(d(u_2), d(u_2)) + f(d(u_2)+1, d(u_2))$
increases in $d(u_2)$ and has a maximum of $0$.
Since $u_2$ is a parent vertex of $B_4$-branch,  by Lemma~ \ref{lemma-10}, 
$u_2$ cannot be a parent vertex of $B_1$-branch or $B_2$-branch.
Thus, $g(d(u_1), d(u_2), x)$ is maximal for $d(y_i)=4$, $i=1,\dots, d(u_2)-(5-x)-1$.
Next, we show that  
\beq \label{thm2-10-10}
g_1(d(x),d(u_2)) &=&
(5-x)(-f(d(u_2), 5) + f(d(u_2)+1, 4)) \nonumber\\
 &&+(d(u_2)-(5-x)-1)(-f(d(u_2), 4) + f(d(u_2)+1, 4)) \nonumber \\
 &&-f(d(u_2), d(u_2)) + f(d(u_2)+1, d(u_2))  \nonumber\\
 && +  f(d(u_2)+1,4) \nonumber
\eeq 
increases in $d(u_2)$.
Indeed, it can be verified that $\partial g_1(d(x),d(u_2))/\partial{u_2} \neq 0$, for $u_2 \in [4, \infty)$, and
$\partial g_1(d(x),d(u_2))/\partial{u_2}$ is always positive.
%
%Indeed, the first derivative of $g_1(d(x),d(u_2))$ with respect to $d(u_2)$ is
%\beq \label{thm2-10-15}
 %\frac{\partial g_1(d(x),d(u_2))}{\partial{u_2}} &=& 
 %\frac{1}{4} \left(\frac{2 \sqrt{2} (u_2-2)}{u_2^2 \sqrt{u_2-1} }-2 \sqrt{\frac{u_2+2}{u_2}}-\frac{2}{(u_2+1)^{3/2} \sqrt{u_2+3}}+2 \sqrt{\frac{u_2+3}{u_2+1}}\right. \nonumber \\
%&&+\frac{2+4 u_2-4 u_2^2}{u_2^2 (u_2+1)^{3/2} \sqrt{\frac{2u_2-1}{u_2}}} 
 %+\frac{2 \left(-\frac{5}{(u_2+1)^{3/2}}+\frac{3 \sqrt{5}}{u_2^{3/2}}\right) (5-x)}{5 \sqrt{u_2+3}}. \nonumber \\
%&& \left.+\left(\frac{2}{u_2^{3/2} \sqrt{u_2+2}}-\frac{2}{(u_2+1)^{3/2} \sqrt{u_2+3} }\right) (u_2+x-6)\right),
%\eeq
%
%which is positive for $d(u_2) > 4$.
 Thus, we obtain
 \beq \label{thm2-10-20}
\ds ABC(G') - ABC(G) &<&
 \lim_{  \substack{
d(u_1) \to \infty \\
d(u_2) \to \infty
} } \left( x(-f(d(u_1), 5) + f(d(u_1), 4)) \right. \nonumber\\
&& \left.+ (5-x)(-f(d(u_2), 5) + f(d(u_2)+1, 4))  \right. \nonumber\\
&& \left.+(d(u_2)+x-6)(-f(d(u_2), 4) + f(d(u_2)+1, 4))  \right. \nonumber\\
&& \left. -f(d(u_2), d(u_2)) + f(d(u_2)+1, d(u_2))   \right) \nonumber\\
&& \left. -f(d(v_1),2) - f(2,1) +  f(d(u_2)+1,4) + f(4,3)  \right. \nonumber\\
 &=&-0.00478432,  \nonumber
\eeq
for $x=1,2, 3,4$.
Hence, we have shown that after applying the transformation $\mathcal{T}_{1}$, the ABC index strictly decreases.

\smallskip
\noindent
{\bf Case~$2.$} $v_1$, $v_2$, $v_3$, $v_4$ and $v_5$  have a common parent vertex $u_1$.

\smallskip
\noindent
Here, we apply a similar transformation $\mathcal{T}_{2}$ to the transformation $\mathcal{T}_{1}$ from the previous case: from each of $v_1$, $v_2$, $v_3$, $v_4$ and $v_5$
we cut an adjacent pendant path $P_2$, form a $B_3^{**}$-branch and attached it to $u_1$.
An illustration is given in Figure~\ref{thm2case2}.
\begin{figure}[h]
\begin{center}
%\vspace{-0.3cm}
\includegraphics[scale=0.750]{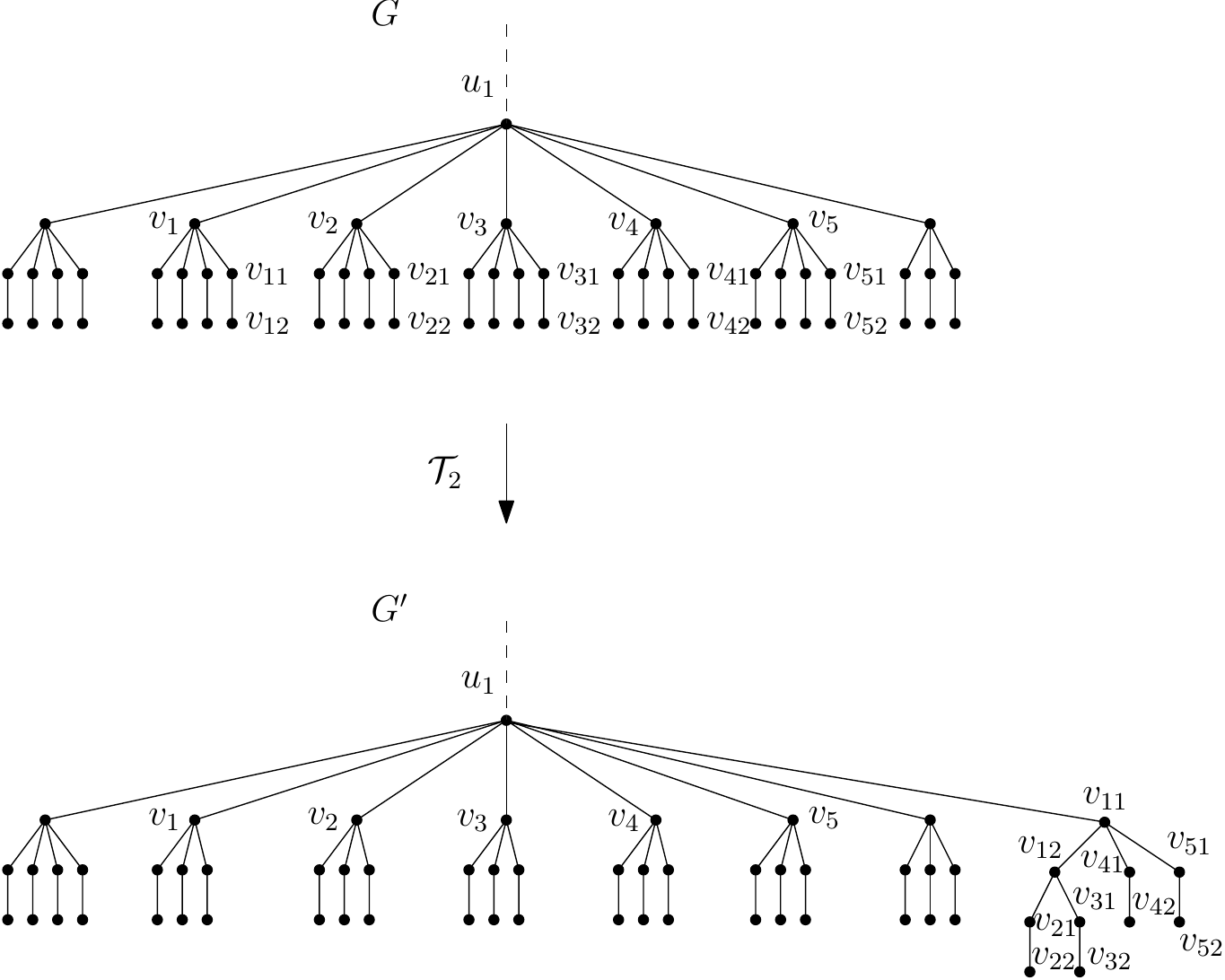}% [height=3.5cm,width=15cm]
\caption{Transformation $\mathcal{T}_2$ from Case~$2$.}
%\label{Unicyclic-Max-M3}
\label{thm2case2}
%\vspace{-0.3cm}
\end{center}
\end{figure}
After applying $\mathcal{T}_{2}$ the degrees of $v_1$, $v_2$, $v_3$, $v_4$ and $v_5$ decrease by one, while
the degree of $u_1$ increases by one. The degrees of the rest of the vertices of $G$  remain unchanged.
We distinguish two further cases with respect the $u_2$.

\smallskip
\noindent
{\bf Subcase~$2.1.$} $u_1$ is the root of $G$.

\smallskip
\noindent
In this case, after applying  $\mathcal{T}_{2}$, the change of the ABC index between $u_1$ and its children vertices, that are
roots of $B_4$-branches, is $5(-f(d(u_1), 5) + f(d(u_1), 4))$, while
the change of the ABC index
caused by the rest of the children vertices of $u_2$ and $u_2$ itself is
$\sum_{i=1}^{d(u_1)-5}(-f(d(u_1), y_i)$ $+ f(d(u_1)+1, d(y_i))) $.
The change caused by attaching the $B_3^{**}$-branch to $u_1$ is
$-f(d(v_1),2) +  f(d(u_1)+1,4) - f(2,1)  + f(4,3)$.
Thus, the total change of the ABC index after applying $\mathcal{T}_{2}$ is
\beq \label{thm2-10}
ABC(G') - ABC(G) &=&
5(-f(d(u_1), 5) + f(d(u_1)+1, 4)) \nonumber\\
&&+ \sum_{i=1}^{d(u_1)-5}(-f(d(u_1), d(y_i)) + f(d(u_1)+1, d(y_i))) \nonumber\\
&&-f(d(u_1),2) +  f(d(u_1) +1,4) - f(2,1) + f(4,3). 
\eeq 
By Proposition~\ref{appendix-pro-020}, the expression $-f(d(u_1), y) + f(d(u_1)+1, y)$ is negative for $d(u_1) > 2$, and
$-f(d(u_1), y_i)$ $+$ $f(d(u_1)+1, d(y_i))$ decrease in  $d(y_i)$. 
Since $u_1$ is a parent vertex of $B_4$-branch,  by Lemma~ \ref{lemma-10}, 
$u_1$ cannot be a parent vertex of $B_1$-branch or $B_2$-branch.
Thus, the change of the ABC index is maximal for $d(y_i)=4$, $i=1,\dots, d(u_1)-5$.
Thus, with a further rearrangement of (\ref{thm2-10}), we obtain
\beq \label{thm2-20}
ABC(G') - ABC(G) & < &
-5f(d(u_1), 5) + 6f(d(u_1)+1, 4) -2f(2,1)  + f(4,3). 
\eeq 
We have that 
\beq \label{eq-thm2-30}
\ds \frac{d (-5f(d(u_1), 5) + 6f(d(u_1)+1, 4))}{d{u_1}} &=&
 \frac{3 \sqrt{5}}{2 u_1^2 \sqrt{\frac{3+u_1}{u_1}}}-\frac{3}{(1+u_1)^2 \sqrt{\frac{3+u_1}{1+u_1}}}. \nonumber
\eeq
A straightforward  verification shows that
\beq \label{eq-thm2-40}
 \frac{3 \sqrt{5}}{2 u_1^2 \sqrt{\frac{3+u_1}{u_1}}}-\frac{3}{(1+u_1)^2 \sqrt{\frac{3+u_1}{1+u_1}}} >0, \nonumber
\eeq
%After some algebraic transformation, we obtain that (\ref{eq-thm2-40}) 
%is equivalent to
%$
%3u_1^3+15u_1^2+15u_1+5 >0  
%$ 
for every positive $u_1$.
It follows that $-5f(d(u_1), 5) + 6f(d(u_1)+1, 4)$ increases with $u_1$.
Thus, we obtain
\beq \label{thm2-60}
ABC(G') - ABC(G)  &<&
\lim_{u_1 \to \infty}-5f(d(u_1), 5) + 6f(d(u_1)+1, 4) - 2f(2,1)  + f(4,3) \nonumber \\
&=& -0.00478432. \nonumber
\eeq

\smallskip
\noindent
{\bf Subcase~$2.2.$} $u_1$ is not the root of $G$.

\smallskip
\noindent
Then,  $d(u_1) \geq 6$, and the change of the ABC index is
\beq \label{thm2-30}
ABC(G') - ABC(G) &=&
 -f(d(u_1), d(x)) + f(d(u_1)+1, d(x)) \nonumber\\
&& 5(-f(d(u_1), 5) + f(d(u_1)+1, 4)) \nonumber\\
&&+  \sum_{i=1}^{d(u_1)-6}(-f(d(u_1), d(y_i)) + f(d(u_1)+1, d(y_i)))\nonumber\\
&&-f(d(u_1),2) - f(2,1) +  f(d(u_1)+1,4) + f(4,3), \nonumber
\eeq 
where $x$ is a parent vertex of $u_1$.
%
%By Proposition~\ref{appendix-pro-020} the expression  $-f(d(u_1), d(x)) + f(d(u_1)+1, d(x))$ is negative and  the expressions 
 %$-f(d(u_1), d(x)) + f(d(u_1)+1, d(x))$ and $-f(d(u_1), y) + f(d(u_1)+1, y)$ are negative for $d(u_1)>2$.
%From that, together with $f(x,2)= f(2,1)$, for $x \in \mathbb{N^+}$, and a rearrangement of (\ref{thm2-30}), we obtain
%
Applying the same arguments as in Subcase~$2.1.$, we obtain
\beq \label{thm2-40}
ABC(G') - ABC(G) & < &
-5f(d(u_1), 5) + 6f(d(u_1)+1, 4) - 2f(2,1)  + f(4,3), \nonumber
\eeq 
which is identical with (\ref{thm2-20}) from Subcase~$2.1.$ Therefore, the
change of the ABC index after applying the transformation $\mathcal{T}_2$ is negative. 

Applying repeatedly above considered transformations $\mathcal{T}_1$ and $\mathcal{T}_2$, one can obtain a tree with at most
$4$ $B_4$-branches, that has smaller ABC index than the assumed minimal-ABC tree.
\end{proof}

%
% -------------------------------------------------------------------------
% ----------------------           bibliography       ---------------------
% -------------------------------------------------------------------------
%

%
\section[Appendix]{Appendix} \label{general}

Here we  present a collection of auxiliary results that were used in the proofs in the main text.
In the next propositions the function $f(x,y)$ is defined as in (\ref{eqn:000}).

\begin{pro}  \label{appendix-pro-020}
Let $g(x,y)=-f(x,y)+f(x+1,y)$, with real numbers $x, y \geq 2$.
Then, $g(x,y)$ is non-positive (strictly negative for $y > 2$) and increases in $x$ and decreases in $y$.
\end{pro}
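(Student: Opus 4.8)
The plan is to reduce the whole statement to the single auxiliary function
\[
h(x,y) := f(x,y)^2 = \frac{x+y-2}{xy} = \frac1x + \frac1y - \frac{2}{xy},
\]
whose partial derivatives are rational and easy to sign, and then to transfer the sign information back to $f=\sqrt h$ and to $g(x,y)=f(x+1,y)-f(x,y)$. Throughout, the region of interest is $x,y\ge 2$, where $h>0$.

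First I would settle the non-positivity. A direct computation gives $h_x=\frac{2-y}{x^2y}$, which is $\le 0$ for $y\ge 2$ and strictly negative for $y>2$. Since $f=\sqrt h$ and $h>0$, this shows $f$ is non-increasing in $x$ (strictly decreasing for $y>2$), whence $g(x,y)=f(x+1,y)-f(x,y)\le 0$, with strict inequality for $y>2$. (At the boundary $y=2$ one has $f(x,2)=\sqrt{1/2}$, constant in $x$, so $g(\cdot,2)\equiv 0$; this is the degenerate case in which the monotonicity in $x$ below holds only weakly.)

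For the two monotonicity claims I would use the integral representation $g(x,y)=\int_0^1 f_x(x+s,y)\,ds$, so that $\partial_x g=\int_0^1 f_{xx}(x+s,y)\,ds$ and $\partial_y g=\int_0^1 f_{xy}(x+s,y)\,ds$; it then suffices to sign $f_{xx}$ and $f_{xy}$. Writing $f=h^{1/2}$ yields $f_{xx}=\tfrac12 h^{-3/2}\bigl(h\,h_{xx}-\tfrac12 h_x^2\bigr)$ and $f_{xy}=\tfrac12 h^{-3/2}\bigl(h\,h_{xy}-\tfrac12 h_x h_y\bigr)$, so the signs are governed entirely by the two bracketed expressions. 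Using $h_{xx}=\frac{2(y-2)}{x^3y}$ one finds, after clearing denominators,
\[
h\,h_{xx}-\tfrac12 h_x^2=\frac{y-2}{x^4y^2}\Bigl(2x+\tfrac32(y-2)\Bigr)\ge 0,
\]
strictly for $y>2$; hence $f$ is convex in $x$ and $\partial_x g\ge 0$ (strictly for $y>2$), i.e.\ $g$ increases in $x$. Similarly, with $h_y=\frac{2-x}{xy^2}$ and $h_{xy}=-\frac{2}{x^2y^2}$,
\[
h\,h_{xy}-\tfrac12 h_x h_y=-\frac{1}{x^3y^3}\Bigl(2(x+y-2)+\tfrac12(x-2)(y-2)\Bigr)<0
\]
for $x,y\ge 2$, since both summands in the parenthesis are non-negative and the first is at least $4$; hence $f_{xy}<0$ and $\partial_y g<0$, i.e.\ $g$ decreases in $y$.

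The only genuine work lies in the two displayed identities; everything else is bookkeeping. The main obstacle is organizing the second-order computations so that the common factors fall out cleanly — the factor $(y-2)$ in the $f_{xx}$ identity (which simultaneously delivers the correct sign and the strictness for $y>2$) and the manifestly positive combination $2(x+y-2)+\tfrac12(x-2)(y-2)$ in the $f_{xy}$ identity — while keeping track of the degenerate boundary $y=2$, on which $g$ vanishes identically and the monotonicity in $x$ is only weak.
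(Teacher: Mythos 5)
Your proof is correct, but it takes a genuinely different route from the paper's. The paper works directly with $g$: it proves non-positivity by squaring the difference of the two radicals (reducing to $2-y\le 0$), and it signs the first partials $\partial g/\partial x$ and $\partial g/\partial y$ by clearing radicals to reach polynomial inequalities such as $(1+x)^3(x+y-1)>x^3(x+y-2)$. You instead pass to $h=f^2$, a rational function, sign $h_x$ to get non-positivity of $g$, and then establish the second-order facts $f_{xx}\ge 0$ (convexity in $x$) and $f_{xy}<0$, transferring them to the finite difference $g$ via $g(x,y)=\int_0^1 f_x(x+s,y)\,ds$. Your computations check out: $h_x=\frac{2-y}{x^2y}$, and the bracketed quantities $h\,h_{xx}-\tfrac12 h_x^2=\frac{y-2}{x^4y^2}\bigl(2x+\tfrac32(y-2)\bigr)$ and $h\,h_{xy}-\tfrac12 h_xh_y=-\frac{1}{x^3y^3}\bigl(2(x+y-2)+\tfrac12(x-2)(y-2)\bigr)$ are correct and have the stated signs on $x,y\ge 2$. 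What your approach buys: the factor $(y-2)$ falls out naturally, so you handle the degenerate line $y=2$ (where $f(x,2)\equiv\sqrt{1/2}$ and $g\equiv 0$, so the monotonicity in $x$ is only weak) explicitly --- a point the paper's argument glosses over, since its strict inequality $\partial g/\partial x>0$ actually fails at $y=2$; moreover, the convexity and mixed-partial statements immediately give monotonicity of $-f(x,y)+f(x+\Delta x,y)$ for arbitrary $\Delta x>0$, which the paper has to assemble separately by telescoping in Proposition~\ref{appendix-pro-010}. What the paper's approach buys is that it stays at the level of first derivatives and avoids the second-order bookkeeping.
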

\begin{proof}
First, we show that
\beq \label{eq-appendix-45}
-\sqrt{\frac{x+y-2}{x y}} + \sqrt{\frac{x+y-1}{(x+1) y}} \leq 0. 
\eeq
Indeed, after squaring (\ref{eq-appendix-45}) and further simplification, we obtain
$
2 - y \leq 0.
$
The equality holds for $y=2$.

\noindent
The first partial derivative of $g(x,y)$ with respect to $x$ is
\beq \label{eq-appendix-50}
\ds \frac{\partial g(x,y)}{\partial{x}} &=& 
\frac{1}{2 y}   \left( \frac{-2+y}{x^2 \sqrt{\frac{-2+x+y}{x y}}}+\frac{2-y}{(1+x)^2 \sqrt{\frac{-1+x+y}{(1+x) y}}} \right). \nonumber
\eeq
%
%Next, we show that
Applying simple algebraic transformations, one can transform
\beq \label{eq-appendix-60}
\ds \frac{-2+y}{x^2 \sqrt{\frac{-2+x+y}{x y}}} + \frac{2-y}{(1+x)^2 \sqrt{\frac{-1+x+y}{(1+x) y}}} > 0  \nonumber
\eeq
%After simple algebraic transformation of (\ref{eq-appendix-60}) and bearing in the mind that $y \geq 2$, we obtain
into
\beq \label{eq-appendix-70}
\ds (1+x)^3(x+y-1)>x^3(x+y-2),  \nonumber
\eeq
which holds for $x, y \geq 2$. Therefore, ${\partial g(x,y)}/{\partial x} > 0$, from which it follows that $g(x,y)$ increases in $x$.

\noindent
The first partial derivative of $g(x,y)$ with respect to $y$ is
\beq \label{eq-appendix-80}
\ds \frac{\partial g(x,y)}{\partial{y}} &=& 
\frac{1}{2 y^2}   \left( \frac{-2+x}{x \sqrt{\frac{-2+x+y}{x y}}}+\frac{1-x}{(1+x) \sqrt{\frac{-1+x+y}{(1+x) y}}} \right). \nonumber 
\eeq
The first partial derivative $\partial g(x,y)/ \partial{y}$ is negative if 
\beq \label{eq-appendix-90}
\frac{-2+x}{x \sqrt{\frac{-2+x+y}{x y}}}+\frac{1-x}{(1+x) \sqrt{\frac{-1+x+y}{(1+x) y}}} <0.
\eeq
After squaring  and algebraic rearranging of (\ref{eq-appendix-90}), we obtain
\beq \label{eq-appendix-100}
 (x+1)(x-2)^2(x+y-1)-x(x-1)^2(x+y-2) <0,  \nonumber 
\eeq
which is fulfilled for $x, y \geq 2$. Therefore, $\partial g(x,y)/ \partial{y}$ is negative, which implies that $g(x,y)$ decreases in $y$.
\end{proof}

\begin{pro}  \label{appendix-pro-030}
Let $g(x,y)=-f(x,y)+f(x,y-1)$, with real numbers $x, y \geq 2$.
Then, $g(x,y)$ is non-negative and increases in $x$ and decreases in $y$.
\end{pro}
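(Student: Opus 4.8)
The plan is to write $g(x,y)=-\sqrt{\tfrac{x+y-2}{xy}}+\sqrt{\tfrac{x+y-3}{x(y-1)}}$ explicitly and treat the three assertions separately, reducing each to an elementary polynomial inequality. For non-negativity I would first establish $f(x,y-1)\ge f(x,y)$; since both radicands are positive for $x,y\ge 2$, it suffices to compare the squares, and after clearing the common positive factor $x$ the inequality $\tfrac{x+y-3}{y-1}\ge\tfrac{x+y-2}{y}$ becomes $y(x+y-3)\ge (y-1)(x+y-2)$, whose two sides differ by exactly $x-2\ge 0$. This yields $g\ge 0$ with equality precisely on the line $x=2$ (indeed $f(2,y)=1/\sqrt2$ is constant, so $g(2,y)\equiv 0$).

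For the monotonicity in $x$ I would differentiate, using $\partial f/\partial x=\tfrac{2-y}{2x^2y}\big/\sqrt{\tfrac{x+y-2}{xy}}$, the same expression already appearing in the proof of Proposition~\ref{appendix-pro-020}. Substituting $y\mapsto y-1$ in the second summand and factoring out the positive quantity $\tfrac{\sqrt{x}}{2x^2}$, the claim $\partial g/\partial x>0$ reduces to
\[
\frac{y-2}{\sqrt{y}\,\sqrt{x+y-2}}+\frac{3-y}{\sqrt{y-1}\,\sqrt{x+y-3}}>0 .
\]
When $2\le y\le 3$ both summands are non-negative and not simultaneously zero, so the inequality is immediate; when $y>3$ it rearranges to a comparison of two positive numbers, and squaring turns it into $\phi(y)>\phi(y-1)$ for $\phi(t):=\tfrac{(t-2)^2}{t(x+t-2)}$. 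I would close this by checking that $\phi$ is increasing, since a short computation gives $\phi'(t)$ a numerator of the form $(t-2)\bigl((x+2)t+2x-4\bigr)$, which is positive for $t>2$, $x\ge 2$.

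The monotonicity in $y$ is handled analogously, exploiting the symmetry $f(x,y)=f(y,x)$, which gives $\partial f/\partial y=\tfrac{2-x}{2xy^2}\big/\sqrt{\tfrac{x+y-2}{xy}}$. Here the factor $2-x\le 0$ is pulled out in front, so $\partial g/\partial y<0$ becomes equivalent (for $x>2$) to the positivity of the remaining bracket; clearing denominators and squaring reduces this to $\psi(y)>\psi(y-1)$ for $\psi(t):=t^3(x+t-2)=t^4+(x-2)t^3$, and since $\psi'(t)=t^2(4t+3x-6)>0$ for $t\ge 1$, $x\ge 2$, the function $\psi$ is increasing and the claim follows. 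I expect the main obstacle to be bookkeeping rather than depth: one must track the signs of the linear coefficients $2-y$, $3-y$ and $2-x$ so as to split into the correct ranges and only square genuinely positive expressions, and one must record that ``decreases in $y$'' is strict only for $x>2$, with $g\equiv 0$ on the boundary line $x=2$.
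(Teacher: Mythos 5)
Your argument is correct, but it takes a genuinely different and much longer route than the paper's. The paper proves this proposition in three lines: since $f$ is symmetric, $-g(x,y)=f(x,y)-f(x,y-1)=-f(y-1,x)+f(y,x)$ is exactly the function of Proposition~\ref{appendix-pro-020} evaluated at $(y-1,x)$, so non-negativity and both monotonicity claims follow at once from that proposition with the roles of the two variables interchanged. You instead prove everything from scratch: an explicit cross-multiplication for $f(x,y-1)\ge f(x,y)$ (the two sides differing by exactly $x-2$), and direct differentiation reducing $\partial g/\partial x>0$ and $\partial g/\partial y<0$ to the monotonicity of $\phi(t)=(t-2)^2/(t(x+t-2))$ and $\psi(t)=t^3(x+t-2)$; I checked your factorizations (the numerator $(t-2)((x+2)t+2x-4)$ of $\phi'$, and $\psi'(t)=t^2(4t+3x-6)$) and the sign and case analysis, and they are sound. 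What the symmetry reduction buys is brevity and no duplicated computation; what your computation buys is self-containedness, the exact equality locus $g\equiv 0$ on the line $x=2$ (so that ``decreases in $y$'' is only non-strict there, a point the paper's statement glosses over), and it sidesteps a small domain quibble in the paper's reduction, which invokes Proposition~\ref{appendix-pro-020} at first argument $y-1$, equal to $1$ when $y=2$ and hence outside that proposition's stated hypothesis $x,y\ge 2$.
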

\begin{proof}
Since $f(x,y)$ is symmetric function, it holds that $-g(x,y)=f(x,y)-f(x,y-1)=-f(y-1,x)+f(y,x)$.
By Proposition~\ref{appendix-pro-020}, $-g(x,y)$ is non-positive and increases in $y$, and decreases in $x$.
Thus, it follows that $g(x,y)$ is non-negative and increases in $x$ and decreases in $y$.
\end{proof}

\begin{pro}  \label{appendix-pro-010}
Let $g(x,y)=-f(x,y)+f(x+\Delta x,y-\Delta y)$, with  real numbers $x, y \geq 2$,  $\Delta x \geq 0$,  $0 \leq \Delta y < y$.
Then, $g(x,y)$ increases in $x$ and decreases in $y$.
%Show when $g(x,y)$ is positive.
\end{pro}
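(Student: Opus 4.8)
The plan is to reduce both monotonicity claims to sign properties of the single first-order derivative of $f$, and then to compare values of that derivative along carefully chosen monotone paths. Writing $f(a,b)=\sqrt{(a+b-2)/(ab)}$, a direct differentiation gives
\[
F(a,b):=\frac{\partial f}{\partial a}(a,b)=\frac{2-b}{2\,a^{3/2}\sqrt{b\,(a+b-2)}},
\]
and, since $f$ is symmetric, $\partial f/\partial b\,(a,b)=F(b,a)$. Because $g(x,y)=f(x+\Delta x,y-\Delta y)-f(x,y)$ with $\Delta x,\Delta y$ fixed, the chain rule yields $\partial g/\partial x=F(x+\Delta x,y-\Delta y)-F(x,y)$ and $\partial g/\partial y=F(y-\Delta y,x+\Delta x)-F(y,x)$. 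Hence everything comes down to comparing values of the single function $F$ at two points.

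The two facts I would establish about $F$ are: (L1) $\partial F/\partial a$ has the sign of $b-2$, so $F$ is non-decreasing in its first slot whenever the second slot is $\geq 2$; and (L2) $F$ is strictly decreasing in its second slot on the whole relevant domain. Both follow by differentiating $F$ and reducing the sign question to a polynomial factor: for (L1) the factor is $4a+3b-6$, and for (L2) it is $ab+2a+2b-4$, each of which is manifestly positive once one slot is $\geq 2$ and both are positive (this also guarantees $a+b>2$, so $F$ is well defined along the paths). These are the routine but slightly tedious computations that form the technical core.

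With (L1) and (L2) in hand I would finish by a two-step monotone-path argument. For the $x$-monotonicity I move from $(x,y)$ to $(x+\Delta x,y-\Delta y)$ by first increasing the first slot while holding the second slot at $y\geq 2$ (this raises $F$ by (L1)), and then decreasing the second slot (this raises $F$ by (L2)); both steps increase $F$, so $\partial g/\partial x\geq 0$. For the $y$-monotonicity I use the symmetric identity $\partial g/\partial y=F(y-\Delta y,x+\Delta x)-F(y,x)$ and move from $(y,x)$ to $(y-\Delta y,x+\Delta x)$, first decreasing the first slot with the second slot pinned at $x\geq 2$ (this lowers $F$ by (L1)), then increasing the second slot (this lowers $F$ by (L2)); hence $\partial g/\partial y\leq 0$.

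The \emph{main obstacle}, and the reason the path must be traversed in exactly this order, is that the sign of $\partial F/\partial a$ flips precisely at second-slot $=2$, while along the path the moving coordinate may dip below $2$ (recall only $y-\Delta y>0$ is guaranteed). Routing each path so that the first slot is varied only while the second slot is held at a value $\geq 2$ (namely $y$ in the first case and $x$ in the second) keeps every step inside the region where the required direction of (L1) is valid; (L2) is invoked only in the form ``$F$ decreases in its second slot,'' which holds throughout. The one place where genuine care is needed is checking that the domain conditions and the two polynomial factors stay positive at every point of both paths, given the admissible ranges $y-\Delta y>0$ and $x+\Delta x\geq 2$.
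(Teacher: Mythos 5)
Your proof is correct, but it takes a genuinely different route from the paper's. The paper never differentiates $g$ at all: it decomposes $g(x,y)$ as a telescoping sum of unit-step differences, $g_1(x,y)+g_1(x+1,y)+\dots+g_1(x+\Delta x-1,y)+g_2(x+\Delta x,y)+\dots+g_2(x+\Delta x,y-\Delta y+1)$ with $g_1(x,y)=-f(x,y)+f(x+1,y)$ and $g_2(x,y)=-f(x,y)+f(x,y-1)$, and then simply cites Propositions~\ref{appendix-pro-020} and~\ref{appendix-pro-030}, which assert that each summand increases in $x$ and decreases in $y$. You instead reduce both monotonicity claims to comparisons of the single partial derivative $F=\partial f/\partial a$ at two points and traverse a monotone path; your formulas check out ($F(a,b)=(2-b)/(2a^{3/2}\sqrt{b(a+b-2)})$, with $\partial F/\partial a$ carrying the sign of $(b-2)(4a+3b-6)$ and $\partial F/\partial b$ the sign of $-(ab+2a+2b-4)$), and your ordering of the path segments correctly confines every use of (L1) to points where the second slot is $\ge 2$, which is the genuine danger point since $y-\Delta y$ may fall below $2$. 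As for what each approach buys: the paper's argument is two lines long because its auxiliary propositions are needed elsewhere anyway, but the displayed telescoping only literally makes sense for integer $\Delta x,\Delta y$ (which is all the paper ever uses), whereas your derivative-plus-path argument establishes the statement for arbitrary real increments exactly as the proposition is phrased, and additionally gives strictness away from the boundary; the price is one further round of routine differentiation of $f$.
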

\begin{proof}
Let $g_1(x,y)=-f(x,y)+f(x+1,y)$ and $g_2(x,y)=-f(x,y)+f(x,y-1)$.
Then,
\beq \label{eq-appendix-105}
g(x,y )&=&g_1(x,y)+g_1(x+1,y) +\dots + g_1(x+\Delta x -1,y) \nonumber \\
&+& g_2(x+\Delta x ,y)  + g_2(x+\Delta x ,y-1) +\dots +g_2(x+\Delta x ,y-\Delta y +1). \nonumber
\eeq
By Propositons~\ref{appendix-pro-020} and \ref{appendix-pro-030}, all of the functions $g_1$ and $g_2$ in the above expression 
increase in $x$ and decrease in $y$.
Therefore, $g(x,y)$  also increases in $x$ and decreases in $y$.
\end{proof}

\begin{pro}  \label{appendix-pro-040}
Let $g(x,y)=-f(x,y)+f(x-1,y)$, with positive real numbers $x, y \geq 2$.
Then, $g(x,y)$ is non-negative and increases in $y$ and decreases in $x$.
\end{pro}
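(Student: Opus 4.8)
The plan is to obtain this proposition from Proposition~\ref{appendix-pro-030} by exactly the same symmetry trick that was used there to derive that proposition from Proposition~\ref{appendix-pro-020}, so that no new derivative computation is required.

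First I would exploit the symmetry $f(x,y)=f(y,x)$ to rewrite the target function. Writing $g(x,y)=-f(x,y)+f(x-1,y)=-f(y,x)+f(y,x-1)$, I observe that the right-hand side is precisely the function of Proposition~\ref{appendix-pro-030} with its two arguments swapped: if $\tilde g(a,b)=-f(a,b)+f(a,b-1)$ denotes the function from that proposition, then $g(x,y)=\tilde g(y,x)$. The important point is that the substitution $a=y$, $b=x$ keeps both arguments in the admissible range, since $x,y\ge 2$ are exactly the hypotheses of Proposition~\ref{appendix-pro-030}; in particular the boundary evaluation $f(y,x-1)$ at $x=2$ (that is, $f(y,1)$) is already covered there, because $\tilde g(a,b)$ involves $f(a,b-1)$ with $b\ge2$.

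Next I would simply read off the three conclusions from Proposition~\ref{appendix-pro-030}, which asserts that $\tilde g(a,b)$ is non-negative, increases in its first argument $a$, and decreases in its second argument $b$. Since $g(x,y)=\tilde g(y,x)$, non-negativity transfers verbatim; the monotone increase of $\tilde g$ in $a=y$ yields that $g$ increases in $y$; and the monotone decrease of $\tilde g$ in $b=x$ yields that $g$ decreases in $x$. This matches the claim exactly.

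I do not expect a genuine obstacle. The only point requiring a moment's care is the variable bookkeeping in the swap, namely ensuring that the \emph{first} and \emph{second} arguments of Proposition~\ref{appendix-pro-030} are mapped to $y$ and $x$ respectively, so that each monotonicity direction is assigned to the correct variable. A more direct route through Proposition~\ref{appendix-pro-020}, via $g(x,y)=-\bigl(-f(x-1,y)+f(x,y)\bigr)$ together with the shift $x\mapsto x-1$, is equally available, but it would push the first argument of Proposition~\ref{appendix-pro-020} down to $x-1=1$ at the endpoint $x=2$, slightly outside its stated domain; routing through Proposition~\ref{appendix-pro-030} sidesteps this and is therefore the cleaner choice.
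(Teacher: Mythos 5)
Your proposal is correct and is essentially identical to the paper's own proof: the paper likewise writes $g(x,y)=-f(y,x)+f(y,x-1)$ by symmetry of $f$ and then reads off all three conclusions from Proposition~\ref{appendix-pro-030} with the arguments swapped. Your extra remark about the domain bookkeeping is a sensible refinement but introduces nothing beyond the paper's argument.
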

\begin{proof}
It holds that $g(x,y)=-f(y, x)+f(y,x-1))$.
By Proposition~\ref{appendix-pro-030}, $g(x,y)$ is non-negative and increases in $y$ and decreases in $x$.
\end{proof}

\begin{pro}  \label{appendix-pro-050}
Let $g(x,k)=k\left(  -f(d(x), 6) + f(d(x)+1, 5) \right)  + f(d(x)+1, 3)$, with positive real numbers $x, k \geq 2$.
Then, $g(x,k)$ increases in $x$.
\end{pro}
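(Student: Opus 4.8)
The plan is to show $\partial g/\partial x>0$ on the whole range $x\ge 2$, $k\ge 2$. Write $A(x)=\frac{d}{dx}\bigl(-f(x,6)+f(x+1,5)\bigr)$ and $B(x)=\frac{d}{dx}f(x+1,3)$, so that $\partial g/\partial x=k\,A(x)+B(x)$. Proposition~\ref{appendix-pro-010}, applied with second argument $6$ and increments $\Delta x=\Delta y=1$, tells us that $-f(x,6)+f(x+1,5)$ increases in $x$, hence $A(x)>0$; since $f(x+1,3)$ decreases in $x$ we have $B(x)<0$. Because the coefficient $A(x)$ of $k$ is positive, $k\,A(x)+B(x)$ is increasing in $k$, so the worst case is $k=2$ and it suffices to prove $2A(x)+B(x)>0$.

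Next I would make the derivatives explicit. From $f(x,c)=\sqrt{(x+c-2)/(cx)}$ one gets $\frac{d}{dx}f(x,c)=-\frac{c-2}{2c\,x^{2}f(x,c)}$, and likewise for $f(x+1,c)$. Substituting the closed forms $f(x,6)=\sqrt{(x+4)/(6x)}$, $f(x+1,5)=\sqrt{(x+4)/(5(x+1))}$, $f(x+1,3)=\sqrt{(x+2)/(3(x+1))}$ and clearing the common positive factor $(x+1)^{3/2}\sqrt{x+4}$, the inequality $2A(x)+B(x)>0$ becomes
\[
\phi(x):=\frac{2\sqrt6}{3}\Bigl(\tfrac{x+1}{x}\Bigr)^{3/2}-\frac{\sqrt3}{6}\sqrt{\tfrac{x+4}{x+2}}>\frac{3}{\sqrt5}.
\]
The subtlety is that $\phi$ is a sum of a decreasing term and (because of the leading minus sign on the second radical) an increasing term, so a crude estimate that bounds the two radicals at separate endpoints is far too lossy: it yields only $\tfrac{2\sqrt6}{3}-\tfrac{\sqrt3}{6}\sqrt{3/2}\approx1.28<3/\sqrt5\approx1.34$. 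Instead I would prove that $\phi$ itself is monotone.

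Concretely, I would show $\phi'(x)<0$ on $[2,\infty)$; after differentiating and cross multiplying this reduces to $6\sqrt2\,\sqrt{x+1}\,(x+2)^{3/2}(x+4)^{1/2}>x^{5/2}$, which is comfortable since $(x+2)^{3/2}>x^{3/2}$, $\sqrt{(x+1)(x+4)}>x$ and $6\sqrt2>1$. Being decreasing, $\phi$ stays above its limit $\lim_{x\to\infty}\phi(x)=\tfrac{2\sqrt6}{3}-\tfrac{\sqrt3}{6}$, so the claim follows once $\tfrac{2\sqrt6}{3}-\tfrac{\sqrt3}{6}>\tfrac{3}{\sqrt5}$; multiplying by $30$ this is $20\sqrt6-5\sqrt3>18\sqrt5$, and squaring the two positive sides reduces it to $855>600\sqrt2$, which holds. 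The main obstacle is exactly this asymptotically tight comparison: the margin of $2A(x)+B(x)$ collapses like $0.0027/x^{2}$ as $x\to\infty$, so the argument must exploit the exact monotonicity of $\phi$ rather than any separable bound on its two radicals.
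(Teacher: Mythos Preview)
Your proof is correct and shares the paper's opening move: decompose $\partial g/\partial x$ into the positive piece $A(x)=\frac{d}{dx}\bigl(-f(x,6)+f(x+1,5)\bigr)$ and the negative piece $B(x)=\frac{d}{dx}f(x+1,3)$, and show the former dominates. From there the two arguments diverge. You first observe that, since $A(x)>0$, the worst case is $k=2$ --- a clean reduction the paper does not make --- and then, after clearing $(x+1)^{3/2}\sqrt{x+4}$, you prove the resulting one-variable function $\phi$ is strictly decreasing and check its limiting value $\tfrac{2\sqrt6}{3}-\tfrac{\sqrt3}{6}$ against $\tfrac{3}{\sqrt5}$, which boils down to $855>600\sqrt2$. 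The paper instead keeps $k$ general, weakens the radical $\sqrt{(x+4)/(x+1)}$ to $\sqrt{(x+2)/(x+1)}$ so that the two right-hand terms share a denominator, squares, and reduces everything to the polynomial inequality
\[
80k^{2}+280k^{2}x+360k^{2}x^{2}+(92k^{2}-24\sqrt{15}\,k-20)x^{3}+(13k^{2}-6\sqrt{15}\,k-5)x^{4}>0,
\]
which it asserts for $k\ge 2$. Your monotonicity argument is more conceptual and explains \emph{why} the inequality is tight (the margin is $O(x^{-2})$), while the paper's weakening-then-squaring route is more mechanical but lands on a purely algebraic check; both are asymptotically delicate, since your numerical margin $855-600\sqrt2\approx 6.5$ mirrors the fact that the paper's leading coefficient $13k^{2}-6\sqrt{15}\,k-5$ is only barely positive at $k=2$.
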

\begin{proof}
Consider $g(x,k)$ as sum of two functions $g_1(x,k)=k\left(  -f(d(x), 6) + f(d(x)+1, 5) \right)$
and $g_2(x)=f(d(x)+1, 3)$.
The first derivative of $g_1(x,k)$ with respect to $x$ is
\beq \label{eq-appendix-110}
\ds \frac{\partial g_1(x,k)}{\partial{x}} 
&=&  \frac{1}{60} k \left(\frac{20 \sqrt{6}}{x^2 \sqrt{\frac{4+x}{x}}}-\frac{18}{(1+x)^2 \sqrt{\frac{4+x}{5+5 x}}}\right). \nonumber 
\eeq
It is easy to verify that 
$
20 \sqrt{6}/(x^2 \sqrt{(4+x)/x}) > 18/((1+x)^2 \sqrt{(4+x)/(5+5 x}) \nonumber 
$
is positive for any positive $x$ and $k$, from which follows that  $\ds \partial g_1(x,k) / \partial{x}$, or 
that $g_1(x,k)$ is increasing in $x$.
On the other hand, the function $g_2(x)$ decreases in $x$, because
\beq \label{eq-appendix-120}
 \frac{d  g_2(x)}{d x} &=&  -\frac{1}{2 \sqrt{3} (1+x)^2 \sqrt{\frac{2+x}{1+x}}} < 0.\nonumber
\eeq
To prove the claim of the proposition we will show that for $k \geq 2$,
$g_1(x,k)$ increases faster  in $x$ than $g_2(x)$ decreases  in $x$, or,
\beq \label{eq-appendix-130}
   \frac{1}{60} k \left(\frac{20 \sqrt{6}}{x^2 \sqrt{\frac{4+x}{x}}}-\frac{18}{(1+x)^2 \sqrt{\frac{4+x}{5+5 x}}}\right) 
   -\frac{1}{2 \sqrt{3} (1+x)^2 \sqrt{\frac{2+x}{1+x}}} > 0.
\eeq
After rearrangement, we obtain that (\ref{eq-appendix-130}) is equivalent to
\beq \label{eq-appendix-140}
%   \frac{1}{60} k \frac{20 \sqrt{6}}{x^2 \sqrt{\frac{4+x}{x}}} &>&
%   \frac{1}{60} k \frac{18}{(1+x)^2 \sqrt{\frac{4+x}{5+5 x}}}  +\frac{1}{2 \sqrt{3} (1+x)^2 \sqrt{\frac{2+x}{1+x}}}, \quad {\text or}\nonumber \\
      \frac{ 10 \sqrt{6} k}{x^2 \sqrt{\frac{4+x}{x}}} &>&
     \frac{ 9 \sqrt{5} k}{(1+x)^2 \sqrt{\frac{4+x}{1+ x}}}  +\frac{10 \sqrt{3}}{ (1+x)^2 \sqrt{\frac{2+x}{1+x}}}.
\eeq
Since
\beq \label{eq-appendix-150}
%      \frac{ 10 \sqrt{6} k}{x^2 \sqrt{\frac{4+x}{x}}} &>&
      \frac{ 9 \sqrt{5} k}{(1+x)^2 \sqrt{\frac{2+x}{1+ x}}}  +\frac{10 \sqrt{3}}{ (1+x)^2 \sqrt{\frac{2+x}{1+x}}} >
     \frac{ 9 \sqrt{5} k}{(1+x)^2 \sqrt{\frac{4+x}{1+ x}}}  +\frac{10 \sqrt{3}}{ (1+x)^2 \sqrt{\frac{2+x}{1+x}}}, \nonumber
\eeq
to prove (\ref{eq-appendix-140}), it sufices to prove
\beq \label{eq-appendix-160}
      \frac{ 10 \sqrt{6} k}{x^2 \sqrt{\frac{4+x}{x}}} &>&
      \frac{ 9 \sqrt{5} k + 10 \sqrt{3}}{(1+x)^2 \sqrt{\frac{2+x}{1+ x}}}. 
\eeq
%\beq \label{eq-appendix-170}
%      \frac{ 600 k^2}{x^3 (4+x)} &>&
%      \frac{ 405 k^2 + 90\sqrt{15}k + 75}{(1+x)^3 (2+x)}. \nonumber
%\eeq

%\beq \label{eq-appendix-180}
%600 k^2 (1+x)^3 (2+x)-\left(75+90 \sqrt{15} k+405 k^2\right) x^3 (4+x) > 0, \nonumber
%\eeq
\noindent
Indeed, after squaring the both sides of (\ref{eq-appendix-160}) and performing a simple algebraic transformation, we obtain that (\ref{eq-appendix-160}) is equivalent to
\beq \label{eq-appendix-190}
15 \left(80 k^2+280 k^2 x+360 k^2 x^2-20 x^3-24 \sqrt{15} k x^3+92 k^2 x^3-5 x^4-6 \sqrt{15} k x^4+13 k^2 x^4\right)  > 0,  \nonumber
\eeq
which holds for $k \geq 2$ and $x \geq 0$.
\end{proof}

\end{document}